\renewcommand{\baselinestretch}{1.65}
\newtheorem{Theorem}{Theorem}
\newtheorem{Lemma}{Lemma}
\newtheorem{Assumption}{Assumption}
\newtheorem{Remark}{Remark}
    \newcommand{\qa}{{\bf a}}
    \newcommand{\qb}{{\bf b}}
    \newcommand{\qc}{{\bf c}}
    \newcommand{\qe}{{\bf e}}
    \newcommand{\qv}{{\bf v}}
    \newcommand{\qx}{{\bf x}}
    \newcommand{\qy}{{\bf y}}
    \newcommand{\qA}{{\bf A}}
    \newcommand{\qB}{{\bf B}}
    \newcommand{\qC}{{\bf C}}
    \newcommand{\qD}{{\bf D}}
    \newcommand{\qE}{{\bf E}}
    \newcommand{\qG}{{\bf G}}
    \newcommand{\qH}{{\bf H}}
    \newcommand{\qI}{{\bf I}}
    \newcommand{\qR}{{\bf R}}
    \newcommand{\qS}{{\bf S}}
    \newcommand{\qT}{{\bf T}}
    \newcommand{\qU}{{\bf U}}
    \newcommand{\qV}{{\bf V}}
    \newcommand{\qX}{{\bf X}}
    \newcommand{\qzero}{{\bf 0}}
    \def\ang#1{\mbox{$\langle #1 \rangle$}}
    \def\Complex{{\rm\rule[.23ex]{.03em}{1.1ex}\kern-.3em{C}}}
    \def\eqover#1{\mbox{$\buildrel #1 \over =$}}
    \def\leqover#1{\mbox{$\buildrel #1 \over \leq$}}
    \def\Imag#1{\mbox{${\Im}\left\{ #1 \right\}$}}
    \def\dist{\mbox{${\sf dist}(z, {\mathbb R}^+)$}}
    \def\largeN{\mbox{$\boldsymbol{\cal N}$}}
    \newcommand{\be}{\begin{equation}} \newcommand{\ee}{\end{equation}}
    \newcommand{\bea}{\begin{eqnarray}} \newcommand{\eea}{\end{eqnarray}}
    \newcommand{\benum}{\begin{enumerate}} \newcommand{\eenum}{\end{enumerate}}
    \newcommand{\bbR}{{\mathbb R}}
    \newcommand{\bbC}{{\mathbb C}}
    \newcommand{\bbS}{{\mathbb S}}
    \newcommand{\calV}{{\mathcal V}}
    \newcommand{\calX}{{\mathcal X}}
    \newcommand{\calqB}{\boldsymbol{\cal B}}
    \newcommand{\calqX}{\boldsymbol{\cal X}}
    \newcommand{\diag}{{\sf diag}}
    \newcommand{\tr}{{\sf tr}}
    \newcommand{\Ex}{{\sf E}}
    \newcommand{\rank}{{\sf rank}}
    \newcommand{\sfj}{{\sf j}}
    \newcommand{\sfF}{{\sf F}}
    \newcommand{\qDelta}{{\boldsymbol \Delta}}
    \newcommand{\qPsi}{{\boldsymbol \Psi}}
    \newcommand{\qPhi}{{\boldsymbol \Phi}}
    \newcommand{\qXi}{{\boldsymbol \Xi}}
    \newcommand{\qTheta}{{\boldsymbol \Theta}}
    \newcommand{\qSigma}{{\boldsymbol \Sigma}}
    \newcommand{\qGamma}{{\boldsymbol \Gamma}}
    \newcommand{\qUpsilon}{{\boldsymbol \Upsilon}}
    \newcommand{\qeta}{{\boldsymbol \eta}}
    \newcommand{\qkappa}{{\boldsymbol \kappa}}
    \newcommand{\qtau}{{\boldsymbol \tau}}
\begin{document}

\setcounter{page}{0}

\title{\LARGE\bf A Deterministic Equivalent for the Analysis of\\
Non-Gaussian Correlated MIMO Multiple Access Channels\thanks{This work was supported in part by National Science Council, Taiwan, under grant
NSC100-2221-E-110-052-MY3.}}

\author{Chao-Kai Wen\footnote{Institute of Communications Engineering, National Sun Yat-sen University, Kaohsiung 804, Taiwan.},
~Guangming Pan\footnote{School of Physical and Mathematical Sciences, Nanyang Technological University, Singapore.}, ~Kai-Kit Wong\footnote{Department of
Electronic and Electrical Engineering, University College London, London, WC1E 7JE, United Kingdom.}, ~Mei-Hui Guo\footnote{Department of Applied Mathematics,
National Sun Yat-sen University, Kaohsiung 804, Taiwan.},~and~Jung-Chieh Chen\footnote{Corresponding author. E-mail: {\sf jc\underline{ }chen@ieee.org}. Return
Address: Department of Optoelectronics and Communication Engineering, National Kaohsiung Normal University, Kaohsiung 802, Taiwan. Tel: +886 7 717-2930 Ext 7718.}}

\date{}
\maketitle \thispagestyle{empty}

\begin{abstract}
Large dimensional random matrix theory (RMT) has provided an efficient analytical tool to understand multiple-input multiple-output (MIMO) channels and to aid the
design of MIMO wireless communication systems. However, previous studies based on large dimensional RMT rely on the assumption that the transmit correlation matrix
is diagonal or the propagation channel matrix is Gaussian. There is an increasing interest in the channels where the transmit correlation matrices are generally
nonnegative definite and the channel entries are non-Gaussian. This class of channel models appears in several applications in MIMO multiple access systems, such
as small cell networks (SCNs). To address these problems, we use the generalized Lindeberg principle to show that the Stieltjes transforms of this class of random
matrices with Gaussian or non-Gaussian independent entries coincide in the large dimensional regime. This result permits to derive the deterministic equivalents
(e.g., the Stieltjes transform and the ergodic mutual information) for non-Gaussian MIMO channels from the known results developed for Gaussian MIMO channels, and
is of great importance in characterizing the spectral efficiency of SCNs.

\begin{center}
{\bf Index Terms}
\end{center}
\vspace{-.05in} Generalized Lindeberg principle, Interpolation trick, Large dimensional RMT, MIMO, Small cell networks, Shannon transform, Stieltjes transform.
\end{abstract}

\thanks{\singlespacing}

\newpage
\section*{\sc I. Introduction}
The seminal works by Foschini {\em et al.}~\cite{Fos-98} and Telatar \cite{Tel-99} have inspired the world to realize the huge capacity of multiple-input
multiple-output (MIMO) antenna systems and shed light on the capacity-achieving strategies of such systems. However, exact analysis for the achievable rates of
MIMO channels could be difficult and for some channel models unsolvable. In the last few years, large-system approaches have emerged as a means to circumvent the
mathematical difficulties, greatly motivated by the landmark contributions of Verd\'u-Shamai \cite{Ver-99} and Tse-Hanly \cite{Tse-99} using large dimensional
random matrix theory (RMT) to various problems in information theory. Since then, a large body of performance analyses of various MIMO channels were obtained by
large dimensional random matrix tools such as the Stieltjes transform method (or the Silverstein-Bai method) \cite{Sil-95},\footnote{In recent years, this method
due to Silverstein and Bai has been developed into a much useful tool, widely known as the Stieltjes transform method in the spectral analysis of large dimensional
random matrices.} the Gaussian tools (integration by part and the Poincar\'e-Nash inequality) \cite{Pas-05}, the free probability \cite{Voi-97}, and the replica
method \cite{Edw-75}. See \cite{Tul-04,Cou-11,Bai-09} for more details.

For channel matrices with Gaussian entries, the replica method, an approach originally developed in statistical physics, serves as a powerful tool to derive the
relevant results. For example, it has been used to obtain asymptotic mutual information results for Rayleigh \cite{Mou-03} and Rician fading \cite{Tar-08} channels
with separately correlated antennas. Nevertheless, this method is mathematically incomplete, to say the least. To acquire a more sound mathematical procedure,
advanced tools such as the Gaussian tools and the Stieltjes transform method are required. Using the Gaussian tools, the asymptotic mutual information expressions
for Rayleigh and Rician fading channels have been confirmed rigorously by Hachem {\em et al.}~\cite{Hac-08} and Dumont {\em et al.}~\cite{Dum-10}, respectively.
Based on the Stieltjes transform method, Couillet {\em et al.}~recently studied a MIMO multiple access channel (MAC) with separately correlated user channels
\cite{Cou-09}. In this case, each user's channel matrix, $\qH_k$, can be written in the form $\qR_k^\frac{1}{2}\qX_k \qT_k^\frac{1}{2}$, where $\qX_k$ has
independent and identically distributed (i.i.d.) zero-mean Gaussian entries, and $\qR_k$ and $\qT_k$ are both deterministic nonnegative definite matrices which,
respectively, characterize the spatial correlation structure at the receiver and transmitter sides separately.

Though strictly speaking, the large-system results are only asymptotically tight, they provide reliable performance predictions even for small system dimensions
and at a much lower computational cost than Monte-Carlo simulations, as well as offer insightful understanding on communications channels. Moreover, large-system
results are also important for designing many practical wireless systems such as precoder design \cite{Art-09,Dup-10,Cou-09}, optimal training length design
\cite{Hoy-11c,Wag-09}, scheduling \cite{Huh-10}, and others \cite{Hoy-11b,Lak-10}. For most contributions, the elements of the MIMO channel matrix are assumed to
be multivariate Gaussian distributions; that is, the amplitudes of the channel fading coefficients are either Rayleigh or Rician distributed. Despite being the
most popular models for small-scale amplitude fading, there are more and more results to suggest different models \cite{Mol-05a,Foe-03,Mol-05}. For example,
\cite{Mol-05} proposed that Nakagami-$m$ distribution is best suited for modeling the small-scale amplitude fading in such as indoor residential/office, industrial
environments, and suburban-like microcell environments. In addition, the log-normal distribution has recently been used to describe the small-scale amplitude
fading in the IEEE 802.15.3a \cite{Foe-03}. There is clearly an increasing demand to investigate channels with non-Gaussian fading and their performance. Whether
the systems specifically designed for Gaussian scenarios can still work well in non-Gaussian environments is unknown, and the results available in the literature
so far are too limited to answer this question \cite{Cou-09,Hac-07,Kam-10}.

To appreciate the objective of this paper, it is important to understand the limitations of the existing results for non-Gaussian channels. In \cite{Cou-09}, the
results were only derived under the assumption that each transmitter-side correlation matrix, $\qT_k$, is diagonal, although it was conjectured that the results
might be valid even when $\qT_k$ is nonnegative definite. A channel model composed of a general variance profile and a deterministic line-of-sight (LOS) component
was studied in \cite{Hac-07} which partially generalized the results in \cite{Cou-09}. However, as compared to \cite{Cou-09}, the matrices $\qR_k$'s in
\cite{Hac-07} cannot be nonnegative definite.

\begin{figure}
\begin{center}
\resizebox{4.5in}{!}{%
\includegraphics{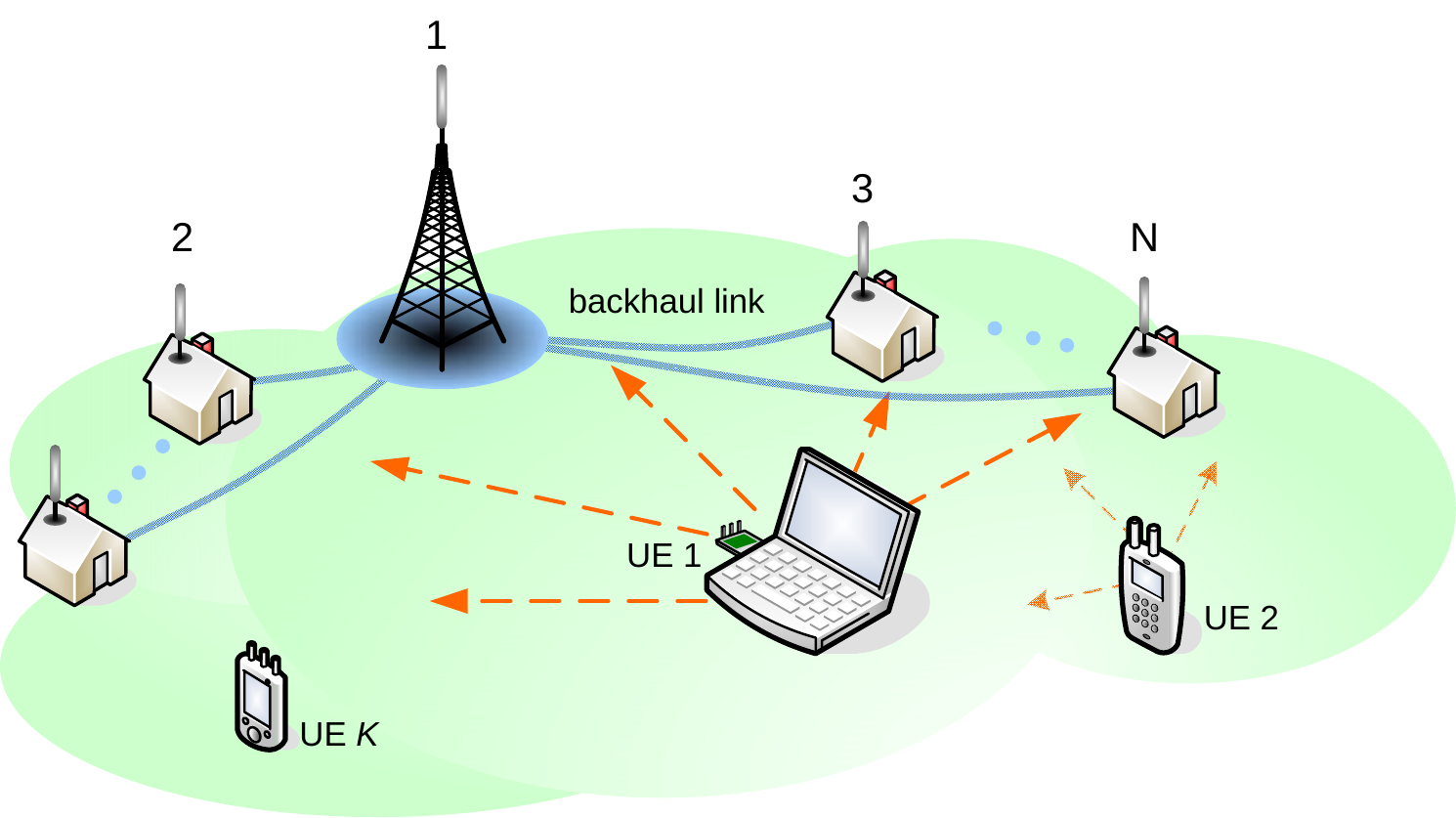} }%
\caption{A small-cell network.}\label{fig:SCN}
\end{center}
\end{figure}

This paper aims to extend previous large-system results to a more general class of random matrices with non-Gaussian entries. As in \cite{Cou-09}, we consider a
$K$-user MIMO MAC, in which each $\qH_k$ is spatially correlated separately at both sides. In our model, a deterministic LOS component $\bar\qH_k$ is also
considered. More specifically, the concerned Kronecker channel $\qR_k^\frac{1}{2} \qX_k \qT_k^\frac{1}{2}$ can be described as follows. The entries of $\qX_k$ are
i.i.d. complex centered random variables ({\em not} necessarily Gaussian), $\qT_k$'s are deterministic nonnegative definite matrices, and $\qR_k$'s are diagonal
nonnegative matrices. This model arises in small-cell networks (SCNs) as shown in Figure \ref{fig:SCN}. The SCNs, which are typically composed of densely deployed
low-cost low-power base stations (BSs), have attracted considerable attention for their potential to increase the capacity of cellular networks
\cite{Hoy-11,Cou-09,Wag-09}. In these networks, the channel fading would tend to be non-Gaussian. In contrast to \cite{Hac-07}, our consideration allows user
equipments (UEs) to be equipped with multiple spatially correlated antennas, which is a typical phenomenon due to space limitation of UEs.

There are several obstacles when one intends to apply the Stieltjes transform method originally developed for the case with diagonal $\qT_k$ (e.g.,
\cite{Sil-95,Cou-09}) to that with general nonnegative definite $\qT_k$ \cite{Zha-06}. To overcome the difficulties, using the generalized Lindeberg principle
\cite{Cha-06,Kor-11}, we show that under very mild conditions, the Stieltjes transforms of the considered random matrices with Gaussian entries and that with
non-Gaussian entries coincide in the large dimensional regime. This result enables us to derive the deterministic equivalents (e.g., the Stieltjes transform and
the ergodic mutual information) for non-Gaussian MIMO channels from the known results for Gaussian MIMO channels. We therefore generalize the deterministic
equivalents of previous results to the SCNs. For uncorrelated channel matrices with i.i.d.~entries, such property is implicit in \cite[Figure 4]{Fra-08} from
computer simulations and has recently been proved in \cite[Corollary 2]{Kor-11}. However, in our derivation, we prove that the deterministic equivalents of the
MIMO MAC channel in \cite{Cou-09} are true even if the entries of $\qX_k$ are non-Gaussian, and those $\qR_k$ and $\qT_k$ are deterministic nonnegative definite
matrices.\footnote{Note that if the LOS is absent, we allow $\qR_k$'s to be nonnegative definite. See Section III for detail.} Therefore, we prove the conjecture
made in \cite{Cou-09} entirely.

The remainder of this paper is structured as follows. In Section II, we introduce the channel model of the SCNs. Section III then presents our main results and
outline their proofs whose details are given in the appendices. Some mathematical tools needed in proving the results are reviewed in Appendix D. Simulation
results are provided in Section IV and finally we conclude the paper in Section V.

Notations---Throughout this paper, the complex number field is denoted by $\mathbb{C}$. For any matrix $\qA\in \bbC^{N \times n}$, $A_{ij}$ denotes the $(i,j)$th
entry, while $\qA^T$, and $\qA^H$ return the transpose and the conjugate transpose of $\qA$, respectively. For a square matrix $\qB$, $\qB^\frac{1}{2}$,
$\qB^{-1}$, ${\sf tr}(\qB)$, and $\det(\qB)$ denote the principal square root, inverse, trace, determinant of $\qB$, respectively. Also, $\qI_N$ is the $N \times
N$ identity matrix, $\qzero_N$ denotes either the $N \times N$ zero matrix or a zero vector depending on the context, $\|\cdot\|$ represents the Euclidean norm of
an input vector or the spectral norm of an input matrix, $\|\cdot\|_{\sfF}$ denotes the Frobenius norm of a matrix, $\rho(\cdot)$ represents the spectral radius
(i.e., the largest absolute value of the eigenvalues) of a matrix, $\Ex \{\cdot\}$ returns the expectation of an input random entity, $\log(\cdot)$ is the natural
logarithm, $\Re\{\cdot\}$ and $\Im\{\cdot\}$ return the real part and the imaginary part of an input entity respectively, $1_{\cal A}$ denotes the indicator
function of the set ${\cal A}$, and $\otimes$ is the Kronecker product \cite{Gra-81}. We use $C$ (or $C_p, C', C'',\dots$) to denote a universal constant whose
value does not depend on matrix sizes but may vary from one appearance to another. Almost sure (a.s.) convergence is denoted by $\xrightarrow{a.s.}$. If $\{ a_i
\}_i$ is a sequence of real numbers, then $b_i = O(a_i)$ and $b_i = o(a_i)$ stands for $|b_i| \leq C |a_i|$ and $\lim_i\frac{b_i}{a_i} \rightarrow 0$ respectively.
As usual, ${\sf j} \equiv \sqrt{-1}$, $\bbR^+ \equiv \{x \in \bbR : x \geq 0 \}$, and $\bbR^- \equiv \{x \in \bbR : x\leq 0 \}$. Also, $\bbC^+ \equiv \{ z = z_1 +
{\sf j} z_2 \in \bbC : z_2 > 0 \}$ and $\bbC^-\equiv \{ z = z_1 + {\sf j} z_2 \in \bbC : z_2 <0 \}$.

\section*{\sc II. Channel Model and Problem Formulation}
\subsection*{A. Multi-cell MIMO-MAC with LOS and Spatial Correlation}
As shown in Figure \ref{fig:SCN}, we consider a MIMO-MAC system with $K$ UEs, labeled as ${\sf UE}_1,\dots,{\sf UE}_K$, which are equipped with $n_1,\dots,n_K$
antennas, respectively. The $K$ UEs transmit to $N$ interconnected small-cell single-antenna BSs simultaneously. In this paper, we use the Kronecker model to
characterize the spatial correlation of the MIMO channel for each link so that the correlation properties at the BS and any UE are modeled separately, e.g.,
\cite{Shi-00}. Specifically, user $k$'s channel, $\qH_k \in{\mathbb C}^{N\times n_k}$, can be written as
\begin{equation}\label{eq:Spatial_Cov}
\qH_k = \qR_k^\frac{1}{2} \qX_k \qT_k^\frac{1}{2} + \bar{\qH}_k,
\end{equation}
where $\qR_k = \diag(r_{k,1},\dots,r_{k,N})$ is a deterministic diagonal matrix with the $i$th diagonal entry, $r_{k,i}$, being the channel gain from ${\sf UE}_k$
to the $i$th receiving antenna (or the $i$th BS), $\qT_k \in\mathbb{C}^{n_k \times n_k}$ is a deterministic nonnegative definite matrix, which expresses the
correlation of the transmit signals across the antenna elements of ${\sf UE}_k$, $\qX_k \equiv [\frac{1}{\sqrt{n_k}} X_{ij}^{(k)}] \in\mathbb{C}^{N\times n_k}$
consists of the random components of the channel in which the elements $\{X_{ij}^{(k)}\}_{1\leq i \leq N; 1 \leq j \leq n_k}$ are i.i.d. complex random variables
with zero mean and variance of $P_k$, and $\bar\qH_k \in\mathbb{C}^{N\times n_k}$ is a deterministic matrix corresponding to the LOS of the channel.

To get a proper definition on the signal-to-noise ratio (SNR), consider the power of the channel:
\begin{equation}
\Ex \left[\tr\left(\qH_k\qH_k^H\right)\right] = \frac{P_k}{n_k}\tr\left(\qR_k\right)\tr\left(\qT_k\right)+\tr\left(\bar{\qH}_k\bar{\qH}_k^H\right).
\end{equation}
It is customarily assumed that $\qR_k$, $\qT_k$, and $\bar{\qH}_k$ are normalized such that $\tr\left(\qR_k\right)=N$, $\tr\left(\qT_k\right)=n_k$, and
$\tr\left(\bar{\qH}_k\bar{\qH}_k^H\right)=N$. In so doing, $P_k$ can be used as an indicator for the SNR of user $k$, which is independent from the matrix
dimensions. For notational brevity, henceforth, we assume that $P_k = 1~\forall k$ without loss of generality.\footnote{For practical applications, one can set any
positive value of $P_k$ without incurring any issues in the results of this paper. }

A diagonal structure in $\qR_k$ is sufficient to model the SCN scenarios under investigation. However, the more general nonnegative definite structure for $\qR_k$
could extend our results to cope with more complex applications. Due to the presence of $\bar\qH_k$, unfortunately, the required analysis is incredibly arduous. As
a result, if $\{ \bar\qH_k \neq \qzero\}_{\forall k}$, we restrict our consideration to diagonal $\qR_k$'s. Nevertheless, if $\{ \bar\qH_k = \qzero\}_{\forall k}$,
our results to be presented in Section III are valid even under nonnegative definite $\qR_k$'s.

\subsection*{B. Mutual Information and Stieltjes Transform}
Mutual information measures the achievable rate of a channel and has been a key metric for performance analysis in wireless communications. The Stieltjes transform
provides a convenient tool to study behavior of random matrices in large dimensional RMT. To do so, we first explain their relations.

Defining $\qH\triangleq\left[ \qH_1 \cdots \qH_K \right]$, $\bar{\qH}\triangleq\left[ \bar{\qH}_1 \cdots \bar{\qH}_K \right]$ and $n\triangleq\sum_{k=1}^{K} n_k$,
the mutual information of the MIMO channel can be linked to the eigenvalues of a nonnegative definite matrix $\qB_N$ of the form
\begin{equation}\label{eq:main_model}
\qB_N = \qS + \qH \qH^H=\qS + \sum_{k=1}^{K} \left(\qR_k^\frac{1}{2} \qX_k \qT_k^\frac{1}{2} + \bar{\qH}_k\right)\left(\qR_k^\frac{1}{2} \qX_k \qT_k^\frac{1}{2} +\bar{\qH}_k\right)^H,
\end{equation}
in which $\qS$ accounts for a source of correlated interference whose covariance matrix has the nonnegative square root $\qS^{\frac{1}{2}}$. Let $F_{\qB_N}$ be the
empirical spectral distribution (ESD) of the eigenvalues of $\qB_N$, given by
\begin{equation}
F_{\qB_N}(\lambda) = \frac{1}{N} \left\{\mbox{numbers of eigenvalues of $\qB_N \leq \lambda$} \right\}.
\end{equation}
One of the main problems in large dimensional RMT is to study the limiting spectral distribution (LSD) of $\qB_N$, denoted by $F_N$. A convenient tool for this is
the Stieltjes transform of $F_{\qB_N}(\lambda)$ which is defined as
\begin{equation}
m_{\qB_N}(z)\triangleq\int_{\bbR+} \frac{1}{\lambda-z} d F_{\qB_N}(\lambda) = \frac{1}{N} {\sf tr} \left( \qB_N - z \qI_{N} \right)^{-1}~~\mbox{for }z \in \bbC-\bbR^+.
\end{equation}
We will denote $\bbS(\bbR^+)$ as the class of all Stieltjes transforms of finite positive measures carried by $\bbR^+$. The Stieltjes transform provides a direct
way to identify the LSD of large dimensional random matrices. Some useful properties of the Stieltjes transforms are listed in Lemma \ref{StjLemma1}. According to
\cite{Mar-67,Sil-95}, to show that the difference between $F_{\qB_N}$ and $F_N$ converges vaguely to zero, it is equivalent to show that
\begin{equation} 
m_{\qB_N}(z) - m_N(z) \xrightarrow{a.s.} 0~~\mbox{for }z \in \bbC-\bbR^+,
\end{equation}
where $m_N(z) \triangleq \int_{\bbR+} \frac{1}{\lambda-z} d F_N(\lambda)$ is the Stieltjes transform of $F_N$.

For wireless communications, the importance of the Stieltjes transform is due to the fact that many important performance metrics can be expressed as functions of
the Stieltjes transform of $\qB_N$. The mutual information can be expressed as functionals of the Stieltjes transform of $\qB_N$ through the so-called Shannon
transform, where their relationship can be expressed as \cite[Section 2.2.3]{Tul-04} (or \cite[page 891]{Hac-07})
\begin{align}
\calV_{\qB_N}(\sigma^2) &\equiv \frac{1}{N} \log\det\left(\qI_N + \frac{1}{\sigma^2} \qH\qH^H \right) \notag \\
&= \int_{0}^{\infty} \log\left(1+ \frac{1}{\sigma^2}\lambda \right) d F_{\qB_N}(\lambda)  \notag\\
&= \int_{\sigma^2}^{\infty} \left( \frac{1}{\omega} - m_{\qB_N}(-\omega) \right) d\omega ~~~~~\mbox{for}~~\sigma^2 \in \bbR^+,
\label{eq:shannonTrans}
\end{align}
where it is assumed that $\qS = \qzero_N$ for simplicity.\footnote{The generalization of the corresponding results to the case with $\qS \neq \qzero$ is
straightforward. In the case with $\qS \neq \qzero$, the related performance metric (or the mutual information) is given by
\[
\frac{1}{N} \log\det\left(\qI_N + \qS + \frac{1}{\sigma^2} \qH\qH^H \right) - \frac{1}{N} \log\det\left(\qI_N + \qS \right).
\]} Here, $\calV_{\qB_N}(\sigma^2)$ provides a performance metric regarding the number of bits per second per Hertz per antenna that can be transmitted reliably over the SCN with channel matrices $\{\qH_k\}_{k =1,\dots,K}$.

In this paper, we are particularly interested in understanding the Stieltjes transform as well as the Shannon transform of $\qB_N$ in the asymptotic regime where
$K$ is fixed and $N, n_1, \dots, n_K$ all grow to infinity with ratios $\{ \beta_k(N) \equiv \frac{N}{n_k}\}_{k=1,\dots,K}$ such that
\begin{equation}
\beta_{\min} <\min_k\liminf_{N} \beta_k(N) < \max_k\limsup_{N} \beta_k(N) < \beta_{\max}
\end{equation}
and $0 < \beta_{\min}, \, \beta_{\max} < \infty$. For convenience, we will refer to this asymptotic regime simply as $\largeN\rightarrow \infty$. Our main goal is
to find a {\em nonrandom} matrix-valued function $\qPsi(z)$ (to be determined later) such that
\begin{equation} \label{eq:AidstjCong}
m_{\qB_N}(z) - \frac{1}{N} \tr(\qPsi(z))\xrightarrow{a.s.} 0~~\mbox{for }z \in \bbC-\bbR^+.
\end{equation}
This type of relation is referred to as {\em deterministic equivalent} \cite{Hac-07}, and $\frac{1}{N} \tr(\qPsi(z))$ is said to be the deterministic equivalent to
$m_{\qB_N}(z)$. We will apply (\ref{eq:AidstjCong}) to find a deterministic equivalent of the ergodic mutual information $\Ex \{\calV_{\qB_N}(\sigma^2)\}$, denoted
by $\calV_N(\sigma^2)$, and achieve this by proving $\Ex \{\calV_{\qB_N}(\sigma^2)\} - \calV_N(\sigma^2) \rightarrow 0$. In general, the computation of $\Ex
\{\calV_{\qB_N}(\sigma^2)\}$ relies on time-consuming Monte-Carlo computer simulations, while the deterministic equivalent is analytical and a lot easier to
compute than $\Ex \{\calV_{\qB_N}(\sigma^2)\}$.

In wireless communications applications, the Stieltjes transform itself may also be used to characterize the asymptotic signal-to-interference plus noise ratio
(SINR) of certain communication models, such as \cite{Tse-99}. The above-mentioned illustrations are just a few of the several important applications of the
Stieltjes transform in wireless communications. For a thorough survey of other applications, see \cite{Tul-04,Cou-11}. Undoubtedly, in order to construct reliable
applications in wireless communications, new analytical results concerning the LSD as well as the Stieltjes transform in the asymptotic regime are required.

\section*{\sc III. Main Results}
Before we present our main results, we first state the assumptions imposed in our SCN model.

\subsection*{A. Assumptions}
\begin{Assumption} \label{Ass1}
Let $\qX_k = [\frac{1}{\sqrt{n_k}} X_{ij}^{(k)}] \in \mathbb{C}^{N\times n_k}$, where $X_{ij}^{(k)}$'s are i.i.d. complex random variables with independent real
and imaginary parts such that $\Ex\{ X_{11}^{(k)}\}= 0$ and $\Ex \left\{|X_{11}^{(k)}-\Ex \{X_{11}^{(k)}\}|^2 \right\}=1$.
\end{Assumption}

\begin{Assumption} \label{Ass2}
The family of deterministic matrices $\{\qR_k,\qT_k,\qS\}_{\forall k}$ is deterministic nonnegative definite.
\end{Assumption}

\begin{Assumption} \label{Ass3}
The matrices $\qR_k$, $\qT_k$, and $\bar{\qH}_k$ are normalized such that
\begin{equation}\label{eq:normAss}
\left\{\begin{aligned}
\tr\left(\qR_k\right)&=N,\\
\tr\left(\qT_k\right)&=n_k,\\
\tr\left(\bar{\qH}_k\bar{\qH}_k^H\right)&=N.
\end{aligned}\right.
\end{equation}
\end{Assumption}

Clearly, because of the normalization constraint in (\ref{eq:normAss}), the sequences $\{F_{\qT_k}\}_{\forall k}$ are tight in $n_k$ while $\{F_{\qR_k}\}_{\forall
k}$ and $\{F_{\bar\qH_k\bar\qH_k^H}\}_{\forall k}$ are tight in $N$. It means that for each fixed $\epsilon \in (0,1)$, we can always select an $\alpha > 0$ such
that for all $n_k$, $F_{\qT_k}(\alpha) > 1 - \epsilon$, and for all $N$, $F_{\qR_k}(\alpha) > 1 - \epsilon$ and $F_{\bar\qH_k\bar\qH_k^H}(\alpha) > 1 - \epsilon$.

\begin{Assumption} \label{Ass5}
The family of deterministic matrices $\{\qR_k\}_{\forall k}$ is diagonal with nonnegative elements.
\end{Assumption}

Notice that Assumption \ref{Ass5} requires $\qR_k$ to be diagonal which is more restrictive than Assumption \ref{Ass2}. However, this assumption is still satisfied
in the application of the SCNs under investigation. Also, it should be noted that Assumption \ref{Ass5} is not required for some theorems presented in this paper.

\subsection*{B. Main Results}
We first introduce some properties of the deterministic matrix-valued function $\qPsi(z)$ which is needed in the deterministic equivalents of the Stieltjes
transform and the ergodic mutual information. To facilitate our expressions, we define the notation $\langle \qA \rangle_k$ that returns the submatrix of $\qA$
obtained by extracting the elements of the rows and columns with indices from $\sum_{i=1}^{k-1} n_i+1$ to $\sum_{i=1}^{k} n_i$.
\begin{Theorem} \label{mainTh_uniq}
Let $\beta_k=\frac{N}{n_k}$. Under Assumption \ref{Ass2}, the deterministic system of the following $K$ equations
\begin{subequations} \label{eq:fixedPoint}
\begin{align}
    e_i(z) &= \frac{1}{N} \tr\left( \qR_i \qPsi(z)\right) ~~~~~~~~\mbox{for }1 \leq i \leq K,\\
    \tilde{e}_i(z) &= \frac{1}{n_i} \tr\left( \qT_i \langle \tilde{\qPsi}(z) \rangle_{i}\right) ~~~~\mbox{for }1 \leq i \leq K,
\end{align}
\end{subequations}
where
\begin{subequations} \label{eq:PsiS}
\begin{align}
    \qPsi(z) &= \left( \qPhi(z)^{-1} - z \bar{\qH} \tilde{\qPhi}(z) \bar{\qH}^H \right)^{-1}, \label{eq:Psi} \\
    \tilde{\qPsi}(z) &= \left( \tilde{\qPhi}(z)^{-1} - z \bar{\qH}^H \qPhi(z) \bar{\qH} \right)^{-1}, \label{eq:tPsi} \\
    \qPhi(z) &= \frac{-1}{z}\left( -\frac{1}{z}\qS + \sum_{i=1}^{K} \tilde{e}_i(z)\qR_i + \qI_N \right)^{-1}, \label{eq:Phi} \\
    \tilde{\qPhi}(z)         &= \frac{-1}{z}\diag\left( (\qI_{n_1} + \beta_1 e_1(z)\qT_1)^{-1},\dots,(\qI_{n_K} + \beta_K e_K(z)\qT_K)^{-1} \right)\label{eq:tPhi}
\end{align}
\end{subequations}
have a unique solution for $z \in \bbC-\bbR^+$. In particular, $e_i(z)\in \bbS(\bbR^+)$ and $\tilde{e}_i(z) \in \bbS(\bbR^+)$ for $i \in \{1,\dots,K\}$.
\end{Theorem}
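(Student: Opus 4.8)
The idea is to regard the system \eqref{eq:fixedPoint} as a finite-dimensional fixed-point problem in the $2K$ scalar functions $\qe(z)=(e_1(z),\dots,e_K(z))$ and $\tilde\qe(z)=(\tilde e_1(z),\dots,\tilde e_K(z))$, since every matrix in \eqref{eq:PsiS}, and hence the right-hand side of \eqref{eq:fixedPoint}, is an explicit function of $(\qe,\tilde\qe)$ once $z$ is fixed. Write $\qT:(\qe,\tilde\qe)\mapsto(\qe',\tilde\qe')$ for the resulting map. The first technical step is to pin down an invariant domain: I would let $\calS$ be the set of $2K$-tuples of functions on $\bbC-\bbR^+$ each of which lies in $\bbS(\bbR^+)$, and verify, using the properties of Stieltjes transforms collected in Lemma \ref{StjLemma1} (analyticity off $\bbR^+$, sign of the imaginary part, the $z\to-\infty$ normalization, and the resolvent bound $\|(\qA-z\qI)^{-1}\|\le 1/\mathrm{dist}(z,\bbR^+)$ for nonnegative definite $\qA$), that $\qT(\calS)\subseteq\calS$. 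Under Assumption \ref{Ass2} the matrices $\qR_i,\qT_i,\qS$ are nonnegative definite, $\qI_{n_i}+\beta_i e_i\qT_i$ is invertible, and $-\tfrac1z(-\tfrac1z\qS+\sum_i\tilde e_i\qR_i+\qI_N)^{-1}$ is a resolvent-type object of the right definiteness; the only genuinely delicate check is that the LOS correction $-z\bar\qH\tilde\qPhi(z)\bar\qH^H$ in \eqref{eq:Psi}--\eqref{eq:tPsi} keeps $\qPsi$ (and then $e_i=\tfrac1N\tr(\qR_i\qPsi)$) inside $\bbS(\bbR^+)$, which works because $-z\tilde\qPhi(z)$ enters blockwise with the right sign for $z\in\bbC-\bbR^+$.

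\smallskip
\noindent\textbf{Existence.} With the invariant domain in hand, existence follows by iteration. Observing that $\qT$ splits as a composition $\qe\mapsto\tilde\qe\mapsto\qe$ of two order-reversing maps on $(\bbR^+)^K$ (each built from its argument by an odd number of positive-semidefinite-order-reversing matrix inversions composed with order-preserving operations), the composition is order-preserving; starting the iteration on the negative half-line $z=-x<0$ from a suitable initial point produces monotone, uniformly bounded sequences that converge to a fixed point $(\qe(-x),\tilde\qe(-x))$ with strictly positive entries. To promote this to all $z\in\bbC-\bbR^+$, note that the iterates are, for each $z$, analytic and locally uniformly bounded on $\bbC-\bbR^+$; by a normal-families (Montel/Vitali) argument a subsequence converges locally uniformly to analytic limits, which by continuity of $\qT$ solve \eqref{eq:fixedPoint} on $\bbC-\bbR^+$, coincide on $\bbR^-$ with the fixed point just built, and lie in $\bbS(\bbR^+)^{2K}$ --- giving the ``in particular'' claim. (Alternatively a solution can be produced as a limit point of $\tfrac1N\tr(\qR_i\,\bbE[(\qB_N-z\qI_N)^{-1}])$ and the companion quantities for the Gaussian model, but the direct iteration is cleaner and self-contained.)

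\smallskip
\noindent\textbf{Uniqueness.} Let $(\qe,\tilde\qe)$ and $(\qe',\tilde\qe')$ be two solutions in $\bbS(\bbR^+)^{2K}$, with associated matrices $\qPsi,\qPhi,\dots$ and $\qPsi',\qPhi',\dots$. Working first on $z=-x<0$, where all matrices in \eqref{eq:PsiS} have operator norms bounded in terms of $x$ and the constants in Assumptions \ref{Ass2}--\ref{Ass3}, I would apply the resolvent identity $\qA^{-1}-\qB^{-1}=\qA^{-1}(\qB-\qA)\qB^{-1}$ repeatedly through \eqref{eq:Psi}--\eqref{eq:tPhi} to write $\qPsi-\qPsi'$ and $\tilde\qPsi-\tilde\qPsi'$ as matrix-weighted linear combinations of the scalar differences $e_j-e_j'$ and $\tilde e_j-\tilde e_j'$; taking normalized traces against $\qR_i$ and $\qT_i$ then yields entrywise inequalities $v_\ell\le\sum_{m}M_{\ell m}(x)\,v_m$, where $v=(|e_i-e_i'|,|\tilde e_i-\tilde e_i'|)_i\in\bbR^{2K}$ and $\qM(x)=[M_{\ell m}(x)]$ has nonnegative entries. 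It then remains to show $\rho(\qM(x))<1$, which forces $v=0$; for $x$ large this is immediate, and since every solution in $\bbS(\bbR^+)^{2K}$ is analytic on the connected set $\bbC-\bbR^+$, coincidence on a half-line with an accumulation point extends by the identity theorem to all of $\bbC-\bbR^+$. To avoid analytic continuation one can instead retain $\Im z\ne 0$ and use the extra positivity of $\qe,\tilde\qe\in\bbS(\bbR^+)$ (fixed signs of $\Im e_i$ and $\Im(ze_i)$), which upgrades the Cauchy--Schwarz estimates behind the entries of $\qM$ to strict inequalities unless all differences vanish.

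\smallskip
\noindent\textbf{Main obstacle.} The hardest step is establishing $\rho(\qM(z))<1$ (equivalently, the strict-contraction estimate) in the presence of a general nonnegative definite $\qT_k$ together with the LOS term $\bar\qH$: unlike the diagonal-$\qT_k$, $\bar\qH=\qzero$ setting of \cite{Cou-09}, the $e_i$- and $\tilde e_i$-equations no longer decouple cleanly, and the crucial bounds on products such as $\tfrac1N\tr(\qR_i\qPhi\cdots)\cdot\tfrac1{n_j}\tr(\qT_j\langle\tilde\qPsi\rangle_j\cdots)$ must be extracted from the block structure of $\tilde\qPhi$ in \eqref{eq:tPhi} together with the normalizations $\tr\qR_k=N$, $\tr\qT_k=n_k$, $\tr(\bar\qH_k\bar\qH_k^H)=N$ of Assumption \ref{Ass3}. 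Once that estimate is in place, existence, uniqueness, and the membership $e_i,\tilde e_i\in\bbS(\bbR^+)$ follow along the lines above.
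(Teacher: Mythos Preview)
Your overall strategy is sound and close in spirit to the paper's, but the route differs in both halves, and in the uniqueness part you are vague precisely at the point where the paper deploys its one nontrivial idea.

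\smallskip
\textbf{Existence.} You build a self-contained iteration on $z\in\bbR^-$ plus a normal-families extension. The paper does none of this: it observes (Appendix~B.1) that once one unwinds the block structure, the system \eqref{eq:fixedPoint}--\eqref{eq:PsiS} is exactly the one in \cite[Theorems~2.4--2.5]{Hac-07} up to a unitary change of basis, so existence (and the membership $e_i,\tilde e_i\in\bbS(\bbR^+)$) is inherited wholesale from there. Your approach is more self-contained; theirs is shorter. No gap here.

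\smallskip
\textbf{Uniqueness.} Both you and the paper reduce to a finite linear system for the differences via the resolvent identity, arriving at $\qtau-\qtau^\circ=\qDelta(\qtau-\qtau^\circ)$, and both need $\rho(|\qDelta|)<1$. The divergence is in how this is obtained. Your primary route---show $\rho(\qM(-x))<1$ for large $x$ by bare norm bounds, then invoke the identity theorem---should work, but it outsources the heart of the matter to analytic continuation. The paper instead works directly at any $z\in\bbC^+$ and exploits the Stieltjes-transform positivity you allude to only briefly. Concretely: because $e_i,\tilde e_i\in\bbS(\bbR^+)$, the vector $\qeta$ of imaginary parts $(\Im e_i,\Im(ze_i),\Im\tilde e_i,\Im(z\tilde e_i))$ is strictly positive and satisfies an \emph{equation} of the same shape, $\qeta=\qGamma\qeta+\qb$, with $\qGamma\ge 0$ entrywise and $\qb>0$. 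Lemma~\ref{lineIneqLemma} then gives $\rho(\qGamma)<1$ immediately---this is the Perron--Frobenius step you are missing. The final link is a Cauchy--Schwarz/Lemma~\ref{matrixIneqLemma3} comparison showing $|\Delta_{ij}|\le\Gamma_{ij}^{1/2}(\Gamma_{ij}^\circ)^{1/2}$ entrywise, whence $\rho(|\qDelta|)\le\rho(\qGamma)^{1/2}\rho(\qGamma^\circ)^{1/2}<1$ by Lemmas~\ref{matrixIneqLemma1}--\ref{matrixIneqLemma2}. Your sentence ``upgrades the Cauchy--Schwarz estimates to strict inequalities'' gestures at this but understates it: the imaginary parts are not used to sharpen an inequality on $\qDelta$ directly, they are used to manufacture a companion matrix $\qGamma$ whose spectral radius is forced below $1$ by the presence of the strictly positive inhomogeneity $\qb$. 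This is also what makes the argument insensitive to the LOS term and to non-diagonal $\qT_k$, so the ``main obstacle'' you flag is dissolved by this device rather than by the normalizations in Assumption~\ref{Ass3}.
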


\begin{proof}
See Appendix B.
\end{proof}

We next provide a deterministic equivalent for the Stieltjes transform of $\qB_N$.

\begin{Theorem} \label{mainTh_Stj}
In addition to Assumptions \ref{Ass1}, \ref{Ass2}, and \ref{Ass3}, if one of the following conditions holds:
\begin{itemize}
\item[1)] $K=1$,
\item[2)] $\bar\qH = \qzero$ with $1 \leq K < \infty$,
\item[3)] Assumption \ref{Ass5} with $1 \leq K < \infty$,
\end{itemize}
then, as $\largeN\rightarrow \infty$, we have
\begin{equation} \label{eq:stjCong}
    m_{\qB_N}(z) - \frac{1}{N} \tr(\qPsi(z))\xrightarrow{a.s.} 0~~\mbox{for } z \in \bbC-\bbR^+.
\end{equation}

\end{Theorem}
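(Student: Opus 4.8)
\noindent\emph{Proof strategy.} The plan is to establish (\ref{eq:stjCong}) in two stages: a \emph{Gaussian equivalence} obtained from the generalized Lindeberg (interpolation) principle, and the \emph{already available} deterministic equivalents for the Gaussian case — the three alternative hypotheses 1)--3) are precisely the situations in which the Gaussian result is known. Throughout, write $\qQ(z)=(\qB_N-z\qI_N)^{-1}$, so that $m_{\qB_N}(z)=\frac1N\tr\qQ(z)$, and let $\hat{\qB}_N$ denote the matrix obtained from (\ref{eq:main_model}) by replacing each $\qX_k$ with a matrix $\hat{\qX}_k$ having i.i.d.\ complex Gaussian entries of the same mean and variance; the deterministic data $\{\qR_k,\qT_k,\bar{\qH}_k,\qS\}$, and hence $\qPsi(z)$ (which does not depend on the law of the $\qX_k$), are left untouched. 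Because $\qB_N=\qS+\qH\qH^H$ is nonnegative definite \emph{for every realization} of the $\qX_k$, one has the deterministic bound $\|\qQ(z)\|\le 1/\dist$ for all $z\in\bbC-\bbR^+$, which makes the resolvent estimates below uniform.

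\emph{The Gaussian case.} When the $\qX_k$ are Gaussian they are bi-unitarily invariant, so diagonalizing $\qT_k=\qU_k\qD_k\qU_k^H$ and right-multiplying $\qH_k$ by $\qU_k$ — an operation that leaves $\qH_k\qH_k^H$, hence $\qB_N$ and $m_{\qB_N}(z)$, unchanged in law — reduces the model to one with \emph{diagonal} transmit correlation $\qD_k$ and LOS component $\bar{\qH}_k\qU_k$. Under hypothesis 2) ($\bar{\qH}=\qzero$) this is exactly the separately correlated Gaussian MIMO-MAC of \cite{Cou-09} (diagonal $\qT_k$, arbitrary nonnegative definite $\qR_k$); under hypothesis 3) (Assumption \ref{Ass5}, $\qR_k$ diagonal) it becomes a Gaussian channel with a separable variance profile (block-structured across users) plus a deterministic LOS, which is covered by \cite{Hac-07}; under hypothesis 1) ($K=1$) it is the single-user bicorrelated Rician channel of \cite{Dum-10}. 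In each case the deterministic additive term $\qS$ poses no extra difficulty, and one checks directly that the fixed-point system of the cited reference coincides with (\ref{eq:fixedPoint})--(\ref{eq:PsiS}) specialized to that setting; hence $m_{\hat{\qB}_N}(z)-\frac1N\tr\qPsi(z)\xrightarrow{a.s.}0$.

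\emph{The interpolation step}, namely $\Ex\,m_{\qB_N}(z)-\Ex\,m_{\hat{\qB}_N}(z)\to0$, is the heart of the argument. First truncate, re-center and re-scale the entries of the $\qX_k$ at a level $\tau_N=o(\sqrt N)$; a standard truncation/centering argument based on rank and Frobenius-norm inequalities for empirical spectral distributions shows this does not change the a.s.\ limit of $m_{\qB_N}(z)$, and it makes $\|\qH_k^{(0)}\|=O(1)$ once the boundedness of $\|\qR_k\|,\|\qT_k\|,\|\bar{\qH}_k\|$ is used. Now replace the real and imaginary parts of the entries of the $\qX_k$ by their Gaussian counterparts one at a time (permissible by Assumption \ref{Ass1}). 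Isolating a single scalar parameter $t$ (one such part of some $X^{(k)}_{ij}$) gives $\qB_N=\qB_N^{(0)}+t\,\qM_1+t^2\qM_2$ with $\rank\qM_1\le2$, $\|\qM_1\|=O(N^{-1/2})$, $\rank\qM_2\le1$, $\|\qM_2\|=O(N^{-1})$. Differentiating $t\mapsto\frac1N\tr(\qB_N^{(0)}+t\qM_1+t^2\qM_2-z\qI_N)^{-1}$ and using these bounds together with $\|\qQ(z)\|\le 1/\dist$, the normalized trace of a rank-$O(1)$ matrix gains an extra factor $N^{-1}$, so the third derivative is $O(N^{-5/2})$ and the fourth $O(N^{-3})$, uniformly along the path. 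Since the first two moments of $t$ agree in the two models, a third-order Taylor expansion bounds the effect of one replacement on $\Ex\,\frac1N\tr\qQ(z)$ by $O(\tau_N N^{-5/2})$ (third-moment term) plus $O(\tau_N^2 N^{-3})$ (remainder); summing over the $\sum_k Nn_k=O(N^2)$ entries yields $|\Ex\,m_{\qB_N}(z)-\Ex\,m_{\hat{\qB}_N}(z)|=O(\tau_N N^{-1/2})\to0$.

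\emph{Assembling the pieces.} For both $\qB_N$ and $\hat{\qB}_N$, exposing the columns of the $\qX_k$ one at a time produces a Doob martingale for $m_{\qB_N}(z)$ whose increments are $O(1/N)$ — a single column is a rank-one change of $\qH$, hence a rank-$\le2$ change of $\qB_N$, and the normalized trace of the resolvent difference is $O(1/N)$ by the resolvent identity and $\|\qQ(z)\|\le 1/\dist$ — so the Azuma--Hoeffding inequality and Borel--Cantelli give $m_{\qB_N}(z)-\Ex\,m_{\qB_N}(z)\xrightarrow{a.s.}0$, and likewise for $\hat{\qB}_N$. Writing
\[
m_{\qB_N}(z)-\tfrac1N\tr\qPsi(z)=\big(m_{\qB_N}(z)-\Ex\,m_{\qB_N}(z)\big)+\big(\Ex\,m_{\qB_N}(z)-\Ex\,m_{\hat{\qB}_N}(z)\big)+\big(\Ex\,m_{\hat{\qB}_N}(z)-\tfrac1N\tr\qPsi(z)\big),
\]
the three terms vanish by concentration, by the interpolation step, and by the Gaussian case together with bounded convergence, respectively, which proves (\ref{eq:stjCong}). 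I expect the interpolation step to be the main obstacle: one must make the per-entry third-derivative bound genuinely $o(N^{-2})$ uniformly along the path — for which the rank-$O(1)$ structure of each single-entry perturbation, and the resulting gain of a factor $N^{-1}$ in the normalized trace, is essential — while keeping $\tau_N$ small enough that $\tau_N N^{-1/2}\to0$ yet without disturbing the limiting spectral distribution.
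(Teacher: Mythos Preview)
Your proposal is correct and follows essentially the same three-step architecture as the paper: truncation, martingale concentration $m_{\qB_N}(z)-\Ex\{m_{\qB_N}(z)\}\xrightarrow{a.s.}0$, Lindeberg replacement $\Ex\{m_{\qB_N}(z)\}-\Ex\{m_{\hat{\qB}_N}(z)\}\to0$, and then invoking the known Gaussian deterministic equivalents case by case after diagonalizing $\qT_k$ via bi-unitary invariance. The only points to tighten are that the paper explicitly truncates the eigenvalues of $\qR_k,\qT_k,\bar\qH_k\bar\qH_k^H$ (Assumption~\ref{Ass3} gives only trace bounds, not norm bounds, so your ``boundedness of $\|\qR_k\|,\|\qT_k\|,\|\bar\qH_k\|$'' must be justified by this extra truncation), and that the paper handles case~1) via \cite{Hac-07} after diagonalizing \emph{both} $\qR_1$ and $\qT_1$, rather than via \cite{Dum-10}.
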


\begin{proof}
Section III-C is dedicated to the proof of Theorem \ref{mainTh_Stj}.
\end{proof}

\begin{Remark}
According to (\ref{eq:Spatial_Cov}), we have addressed the non-central part of the channel through $\bar\qH_k$. Hence, we have $\Ex \{X_{11}^{(k)}\} = 0$ in
Assumption \ref{Ass1} for conceptual clarity. In fact, the assumption $\Ex\{ X_{11}^{(k)}\}= 0$ can be removed from Theorem \ref{mainTh_Stj} if $X_{ij}^{(k)}$'s
have the same mean. One can see that removing the same mean of the entries of $\qX_k$ does not affect the LSD of $F_{\qB_N}(x)$. See Appendix A.1 for detail.
\end{Remark}

When $K=1$, $\qS= \qzero$, and $\bar\qH_1 = \qzero$, Zhang's result in \cite{Zha-06} allows $\qR_1$ to be an arbitrary nonnegative definite matrix and $\qT_1$ to
be Hermitian. While $K=1$ and $\bar\qH_1 = \qzero$, Pan in \cite{Pan-10} tackled the cases where $\qT_1$ and $\qS$ are arbitrary Hermitian matrices, but $\qR_1
=\qI_N$. Using a new approach based on the generalized Lindeberg principle \cite{Kor-11}, we can handle both the cases in \cite{Zha-06,Pan-10} and make the proofs
simpler. As a result, Theorem \ref{mainTh_Stj} says that (\ref{eq:stjCong}) holds  when $K=1$ {\em without} the diagonal restriction on $\qR_1,\qT_1$, and {\em
without} the requirement of $\bar\qH_1 = \qzero$. This result also embraces the case in \cite[Section 3.2]{Hac-07} as a special case.

For the general case with $1 \leq K < \infty$ but $\bar\qH = \qzero$, it was pointed out by Couillet {\em et al.}~in \cite[Corollary 1]{Cou-09} that
(\ref{eq:stjCong}) holds when the entries of $\qX_k$ are i.i.d.~{\em Gaussian} random variables. When the entries of $\qX_k$ are non-Gaussain, (\ref{eq:stjCong})
was derived in \cite[Theorem 1]{Cou-09} under the assumption that $\qT_k$'s are diagonal. Therefore, Theorem \ref{mainTh_Stj} is more general than these previous
studies in the sense that the entries of $\qX_k$ are not necessarily Gaussian and $\qT_k$ is not necessarily diagonal, as conjectured in \cite{Cou-09}.

When $1 \leq K < \infty$ and the LOS component $\bar\qH$ is present, we require $\qR_k$'s to be diagonal due to mathematical difficulties. However, in this case it
is worth pointing out that (\ref{eq:stjCong}) is also true if the matrices $\qR_1, \dots, \qR_K$ are simultaneously unitary diagonalizable according to our
argument. This type of channel model is the so-called ``virtual channel representation'' in \cite{Say-02} and is found to be useful for modeling channels with many
antennas \cite{Ozc-05}.

As an application, we next use Theorem \ref{mainTh_Stj} to provide a deterministic equivalent of the ergodic mutual information in the following theorem.

\begin{Theorem} \label{mainTh_Cap}
Assume that $\qB_N$ follows the hypotheses of Theorem \ref{mainTh_Stj} and $\qS = \qzero$. Then, as $\largeN\rightarrow \infty$, the Shannon transform of $\qB_N$
satisfies
\begin{equation} \label{eq:a4}
    \Ex\{ \calV_{\qB_N}(\sigma^2) \} - \calV_N(\sigma^2) \longrightarrow 0,
\end{equation}
where
\begin{multline}\label{eq:AsyShannon}
\calV_N(\sigma^2) = \frac{1}{N} \log\det\left( \frac{\qPhi(-\sigma^2)^{-1}}{\sigma^2} + \bar{\qH} \tilde{\qPhi}(-\sigma^2) \bar{\qH}^H \right) +
\frac{1}{N} \log\det\left( \frac{\tilde{\qPhi}(-\sigma^2)^{-1}}{\sigma^2} \right)\\
-\sigma^2 \sum_{i=1}^{K} e_i(-\sigma^2) \tilde{e}_i(-\sigma^2).
\end{multline}
\end{Theorem}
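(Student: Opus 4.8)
The plan is to produce $\calV_N$ by integrating the deterministic equivalent of the Stieltjes transform supplied by Theorem \ref{mainTh_Stj}. Since $\qS=\qzero$, (\ref{eq:shannonTrans}) reads $\calV_{\qB_N}(\sigma^2)=\int_{\sigma^2}^{\infty}\big(\tfrac{1}{\omega}-m_{\qB_N}(-\omega)\big)\,d\omega$, and the integrand is nonnegative because $\tfrac{1}{\omega}-m_{\qB_N}(-\omega)=\tfrac{1}{N\omega}\tr\!\big[\qB_N(\qB_N+\omega\qI_N)^{-1}\big]$. I would set $\calV_N(\sigma^2):=\int_{\sigma^2}^{\infty}\big(\tfrac{1}{\omega}-\tfrac{1}{N}\tr\qPsi(-\omega)\big)\,d\omega$, with $\qPsi$ the solution of Theorem \ref{mainTh_uniq}, and split the argument into (i) $\Ex\{\calV_{\qB_N}(\sigma^2)\}-\calV_N(\sigma^2)\to0$ and (ii) $\calV_N(\sigma^2)$ equals the closed form (\ref{eq:AsyShannon}).

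For part (i), I first note from (\ref{eq:PsiS}) with $\qS=\qzero$ that $\qPsi(-\omega)^{-1}=\omega\qI_N+\omega\sum_i\tilde e_i(-\omega)\qR_i+\omega\bar\qH\tilde\qPhi(-\omega)\bar\qH^H\succeq\omega\qI_N$, so $0\preceq\qPsi(-\omega)\preceq\tfrac{1}{\omega}\qI_N$. By Tonelli (nonnegative integrand; $\Ex\{\tfrac{1}{N}\tr\qB_N\}=2K<\infty$ by Assumptions \ref{Ass1}, \ref{Ass3} and $\qS=\qzero$) one gets $\Ex\{\calV_{\qB_N}(\sigma^2)\}=\int_{\sigma^2}^{\infty}\big(\tfrac{1}{\omega}-\Ex\{m_{\qB_N}(-\omega)\}\big)\,d\omega$, so the quantity in (i) is $\int_{\sigma^2}^{\infty}g_N(\omega)\,d\omega$ with $g_N(\omega):=\tfrac{1}{N}\tr\qPsi(-\omega)-\Ex\{m_{\qB_N}(-\omega)\}$. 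For each fixed $\omega>0$, Theorem \ref{mainTh_Stj} gives $m_{\qB_N}(-\omega)-\tfrac{1}{N}\tr\qPsi(-\omega)\xrightarrow{a.s.}0$, and since both terms are bounded by $1/\omega$, bounded convergence yields $g_N(\omega)\to0$. To interchange $\lim_N$ with the improper integral I would use a tail bound uniform in $N$: from $\tfrac{1}{\omega}\qI_N-\qPsi(-\omega)=\tfrac{1}{\omega}\qPsi(-\omega)\big(\qPsi(-\omega)^{-1}-\omega\qI_N\big)$ together with $\tilde e_i(-\omega)\le1/\omega$ (because $\tilde\qPsi(-\omega)\preceq\tfrac{1}{\omega}\qI$ and $\tr\qT_i=n_i$), $\tilde\qPhi(-\omega)\preceq\tfrac{1}{\omega}\qI$, $\tr\qR_i=N$ and $\tr(\bar\qH_i\bar\qH_i^H)=N$, one finds $0\le\tfrac{1}{\omega}-\tfrac{1}{N}\tr\qPsi(-\omega)\le\tfrac{2K}{\omega^2}$; likewise $0\le\tfrac{1}{\omega}-\Ex\{m_{\qB_N}(-\omega)\}\le\tfrac{1}{N\omega^2}\Ex\{\tr\qB_N\}=\tfrac{2K}{\omega^2}$. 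Hence $|g_N(\omega)|\le4K/\omega^2$, so $\int_A^{\infty}|g_N|\le4K/A$ uniformly in $N$, while on $[\sigma^2,A]$ dominated convergence applies; sending $A\to\infty$ proves (i).

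For part (ii), I would show that the right-hand side of (\ref{eq:AsyShannon}), regarded as a function of $\sigma^2$ both through $\{e_i(-\sigma^2),\tilde e_i(-\sigma^2)\}$ and explicitly, satisfies $\tfrac{d}{d\sigma^2}\calV_N(\sigma^2)=\tfrac{1}{N}\tr\qPsi(-\sigma^2)-\tfrac{1}{\sigma^2}$ and tends to $0$ as $\sigma^2\to\infty$; together with the Fundamental Theorem of Calculus this identifies it with the integral defining $\calV_N$. The crucial point is that the partial derivatives with respect to the fixed-point variables vanish: using $\partial_x\log\det\qM=\tr(\qM^{-1}\partial_x\qM)$, the identity $\big(\tfrac{1}{\sigma^2}\qPhi(-\sigma^2)^{-1}+\bar\qH\tilde\qPhi\bar\qH^H\big)^{-1}=\sigma^2\qPsi$ (all matrices evaluated at $-\sigma^2$), the Woodbury identity $\qPsi=\qPhi-\sigma^2\qPhi\bar\qH\tilde\qPsi\bar\qH^H\qPhi$ (equivalently $\sigma^2\bar\qH^H\qPsi\bar\qH=\tilde\qPhi^{-1}-\tilde\qPhi^{-1}\tilde\qPsi\tilde\qPhi^{-1}$), and the fixed-point relations (\ref{eq:fixedPoint}), one checks $\partial\calV_N/\partial\tilde e_j=0$ and $\partial\calV_N/\partial e_j=0$, leaving only the explicit $\sigma^2$-derivative. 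That one collapses by means of $\qPsi(-\sigma^2)^{-1}=\sigma^2\big(\qI_N+\sum_i\tilde e_i\qR_i+\bar\qH\tilde\qPhi\bar\qH^H\big)$: taking $\tfrac{1}{N}\tr[\,\cdot\,\qPsi(-\sigma^2)]$ and invoking (\ref{eq:fixedPoint}) yields $\tfrac{1}{\sigma^2}=\tfrac{1}{N}\tr\qPsi+\sum_i e_i\tilde e_i+\tfrac{1}{N}\tr(\qPsi\bar\qH\tilde\qPhi\bar\qH^H)$, which is exactly the combination obtained after differentiating the two $\log\det$ terms and the last term of (\ref{eq:AsyShannon}); the boundary value at $\sigma^2=\infty$ follows since $e_i(-\sigma^2),\tilde e_i(-\sigma^2)=O(1/\sigma^2)$ drives each term of (\ref{eq:AsyShannon}) to $0$.

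I expect the main obstacle to be the bookkeeping in part (ii): correctly differentiating the two $\log\det$ terms of (\ref{eq:AsyShannon}) and orchestrating the Woodbury-type identities relating $\qPhi,\tilde\qPhi,\qPsi,\tilde\qPsi$ so that every contribution proportional to $e_j'(-\sigma^2)$ or $\tilde e_j'(-\sigma^2)$ cancels against another. A secondary technical point is making the tail estimate in part (i) uniform in $N$, for which the normalizations of Assumption \ref{Ass3} and the hypothesis $\qS=\qzero$ are exactly what is needed.
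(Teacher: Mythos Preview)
Your proposal is correct and follows essentially the same approach as the paper: both parts (i) and (ii) match the paper's Appendix~C.1 and~C.3, namely pointwise convergence of the integrand via Theorem~\ref{mainTh_Stj} plus an integrable $O(\omega^{-2})$ dominating function for the convergence, and the derivative identity $\partial\calV_N/\partial\sigma^2=\tfrac{1}{N}\tr\qPsi(-\sigma^2)-\tfrac{1}{\sigma^2}$ obtained by showing the partials in the fixed-point variables vanish. The only notable variation is that you obtain the $2K/\omega^2$ bound on the deterministic side directly from the positive-definiteness structure of $\qPsi(-\omega)^{-1}$ and the normalizations of Assumption~\ref{Ass3}, whereas the paper computes $\int\lambda\,dF_N(\lambda)=2K$ via the limit formula $\lim_{y\to\infty}\Re\{-\sfj y(\sfj y\,\tfrac{1}{N}\tr\qPsi(\sfj y)+1)\}$ borrowed from \cite{Hac-07}; your route is a bit more self-contained but otherwise equivalent.
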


\begin{proof}
(\ref{eq:AsyShannon}) is an explicit expression of $\int_{\sigma^2}^{\infty} \left( \frac{1}{\omega} - \frac{1}{N} \tr\left(\qPsi(-\omega)\right) \right) d\omega$.
The proofs of the convergence and the explicit expression are given in Appendix C.
\end{proof}

For the case with $K = 1$ and $\bar\qH \neq \qzero$ and the general case with $1 \leq K < \infty$ but $\bar\qH = \qzero$, $\calV_N(\sigma^2)$ agrees perfectly with
those in \cite[Theorem 1]{Dum-10} and \cite[Theorem 2]{Cou-09}, respectively. Nevertheless, Theorem \ref{mainTh_Cap} is more general than \cite[Theorem 1]{Dum-10}
and \cite[Theorem 2]{Cou-09} in the sense that there is no Gaussian distribution requirement on the entries of $\qX_k$. Note that in the above two cases, Theorem
\ref{mainTh_Cap} allows $\qR_k$'s and $\qT_k$'s to be generally nonnegative definite. Further, for the general case with $1 \leq K < \infty$ and $\bar\qH \neq
\qzero$, Theorem \ref{mainTh_Cap} contains \cite[Theorem 4.1]{Hac-07} as a special case even though Theorem \ref{mainTh_Cap} requires $\qR_k$'s to be diagonal,
whereas in \cite[Theorem 4.1]{Hac-07}, both $\qR_k$'s and $\qT_k$'s are restricted to be diagonal. Finally, unlike several of other contributions (e.g.,
\cite[Theorem 1]{Dum-10} and \cite[Theorem 4.1]{Hac-07}), where $\qR_k$'s, $\qT_k$'s, and $\bar{\qH}_k$'s are required to have uniformly bounded spectral norms,
Theorem \ref{mainTh_Cap} is valid for the more general trace constraints (\ref{eq:normAss}). This relaxation makes Theorem \ref{mainTh_Cap} valid for all possible
correlation patterns and LOS components.

As the large-system results are invariant to the type of fading distribution, any designs based on the large-system results are robust, and the properties of the
asymptotic optimal input covariance are invariant to the type of fading distribution. Specifically, by \cite[Proposition 3]{Cou-09}, we conclude that if $\bar\qH
=\qzero$, even when the entries of $\qX_k$ are non-Gaussian, the eigenvectors of the asymptotic optimal input covariance matrix align with that of $\qT_k$ while
the eigenvalues follows a water-filling principle. In \cite{Dum-10,Cou-09}, an iterative water-filling algorithm based on $\calV_N(\sigma^2)$ is provided to obtain
the asymptotic optimal input covariance. The iterative algorithm turns out to have wide applicability to all types of fading distribution.

Unlike \cite[Theorem 2]{Cou-09}, Theorem \ref{mainTh_Cap} does not assert that $\calV_{\qB_N}(\sigma^2) - \calV_N(\sigma^2) \xrightarrow{a.s.} 0$. Although
(\ref{eq:a4}) has already satisfied our applications of interest, we find it important to clarify some properties regarding the a.s.~convergence. Indeed, following
\cite[Theorem 2]{Cou-09} and using Theorem \ref{mainTh_Stj}, (\ref{eq:a4}) can be strengthened to a.s.~convergence under an additional assumption stated in the
following theorem.
\begin{Theorem} \label{AsConv_Cap}
In addition to the assumptions of Theorem \ref{mainTh_Cap}, suppose further that
\begin{enumerate}
\item[1)] $\Ex\left\{ |X_{11}^{(k)}|^4 \right\} < \infty$;

\item[2)]
There exists an $\alpha$ and a sequence $\tau_N$ such that for all $N$,
\[
    \max_{k}\,\max\left\{ \lambda_{\tau_N+1}(\qR_k), \lambda_{\tau_N+1}(\qT_k), \lambda_{\tau_N+1}(\bar{\qH}_k\bar{\qH}_k^H) \right\} \leq \alpha,
\]
where $\lambda_i(\qA)$ denotes the $i$th largest eigenvalue of a matrix $\qA$.

\item[3)]
Let $b_N$ denote an upper-bound on the spectral norm of $\{\qT_k, \qR_k, \bar{\qH}_k\bar{\qH}_k^H\}_{1 \leq k \leq K}$, and $c > 0$ a constant such that $c >
\frac{ K \beta_{\max}}{\beta_{\min}} \left(1+\sqrt{\beta_{\min}}\right)^2$, $a_N = c b_N^2 $ satisfies $$\tau_N
\log{\left(1+\frac{a_N}{\sigma^2}\right)}=o(N).$$
\end{enumerate}
Then, (\ref{eq:a4}) can be strengthened as
\begin{equation} \label{eq:a5}
\calV_{\qB_N}(\sigma^2) - \calV_N(\sigma^2) \xrightarrow{a.s.} 0.
\end{equation}
\end{Theorem}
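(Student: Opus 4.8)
The plan is to write
\[
\calV_{\qB_N}(\sigma^2) - \calV_N(\sigma^2) = \big(\calV_{\qB_N}(\sigma^2) - \Ex\{\calV_{\qB_N}(\sigma^2)\}\big) + \big(\Ex\{\calV_{\qB_N}(\sigma^2)\} - \calV_N(\sigma^2)\big) ,
\]
where the second bracket already tends to $0$ by Theorem \ref{mainTh_Cap} (whose proof in turn rests on Theorem \ref{mainTh_Stj}). Everything therefore reduces to the concentration statement $\calV_{\qB_N}(\sigma^2) - \Ex\{\calV_{\qB_N}(\sigma^2)\} \xrightarrow{a.s.} 0$, which I would establish in the spirit of \cite[Theorem 2]{Cou-09}. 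The two features not present there are that the entries of $\qX_k$ are non-Gaussian, which rules out a Gaussian concentration inequality and forces a martingale argument, and that the spectral norms of $\qR_k,\qT_k,\bar{\qH}_k\bar{\qH}_k^H$ are not uniformly bounded; the latter is precisely what the extra hypotheses 1)--3) are designed to control.

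First I would reduce to a model with uniformly bounded deterministic matrices. For each $k$ let $\qR_k^{>}$ carry the eigenvalues of $\qR_k$ exceeding $\alpha$ (so $\rank\qR_k^{>} \le \tau_N$ by hypothesis 2)) and set $\qR_k^{<} = \qR_k - \qR_k^{>}$ (so $\|\qR_k^{<}\| \le \alpha$), and split $\qT_k$ and $\bar{\qH}_k$ analogously (the latter through the eigenspaces of $\bar{\qH}_k\bar{\qH}_k^H$, giving $\|\bar{\qH}_k^{<}\|^2 \le \alpha$). Let $\qB_N^{b}$ be the matrix built from the $^{<}$-pieces only; then $\qB_N - \qB_N^{b}$ is Hermitian of rank $O(\tau_N)$. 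By the standard rank inequality for $\frac{1}{N}\log\det(\qI_N + \sigma^{-2}\,\cdot\,)$, together with the a.s.\ bound $\|\qB_N\| \le a_N = c\,b_N^{2}$ for all large $N$ --- which follows from the Bai--Yin law $\|\qX_k\|^{2} \to (1+\sqrt{\beta_k})^{2}$ a.s., available because $\Ex|X_{11}^{(k)}|^{4} < \infty$ (hypothesis 1)), and from the choice of $c$ in hypothesis 3) --- one obtains, almost surely,
\[
\big|\calV_{\qB_N}(\sigma^2) - \calV_{\qB_N^{b}}(\sigma^2)\big| \;\le\; \frac{C\tau_N}{N}\,\log\!\Big(1 + \frac{a_N}{\sigma^2}\Big) \;=\; o(1)
\]
by hypothesis 3); taking expectations and applying Jensen's inequality together with $\Ex\|\qB_N\| = O(b_N^{2})$ and $\Ex\|\qB_N^{b}\| = O(1)$ yields the same $o(1)$ bound for $\big|\Ex\{\calV_{\qB_N}(\sigma^2)\} - \Ex\{\calV_{\qB_N^{b}}(\sigma^2)\}\big|$ (note that, since $b_N \ge 1$, hypothesis 3) forces $\tau_N = o(N)$). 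Hence it suffices to prove $\calV_{\qB_N^{b}}(\sigma^2) - \Ex\{\calV_{\qB_N^{b}}(\sigma^2)\} \xrightarrow{a.s.} 0$ for the bounded model.

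For the bounded model I would apply the Burkholder/martingale-difference technique (see \cite{Bai-09}). Ordering the $n = \sum_k n_k$ columns of $[\,\qX_1\ \cdots\ \qX_K\,]$ and letting $\calF_\ell$ be the $\sigma$-field generated by the first $\ell$ of them, write $\calV_{\qB_N^{b}}(\sigma^2) - \Ex\{\calV_{\qB_N^{b}}(\sigma^2)\} = \sum_{\ell=1}^{n}\gamma_\ell$ with $\gamma_\ell = \Ex[\,\calV_{\qB_N^{b}}(\sigma^2)\,|\,\calF_\ell\,] - \Ex[\,\calV_{\qB_N^{b}}(\sigma^2)\,|\,\calF_{\ell-1}\,]$. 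Changing the $\ell$-th column of $[\,\qX_1\ \cdots\ \qX_K\,]$ perturbs $\qH$ by a rank-one matrix, hence $\qH^H\qH$ by a Hermitian matrix of rank at most two, so the $\log\det$ rank/interlacing inequality gives $|\gamma_\ell| = O\!\big(\tfrac{1}{N}\log(1+\|\qH^{b}\|^{2}/\sigma^{2})\big)$ and, using $\Ex\|\qX_k\|^{2} = O(1)$ (a consequence of the finite fourth moment), $\Ex\gamma_\ell^{2} = O(1/N^{2})$. Burkholder's inequality then yields
\[
\Ex\big|\calV_{\qB_N^{b}}(\sigma^2) - \Ex\{\calV_{\qB_N^{b}}(\sigma^2)\}\big|^{4} \;\le\; C_4\Big(\sum_{\ell=1}^{n}\Ex\gamma_\ell^{2}\Big)^{2} \;=\; O\!\Big(\frac{1}{N^{2}}\Big)
\]
(since $n/N$ is bounded), and Markov's inequality together with the Borel--Cantelli lemma gives the a.s.\ convergence; combining this with the two reductions above and with Theorem \ref{mainTh_Cap} yields (\ref{eq:a5}).

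The hard part is the truncation bookkeeping in the second step: one must verify that the rank-$O(\tau_N)$ surgery producing the bounded model changes $\calV_{\qB_N}(\sigma^2)$, and in particular also its expectation, by only $o(1)$, and this is exactly where the three hypotheses act together --- 1) to invoke the Bai--Yin law and control $\|\qB_N\|$ (as well as the residual expectation over the exceptional event on which the law has not yet kicked in), 2) to bound the rank of the surgery by $O(\tau_N)$, and 3) to make $\frac{\tau_N}{N}\log(1+a_N/\sigma^{2})$ vanish with the prescribed constant $c$. Once the passage to a uniformly bounded model is justified, the martingale estimate is routine and the non-Gaussianity causes no further difficulty. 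Alternatively, one could integrate the locally uniform a.s.\ convergence supplied by Theorem \ref{mainTh_Stj} through $\calV_{\qB_N}(\sigma^2) = \int_{\sigma^{2}}^{\infty}\big(\omega^{-1} - m_{\qB_N}(-\omega)\big)\,d\omega$, but this route needs the very same truncation in order to make the tail $\int_{T}^{\infty}$ estimate uniform in $N$.
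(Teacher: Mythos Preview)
Your strategy is sound and matches what the paper does, which is simply to defer to \cite[Theorem 2]{Cou-09}: the truncation of the top $\tau_N$ eigenvalues of $\qR_k,\qT_k,\bar\qH_k\bar\qH_k^H$, the rank--$O(\tau_N)$ bound on the resulting perturbation of $\qB_N$, the use of hypothesis 1) via Bai--Yin to control $\|\qB_N\|$ by $a_N$, and hypothesis 3) to make $\frac{\tau_N}{N}\log(1+a_N/\sigma^{2})=o(1)$, are exactly the ingredients of that argument. After the reduction to uniformly bounded deterministic data, \cite{Cou-09} (and, by analogy, the paper in its treatment of Remark 2 in Appendix C.2) uses the integral representation $\calV_{\qB_N}(\sigma^2)=\int_{\sigma^2}^{\infty}(\omega^{-1}-m_{\qB_N}(-\omega))\,d\omega$ together with the a.s.\ Stieltjes-transform convergence of Theorem \ref{mainTh_Stj} and dominated convergence; you take instead a direct martingale/Burkholder concentration route for $\calV_{\qB_N^{b}}(\sigma^2)-\Ex\{\calV_{\qB_N^{b}}(\sigma^2)\}$, which is an equally valid alternative (and is the one you flag as ``alternative'' at the end).

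One slip to fix: Burkholder's inequality gives $\Ex\big|\sum_\ell\gamma_\ell\big|^{4}\le C_4\,\Ex\big(\sum_\ell\gamma_\ell^{2}\big)^{2}$, not $C_4\big(\sum_\ell\Ex\gamma_\ell^{2}\big)^{2}$. You recover the desired $O(N^{-2})$ bound by first applying Lemma \ref{sumIneq} to get $\Ex\big(\sum_\ell\gamma_\ell^{2}\big)^{2}\le n\sum_\ell\Ex\gamma_\ell^{4}$ and then using $|\gamma_\ell|\le \frac{C}{N}\log(1+\sigma^{-2}\|\qB_N^{b}\|)$ together with $\Ex\log^{4}(1+\sigma^{-2}\|\qB_N^{b}\|)=O(1)$ for the bounded model (which indeed follows from the finite fourth moment). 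With this correction your argument goes through.
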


\begin{Remark}
Indeed, (\ref{eq:a5}) holds if assumptions 2) and 3) of Theorem \ref{AsConv_Cap} are replaced by the assumptions that $\frac{1}{N}\tr(\qR_k^4)$,
$\frac{1}{n_k}\tr(\qT_k^4)$, and $\frac{1}{N}\tr\left((\bar{\qH}_k\bar{\qH}_k^H)^2\right)$ are bounded for all $k$. A proof is given in Appendix C.2.
\end{Remark}

Note that neither of the assumptions in Theorem \ref{AsConv_Cap} and Remark 2 implies the other. It was pointed out in \cite{Cou-09} that most conventional models
for $\qR_k$'s and $\qT_k$'s satisfy the assumptions of Theorem \ref{AsConv_Cap}. However, the assumptions in Remark 2 assist in covering some cases that are not
met by 2) and 3) of Theorem \ref{AsConv_Cap}.\footnote{As an example, suppose that the eigenvalues of $\qR_k$ are given as: one being $N^\frac{1}{4}$,
$\frac{N}{\log N}$ of them being $(\log N)^\frac{1}{4}$, and the remaining eigenvalues being bounded. In this case, $\tau_N = \frac{N}{\log N}$ and $a_N =
O(N^\frac{1}{2})$, so $\tau_N \log(1+\frac{a_N}{\sigma^2}) = O(N)$ rather than $o(N)$. Therefore, assumptions 2) and 3) of Theorem \ref{AsConv_Cap} are not
satisfied. However, $\frac{1}{N}\tr(\qR_k^4)$ is bounded.}

\subsection*{C. Proof of Theorem \ref{mainTh_Stj}}
This subsection gives an outline of the proof of Theorem \ref{mainTh_Stj} for ease of understanding.

As in \cite[Section 4.5.1]{Bai-10}, we argue that the entries of $\qX_k$ can be replaced by random variables bounded in absolute value by
$\varepsilon_{n}\sqrt{n_k}$ without changing the LSD of $F_{\qB_N}$. Here $\varepsilon_n$ is a positive sequence converging to zero. Also, following \cite[Section
4.3.1]{Bai-10} we apply a truncation on $\qR_k, \qT_k, \bar\qH_k\bar\qH_k^H$ such that the spectral norms of $\qR_k$, $\qT_k$, and $\bar{\qH}_k\bar{\qH}_k^H$ are
bounded by a constant, say $\alpha$ (see details in Appendix A.1). As a consequence, the proof of Theorem \ref{mainTh_Stj} may be achieved under the following
additional conditions.
\begin{Assumption}\label{Ass4}
For each of $N$, $X_{ij}^{(k)}$ are i.i.d., and
\begin{subequations} \label{eq:asumXH}
\begin{align}
 & \Ex \left\{X_{11}^{(k)}\right\}=0, ~ \Ex\left\{|X_{11}^{(k)}|^2\right\}=1,~|X_{11}^{(k)}| \leq \varepsilon_{n}\sqrt{n_k},\\
 & \max_{k=1,\dots,K}\,\max\{\|\qR_k\|,\|\qT_k\|,\|\bar{\qH}_k\bar{\qH}_k^H\|\} \leq \alpha.
\end{align}
\end{subequations}
\end{Assumption}
For convenience, we still use $\qX_k$, $\qR_k$, $\qT_k$, and $\bar{\qH}_k$ to denote those truncated and centralized matrices.

Write
\begin{multline}\label{eq:threeStep}
m_{\qB_N}(z)-\frac{1}{N} \tr\left(\qPsi(z)\right)\\
=\left(m_{\qB_N}(z) - \Ex \{m_{\qB_N}(z)\} \right)+ \left(\Ex \{ m_{\qB_N}(z) \} - \Ex \{ m_{\calqB_N}(z) \} \right)+\left(\Ex\{ m_{\calqB_N}(z) \} - \frac{1}{N}
\tr\left(\qPsi(z)\right)\right),
\end{multline}
where $\calqB_N$ is obtained from $\qB_N$ defined in (\ref{eq:main_model}) with all the entries $X_{ij}^{(k)}$'s of $\qX_k$ replaced by independent standard {\em
Gaussian} random variables $\calX_{ij}^{(k)}$'s. Here $X_{ij}^{(k)}$'s are independent of $\calX_{ij}^{(k)}$'s.

The proof of Theorem \ref{mainTh_Stj} then consists of the following three steps:
\begin{itemize}
\item[{\bf Step 1.}]
By a martingale approach we first prove that
\begin{equation} \label{eq:m-em}
    m_{\qB_N}(z) - \Ex \{m_{\qB_N}(z)\} \buildrel{a.s.}\over{\longrightarrow} 0.
\end{equation}

\item[{\bf Step 2.}]
By the Lindeberg principle (Lemma \ref{Lindeberg} below) \cite[Theorem 2]{Kor-11} we claim that
\begin{equation}\label{eq:GAndNoGDiff}
    \Ex \{ m_{\qB_N}(z) \} - \Ex \{ m_{\calqB_N}(z) \} \longrightarrow 0.
\end{equation}

\item[{\bf Step 3.}]
The last step is to investigate the Stieltjes transform of $\calqB_N$ so that the following is true:
\begin{equation}\label{eq:ExMsM}
    \Ex\{ m_{\calqB_N}(z) \} - \frac{1}{N} \tr \qPsi(z) \longrightarrow 0.
\end{equation}
\end{itemize}

Not that Step 1 and Step 2 are completed under Assumptions \ref{Ass1}--\ref{Ass3} in which $\qR_k$'s and $\qT_k$'s are generally nonnegative definite and $\bar\qH
\neq \qzero$. Appendix A.2 handles Step 1 while Step 2 is addressed in Appendix A.3, where we mainly make use of the generalized Lindeberg principle given below.
\begin{Lemma}\label{Lindeberg}
{\rm (Generalized Lindeberg Principle \cite{Kor-11})} Let $\qv = [v_i] \in \bbR^{n}$ and $\tilde\qv = [\tilde{v}_i] \in \bbR^{n}$ be two random vectors with
mutually independent components. Define $\{a_i\}_{1\leq i \leq n}$ and $\{b_i\}_{1\leq i \leq n}$ with
\begin{equation} \label{eq:LindeAB}
a_i = |\Ex \{v_i\} - \Ex \{\tilde{v}_i\}|,~~\mbox{and}~~b_i = |\Ex \{v_i^2\} - \Ex \{\tilde{v}_i^2\}|.
\end{equation}
Then, given a thrice continuously differentiable function $f: \bbR^n \rightarrow \bbR$, we have
\begin{multline}\label{eq:LindeCheck}
|\Ex \{f(\qv)\} - \Ex \{f(\tilde\qv)\}|\leq\sum_{i=1}^{n}\left[a_i \, \Ex \{|\partial_i f(\qv_1^{i-1},0,\tilde{\qv}_{i+1}^{n})|\} + \frac{1}{2} b_i \, \Ex \{|\partial_i^2 f(\qv_1^{i-1},0,\tilde{\qv}_{i+1}^{n})|\}\right.\\
+\frac{1}{2} \Ex \left\{ \int_{0}^{v_i} |\partial_i^3 f(\qv_1^{i-1},s,\tilde{\qv}_{i+1}^{n})| (v_i-s)^2 ds \right\}\\
\left.+\frac{1}{2} \Ex \left\{ \int_{0}^{\tilde{v}_i} |\partial_i^3 f(\qv_1^{i-1},s,\tilde{\qv}_{i+1}^{n})| (\tilde{v}_i-s)^2 ds \right\}\right],
\end{multline}
where $\partial_i^p$ is the $p$-fold derivative in the $i$th coordinate, $\qv_1^{i-1} = (v_1,\dots, v_{i-1})$, and $\tilde{\qv}_{i+1}^{n}
=(\tilde{v}_{i+1},\dots,\tilde{v}_{n})$.
\end{Lemma}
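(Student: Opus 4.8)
The plan is to carry out the classical Lindeberg swapping argument: pass from $\tilde\qv$ to $\qv$ one coordinate at a time, comparing the two values of $f$ at each swap through a second-order Taylor expansion in the single coordinate being changed and collecting the cubic terms into a remainder. Concretely, I would introduce the hybrid vectors $\qw^{(i)} = (v_1,\dots,v_i,\tilde v_{i+1},\dots,\tilde v_n)$ for $0\le i\le n$, so that $\qw^{(0)}=\tilde\qv$, $\qw^{(n)}=\qv$, and $\qw^{(i)}$ differs from $\qw^{(i-1)}$ only in the $i$th slot (carrying $v_i$ versus $\tilde v_i$). The telescoping identity $f(\qv)-f(\tilde\qv)=\sum_{i=1}^n\big(f(\qw^{(i)})-f(\qw^{(i-1)})\big)$ together with the triangle inequality then reduces the claim to bounding $|\Ex\{f(\qw^{(i)})\}-\Ex\{f(\qw^{(i-1)})\}|$ for each $i$ by the $i$th summand of (\ref{eq:LindeCheck}).

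For a fixed $i$, I would anchor the expansion at the configuration $\qu^{(i)}=(\qv_1^{i-1},0,\tilde\qv_{i+1}^{n})$, which plants a $0$ in the $i$th coordinate and is thus a function of $(v_1,\dots,v_{i-1},\tilde v_{i+1},\dots,\tilde v_n)$ only. Taylor's formula with integral remainder applied in the $i$th coordinate around $\qu^{(i)}$ gives
\[
f(\qw^{(i)}) = f(\qu^{(i)}) + v_i\,\partial_i f(\qu^{(i)}) + \tfrac12 v_i^2\,\partial_i^2 f(\qu^{(i)}) + \tfrac12\!\int_0^{v_i}\!\partial_i^3 f(\qv_1^{i-1},s,\tilde\qv_{i+1}^{n})(v_i-s)^2\,ds,
\]
and the same identity with $v_i$ replaced by $\tilde v_i$ for $f(\qw^{(i-1)})$. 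Subtracting, the zeroth-order term cancels and I am left with a linear term $(v_i-\tilde v_i)\partial_i f(\qu^{(i)})$, a quadratic term $\tfrac12(v_i^2-\tilde v_i^2)\partial_i^2 f(\qu^{(i)})$, and the difference of the two cubic integral remainders.

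The decisive step is to take expectations and use independence. Because the components of $\qv$ and of $\tilde\qv$ are mutually independent and $\qu^{(i)}$ involves neither $v_i$ nor $\tilde v_i$, both $v_i$ and $\tilde v_i$ are independent of $\partial_i f(\qu^{(i)})$ and of $\partial_i^2 f(\qu^{(i)})$, so $\Ex\{(v_i-\tilde v_i)\partial_i f(\qu^{(i)})\}=(\Ex\{v_i\}-\Ex\{\tilde v_i\})\Ex\{\partial_i f(\qu^{(i)})\}$ and analogously for the quadratic term. Taking absolute values then produces exactly the factors $a_i$ and $\tfrac12 b_i$ in front of $\Ex\{|\partial_i f(\qu^{(i)})|\}$ and $\Ex\{|\partial_i^2 f(\qu^{(i)})|\}$, while the cubic remainders are bounded pointwise by moving absolute values inside the integral (legitimate since $(v_i-s)^2\ge 0$), yielding the last two integral expressions of (\ref{eq:LindeCheck}). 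Summing these per-coordinate bounds over $i$ gives the inequality.

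I expect this to be bookkeeping rather than a genuine obstacle; the single point that needs care is anchoring the Taylor expansion at the configuration with a $0$ in slot $i$ rather than at $\qw^{(i-1)}$, since only that choice makes the anchor independent of both $v_i$ and $\tilde v_i$ and thereby collapses the linear and quadratic contributions into the clean $a_i$- and $b_i$-multiples of expected derivatives. Integrability of the right-hand side is taken for granted here; in the applications of this paper $f$ has polynomially bounded derivatives and $\qv,\tilde\qv$ have enough moments for all the expectations appearing to be finite.
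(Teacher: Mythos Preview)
Your proposal is correct and is precisely the standard Lindeberg swapping argument: telescope over hybrid vectors, Taylor-expand in the single changing coordinate around the anchor with a $0$ in that slot, and exploit independence to factor the first- and second-order expectations. The paper does not supply its own proof of this lemma---it is quoted from \cite{Kor-11}---and your argument matches the one given there, so there is nothing to compare.
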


Here, it should be noted that both Lindeberg's principle (Chatterjee \cite{Cha-06}, Korada and Montanari \cite{Kor-11}) and the interpolation trick (see Lytova and
Pastur in \cite{Lyt-09} and Pan \cite{Pan-10}) can be used to handle Step 2. However, Lindeberg's principle is simpler than the interpolation trick when proving
this type of problems.

Due to Lemma \ref{Lindeberg}, the remaining task is to consider $\calqB_N$ with underlying random variables being a standard Gaussian distribution. In the
remainder of this subsection, we will show how Step 3 can be done case by case from the known results of Gaussian matrices.

Before proceeding, we recall some useful results. Denote the spectral decomposition of $\qT_k^\frac{1}{2}$ by $\tilde\qU_k \tilde\qD_k^\frac{1}{2}\tilde\qU_k^H$,
where $\tilde\qU_k \in\bbC^{n_k \times n_k}$ is unitary and $\qD_k$ is diagonal. Since $\calqX_k$ is Gaussian, the joint distribution of $\calqX_k \tilde\qU_k$ is
the same as that of $\calqX_k$. Thus, $(\qR_k^\frac{1}{2} \calqX_k \qT_k^\frac{1}{2} +\bar{\qH}_k)(\qR_k^\frac{1}{2} \calqX_k \qT_k^\frac{1}{2} + \bar{\qH}_k)^H$
has the same distribution as
\begin{equation}\label{eq:RXD}
\left(\qR_k^\frac{1}{2} \calqX_k \tilde\qD_k^\frac{1}{2} + \bar{\bar\qH}_{k}\right)\left(\qR_k^\frac{1}{2} \calqX_k \tilde\qD_k^\frac{1}{2} + \bar{\bar\qH}_{k}\right)^H,
\end{equation}
where $\bar{\bar\qH}_{k}\triangleq\bar{\qH}_k\tilde\qU_k$. Hence, we will assume that the channel is in the form of
$\qR_k^\frac{1}{2}\calqX_k\tilde\qD_k^\frac{1}{2} + \bar{\bar\qH}_{k}$ in the sequel.

Consider condition 1) of Theorem \ref{mainTh_Stj} (i.e., the case $K=1$) first. Denote the spectral decomposition of $\qR_1^\frac{1}{2}$ by $\qU_1
\qD_1^\frac{1}{2}\qU_1^H$. For simplicity, we assume $\qS = \qzero$.\footnote{The extension to the case with $\qS\ne{\bf 0}$ is straightforward.} The joint
distribution of $\qU_k^H \calqX_k $ is the same as that of $\calqX_k $. Therefore, the distribution of $(\qR_1^\frac{1}{2}\calqX_1 \tilde\qD_1^\frac{1}{2} +
\bar{\bar\qH}_{1})(\qR_1^\frac{1}{2} \calqX_1 \tilde\qD_1^\frac{1}{2} + \bar{\bar\qH}_{1})^H$ is the same as that of $(\qU_1\qD_1^\frac{1}{2}
\calqX_1\tilde\qD_1^\frac{1}{2} + \bar{\bar\qH}_{1})(\qU_1\qD_1^\frac{1}{2} \calqX_1 \tilde\qD_1^\frac{1}{2} + \bar{\bar\qH}_{1})^H$. Since
\begin{multline}
\tr\left( \left( (\qU_1\qD_1^\frac{1}{2}\calqX_1\tilde\qD_1^\frac{1}{2} + \bar{\bar\qH}_{1})(\qU_1\qD_1^\frac{1}{2} \calqX_1 \tilde\qD_1^\frac{1}{2} +
\bar{\bar\qH}_{1})^H -z \qI_N \right)^{-1} \right)
 \\ =\tr\left( \left( (\qD_1^\frac{1}{2} \calqX_1\tilde\qD_1^\frac{1}{2} + \bar{\bar{\bar\qH}}_{1})(\qD_1^\frac{1}{2} \calqX_1 \tilde\qD_1^\frac{1}{2} + \bar{\bar{\bar\qH}}_{1})^H -z \qI_N \right)^{-1} \right)
\end{multline}
with $\bar{\bar{\bar\qH}}_{1} \triangleq\qU_1^H\bar{\qH}_1\tilde\qU_1$, it suffices to prove (\ref{eq:ExMsM}) with $\calqB_N$ as follows
\begin{equation} \label{eq:BN_GaussianSU}
    \calqB_N = \left(\qD_1^\frac{1}{2} \calqX_1\tilde\qD_1^\frac{1}{2} + \bar{\bar{\bar\qH}}_{1}\right)\left(\qD_1^\frac{1}{2} \calqX_1 \tilde\qD_1^\frac{1}{2} + \bar{\bar{\bar\qH}}_{1}\right)^H.
\end{equation}
However, the convergence of (\ref{eq:ExMsM}) with $\calqB_N$ in (\ref{eq:BN_GaussianSU}) was reported in \cite[Section 3.2]{Hac-07}. Indeed, in \cite{Hac-07},
$X_{ij}^{(1)}$'s are i.i.d.~complex random variables with finite fourth moment and the variance profile is separable. For our interest, $\calX_{ij}^{(1)}$'s are
standard Gaussian random variables and the variance profile of the channel $\qD_1^\frac{1}{2} \calqX_1\tilde\qD_1^\frac{1}{2}$ is separable. Hence, Step 3 is
completed under condition 1) of Theorem \ref{mainTh_Stj}.

Next, we turn to condition 2) of Theorem \ref{mainTh_Stj} (i.e., the case $\bar\qH = \qzero$). In view of (\ref{eq:RXD}), it is enough to prove (\ref{eq:ExMsM})
with $\calqB_N$ as
\begin{equation} \label{eq:BN_GaussianMUwoLOS}
\calqB_N = \qS + \sum_{k=1}^{K} \qR_k^\frac{1}{2} \calqX_k \tilde\qD_k \calqX_k^H \qR_k^\frac{1}{2}.
\end{equation}
The convergence of (\ref{eq:ExMsM}) with $\calqB_N$ in (\ref{eq:BN_GaussianMUwoLOS}) follows immediately from \cite[Corollary 1]{Cou-09}. Therefore, Step 3 is
finished under condition 2) of Theorem \ref{mainTh_Stj}.

Finally, we consider condition 3) of Theorem \ref{mainTh_Stj} (i.e., $\qR_k$'s being diagonal). Note that $\qR_k$'s are diagonal nonnegative matrices in the
remainder of this subsection. From (\ref{eq:RXD}), it suffices to prove (\ref{eq:ExMsM}) with $\calqB_N$ as
\begin{equation} \label{eq:BN_GaussianMUwLOS}
\calqB_N = \qS + \sum_{k=1}^{K} \left( \qR_k^\frac{1}{2} \calqX_k \tilde\qD_k^\frac{1}{2} + \bar{\bar\qH}_k \right)\left(\qR_k^\frac{1}{2} \calqX_k \tilde\qD_k^\frac{1}{2} + \bar{\bar\qH}_k \right)^H.
\end{equation}
The following theorem contributes to Step 3 in this case.

\begin{Theorem} \label{Theorem_Hac07}
Consider the channel matrix of the form $\qH_k=\qR_k^\frac{1}{2} \calqX_k \tilde\qD_k^\frac{1}{2} + \bar{\bar{\qH}}_k$ for $k=1,\dots,K$, where $\qR_k$'s and
$\tilde\qD_k$'s are diagonal nonnegative matrices. Assume that the spectral norms of $\qR_k$'s, $\tilde\qD_k$'s, and $\bar{\bar{\qH}}_k$'s are all bounded and that
$\calX_{ij}^{(k)}$'s are i.i.d.~standard Gaussian random variables. Let $\bar{\bar\qH}=\left[\bar{\bar\qH}_1\cdots \bar{\bar\qH}_K \right]$. As $\largeN\rightarrow
\infty$, then for any $z \in \bbC-\bbR^+$,
\begin{subequations} \label{eq:asyHH}
\begin{align}
\frac{1}{N} \Ex \left\{ \tr \left( \qH \qH^H -z\qI \right)^{-1} \right\} - \frac{1}{N} \tr( \qXi(z)) &\longrightarrow 0,\label{eq:asyQHH1}\\
\frac{1}{N} \Ex \left\{ \tr \left( \qH^H \qH -z\qI \right)^{-1} \right\} - \frac{1}{N} \tr( \tilde\qXi(z)) &\longrightarrow 0,\label{eq:asyHH2}
\end{align}
\end{subequations}
where
\begin{subequations}
\begin{align}
\qXi(z) &= \left( \qTheta(z)^{-1} - z \bar{\bar{\qH}} \tilde{\qTheta}(z) \bar{\bar{\qH}}^H \right)^{-1}, \label{eq:Xi}\\
\tilde{\qXi}(z) &= \left( \tilde{\qTheta}(z)^{-1} - z \bar{\bar{\qH}}^H \qTheta(z) \bar{\bar{\qH}} \right)^{-1}, \label{eq:tXi}\\
\qTheta(z) &= \frac{-1}{z}\left( \sum_{i=1}^{K}  n_i^{-1} \tr \left(\tilde\qD_i \ang{\tilde{\qXi}(z)}_{i} \right) \qR_i + \qI_N \right)^{-1}, \\
\tilde{\qTheta}(z) &= \frac{-1}{z}\diag \left( \left(\qI_{n_1} + n_1^{-1} \tr \left(\qR_1 \qXi(z)\right) \tilde\qD_1\right)^{-1},\dots, \left(\qI_{n_K} + n_K^{-1} \tr \left(\qR_K \qXi(z)\right) \tilde\qD_K\right)^{-1} \right). \label{eq:tTheta}
\end{align}
\end{subequations}
\end{Theorem}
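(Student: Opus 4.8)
The plan is to recognize Theorem \ref{Theorem_Hac07} as the specialization, to a \emph{block-separable variance profile}, of the deterministic-equivalent result of Hachem \emph{et al.}~\cite{Hac-07} for a large random matrix with an arbitrary bounded variance profile plus a deterministic (LOS) component, and then to check that the general fixed-point system of \cite{Hac-07} collapses to the system $(\qTheta,\tilde\qTheta,\qXi,\tilde\qXi)$ in the statement.

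First I would reformulate the model. Since $\qR_k=\diag(r_{k,1},\dots,r_{k,N})$ and $\tilde\qD_k=\diag(d_{k,1},\dots,d_{k,n_k})$ are diagonal, the $(i,j)$ entry of $\qR_k^{1/2}\calqX_k\tilde\qD_k^{1/2}$ equals $\tfrac{1}{\sqrt{n_k}}\sqrt{r_{k,i}d_{k,j}}\,\calX^{(k)}_{ij}$; hence the stacked matrix $\qH=[\qH_1\cdots\qH_K]\in\bbC^{N\times n}$, $n=\sum_k n_k$, has the form $\qH=\bar{\bar\qH}+\qN$ with $\bar{\bar\qH}=[\bar{\bar\qH}_1\cdots\bar{\bar\qH}_K]$ deterministic and $\qN$ having independent complex Gaussian entries of (block-separable) variances $\sigma^2_{i,(k,j)}=\tfrac{1}{n_k}r_{k,i}d_{k,j}$. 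Because $\|\qR_k\|$, $\|\tilde\qD_k\|$, $\|\bar{\bar\qH}_k\bar{\bar\qH}_k^H\|$ are bounded and $N/n_k$ stays in a compact subset of $(0,\infty)$, the hypotheses of the deterministic-equivalent theorem of \cite{Hac-07} (bounded variance profile, bounded deterministic part, bounded aspect ratio, i.i.d.\ entries with finite fourth moment, which standard Gaussians satisfy) hold. If \cite{Hac-07} is phrased for a single matrix with one overall normalization, the present $K$-user, block-normalized version follows by rerunning the same Gaussian-tools argument (integration by parts and the Poincar\'e--Nash inequality) verbatim on $\qN$, since only independence of the entries and boundedness of the profile are used.

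Then I would invoke \cite{Hac-07} to obtain $\tfrac1N\Ex\{\tr(\qH\qH^H-z\qI)^{-1}\}-\tfrac1N\tr\qXi(z)\to 0$ and the companion statement for $\qH^H\qH$ (available in the same framework, or deduced from the standard relation between the resolvents of $\qH\qH^H$ and $\qH^H\qH$), where $\qXi(z)$ is built from the general-profile fixed point. What remains is bookkeeping: in that fixed point, the diagonality of $\qR_k$ and $\tilde\qD_k$ makes the row- and column-sums defining the $i$th and $(k,j)$th coordinates factor through the single scalars $\tfrac1{n_k}\tr\bigl(\tilde\qD_k\langle\tilde\qXi(z)\rangle_k\bigr)$ and $\tfrac1{n_k}\tr\bigl(\qR_k\qXi(z)\bigr)$; substituting these into the \cite{Hac-07} expressions for the deterministic-equivalent resolvent and co-resolvent reproduces exactly $\qTheta,\tilde\qTheta$ as in (\ref{eq:tTheta}) and hence $\qXi,\tilde\qXi$ as in (\ref{eq:Xi})--(\ref{eq:tXi}). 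Existence and uniqueness of this fixed point for $z\in\bbC-\bbR^+$ is Theorem \ref{mainTh_uniq} (with $\qS=\qzero$; the extra spectral-norm bounds only make that argument easier), so no separate fixed-point analysis is needed.

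I expect the main obstacle to be organizational, not conceptual: one must thread the multi-user block structure and the two normalizations --- the $\tfrac1{n_k}$ inside $\qN$ versus the $\tfrac1N$ in the Stieltjes transform and in the $e_i/\tilde e_i$ relations --- correctly through the \cite{Hac-07} equations; and, for a self-contained derivation, one must re-run the Gaussian integration-by-parts estimates so that the cross-terms between $\calqX_k$ and the LOS part $\bar{\bar\qH}_k$ assemble into the corrections $-z\bar{\bar\qH}\tilde\qTheta\bar{\bar\qH}^H$ and $-z\bar{\bar\qH}^H\qTheta\bar{\bar\qH}$, which is exactly where boundedness of the spectral norms is used to keep every remainder $O(1/N)$.
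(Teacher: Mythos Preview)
Your approach is essentially identical to the paper's: recognize that with $\qR_k$ and $\tilde\qD_k$ diagonal, the stacked matrix $\qH=[\qH_1\cdots\qH_K]$ has independent Gaussian entries with a bounded (block-separable) variance profile plus a bounded deterministic part, and hence falls directly under \cite[Theorems 2.4 and 2.5]{Hac-07}. The paper dispatches this in one sentence plus a footnote invoking the dominated convergence theorem (to pass from the a.s.\ convergence in \cite{Hac-07} to the expectation in (\ref{eq:asyHH})); your discussion of re-running the Gaussian tools and threading the normalizations is more cautious than necessary, since the general variance-profile framework of \cite{Hac-07} already accommodates the per-block $1/n_k$ scaling without modification.
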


\begin{proof}
Note that $\qH = \left[ \qR_1^\frac{1}{2} \calqX_1 \tilde\qD_1^\frac{1}{2}\cdots \qR_K^\frac{1}{2} \calqX_K \tilde\qD_K^\frac{1}{2} \right] + \bar{\bar{\qH}} $
corresponds to the channel in \cite{Hac-07} with a general variance profile. Therefore, this theorem can be obtained immediately from \cite[Theorems 2.4 and
2.5]{Hac-07} and the dominated convergence theorem.\footnote{The dominated convergence theorem is due to the expectation involved in (\ref{eq:asyHH}).}
\end{proof}

The proof of (\ref{eq:ExMsM}) under condition 3) of Theorem \ref{mainTh_Stj} is a result of Theorem \ref{Theorem_Hac07}. The idea is to cast the model $\qH\qH^H
+\qS$ into an extended model such that it fits into the framework of (\ref{eq:asyHH}). To this end, write
\begin{equation*}
\qH\qH^H +\qS = \sum_{k=1}^{K+1} \qH_k \qH_k^H,
\end{equation*}
where $\qH_k = \qR_k^\frac{1}{2} \calqX_k \tilde\qD_k^\frac{1}{2} + \bar{\bar\qH}_k$ is given in (\ref{eq:BN_GaussianMUwLOS}) and $\qH_{K+1} = \qS^\frac{1}{2}$
(without a random component). Plugging this model into (\ref{eq:Xi}) and (\ref{eq:tXi}), we obtain
\begin{align}
\qXi(z) &= \left( \qTheta(z)^{-1} - z \bar{\bar\qH} \tilde{\qTheta}(z) \bar{\bar\qH}^H + \qS \right)^{-1}, \\
\tilde{\qXi}(z) &= \left(\left[\begin{array}{cc}
                       \tilde{\qTheta}(z)^{-1} & \qzero \\
                       \qzero & -z \qI_N
                       \end{array}
                       \right]
                       - z
                       \left[\begin{array}{c}
                       \bar{\bar\qH}^H  \\
                       \qS^\frac{1}{2}
                       \end{array}
                       \right]
                       \qTheta(z)
                       \left[\begin{array}{cc}
                       \bar{\bar\qH} & \qS^\frac{1}{2}
                       \end{array}\right]\right)^{-1}.
\end{align}
Note that $\tilde{\qXi}(z)$ is now a matrix of size $(n+N) \times (n+N)$. From (\ref{eq:tTheta}), write
\begin{equation}\label{eq:tTheAndS}
\qTheta(z)^{-1} + \qS = - \sum_{i=1}^{K} \frac{z}{n_i}\tr\left( \tilde\qD_i \langle \tilde{\qXi}(z) \rangle_{i}\right) \qR_i -z\qI_N + \qS.
\end{equation}
Applying Lemma \ref{invLemma} in Appendix D, we can obtain the $n \times n$ principal submatrix of $\tilde{\qXi}(z)$ as
\begin{align}
[\tilde{\qXi}(z)]_{1:n,1:n}
= & \left[ \tilde{\qTheta}(z)^{-1} - z\bar{\bar\qH}^H \qTheta(z) \bar{\bar\qH} -
z^2 \bar{\bar\qH}^H \qTheta(z) \qS^\frac{1}{2} \left( -z\qI_N - z\qS^\frac{1}{2} \qTheta(z) \qS^\frac{1}{2} \right)^{-1} \qS^\frac{1}{2} \qTheta(z) \bar{\bar\qH}
    \right]^{-1} \notag \\
    =& \left[  \tilde{\qTheta}(z)^{-1} - z\bar{\bar\qH}^H \left( \qTheta(z) + z \qTheta(z) \qS^\frac{1}{2}
    \left( -z\qI_N - z\qS^\frac{1}{2} \qTheta(z) \qS^\frac{1}{2} \right)^{-1} \qS^\frac{1}{2} \qTheta(z)
    \right) \bar{\bar\qH}
    \right]^{-1} \notag \\
    =& \left[  \tilde{\qTheta}(z)^{-1} - z\bar{\bar\qH}^H
    \left( \qTheta(z)^{-1} + \qS \right)^{-1} \bar{\bar\qH}
    \right]^{-1}, \label{eq:subMat}
\end{align}
where the third equality is due to the matrix inverse lemma (Lemma \ref{invLemma} in Appendix D). Plugging (\ref{eq:tTheAndS}) and (\ref{eq:subMat}) into
(\ref{eq:Xi})--(\ref{eq:tTheta}) and recovering the effect from the eigenvectors $\qU_k$'s, we obtain the formulas (\ref{eq:Psi})--(\ref{eq:tPhi}). In particular,
defining $\tilde{\qU} = \diag(\tilde{\qU}_1,\dots,\tilde{\qU}_K)$, we have $\qPsi(z) = \qTheta(z)^{-1} + \qS$, $\tilde{\qPsi}(z) =\tilde{\qU}^H
[\tilde{\qXi}(z)]_{1:n,1:n} \tilde{\qU}$, $\qTheta(z) =\qPhi(z)$, and $\tilde\qTheta(z) = \tilde{\qU}^H\tilde\qPhi(z)\tilde{\qU}$. By (\ref{eq:asyQHH1}), we
immediately establish (\ref{eq:ExMsM}).

For the general case with nontrivial $\bar\qH$, $\qR_k$'s are required to be diagonal so that \cite[Theorem 2.4]{Hac-07} can be used immediately to yield Theorem
\ref{Theorem_Hac07}. If $\qR_k$'s are generally nonnegative definite, one may wonder if the same Stieltjes transform method in \cite{Hac-07} can still be used to
get the similar result. At present, due to mathematical difficulties, this is still an open challenge and such development is ongoing.

\section*{\sc IV. Simulation Results}
In this section, computer simulations are provided to evaluate the reliability of the asymptotic result particularly when the channel entries are non-Gaussian.
Specifically, we compare the analytical result $\calV_{N}(\sigma^2)$ (\ref{eq:AsyShannon}) with the Monte-Carlo simulation results of the ergodic mutual
information $\Ex\{\calV_{\qB_N}(\sigma^2)\}$ obtained from averaging over a large number of independent realizations of $\qH$.

Given the Kronecker MIMO channel model $\qH_k = \qR_k^\frac{1}{2} \qX_k \qT_k^\frac{1}{2}+\bar{\qH}_k$ for ${\sf UE}_k$, the simulation settings used in this study
are based on the following assumptions. First, the spatial correlation is generated from a uniform linear array with half wavelength spacing in a wireless
scenario. The propagation path cluster is assumed to have a Gaussian power azimuthal distribution, which is characterized by the mean angle and the
root-mean-square spread \cite{Mou-03}. Second, the channel gain from ${\sf UE}_k$ to each receiving antenna $\qR_k$ and its LOS components $\bar\qH_k$ are
generated randomly. Third, the i.i.d. entries of $\qX_k$'s are assumed to be of the form $\frac{1}{\sqrt{n_k}} X_{11}^{(k)} = \frac{1}{\sqrt{n_k}} W_{11}^{(k)}
\exp(j \theta_{11}^{(k)})$ \cite{Cho-07}, where $\theta_{11}^{(k)}$ is the phase modeled as a uniform distribution over $[0,2\pi]$, and $W_{11}^{(k)}$ is the
random amplitude drawn from a distribution with normalized mean power, i.e., $\Ex\{[W_{11}^{(k)}]^2\}=1$. The typical probability distributions for modeling the
amplitude behavior include the Rayleigh, Nakagami, and log-normal distributions \cite{Mol-05,Foe-03,Mol-05}. Among them, the Nakagami distribution is arguably the
most general model that embraces the Rayleigh distribution and those having longer tails. On the other hand, the log-normal distribution is well known to be a
suitable model for slowly varying communication channels, e.g., indoor radio propagation environments.

To measure the fading severity of the channel model, we adopt the coefficient of variance (CV) as a performance metric, which is defined by \cite{Cho-07}
\begin{equation}
{\rm CV}=\frac{\sqrt{{\sf var}\{W_{11}^{(k)}\}}}{\Ex\{W_{11}^{(k)}\}}
\end{equation}
with ${\sf var}\{W_{11}^{(k)}\}$ being the variance of $W_{11}^{(k)}$. According to \cite{Cho-07}, the variation in ergodic mutual information can be significant
if the values of CV are different. Note that the CV for Rayleigh fading channels is $0.526$ and any CV value much greater than this reference point indicates a
severe level of fading. For Nakagami fading, fading is severe if the Nakagami $m$-factor is very small. However, the $m$-factor is greater than $0.5$
\cite{Nak-60}, which gives a possible range for the CV values only in $[0,0.7555]$. Therefore, we use the log-normal distribution to generate a fading channel with
very severe fading by setting a large value for CV.

\begin{figure}
\begin{center}
\resizebox{7.5in}{!}{%
\includegraphics*{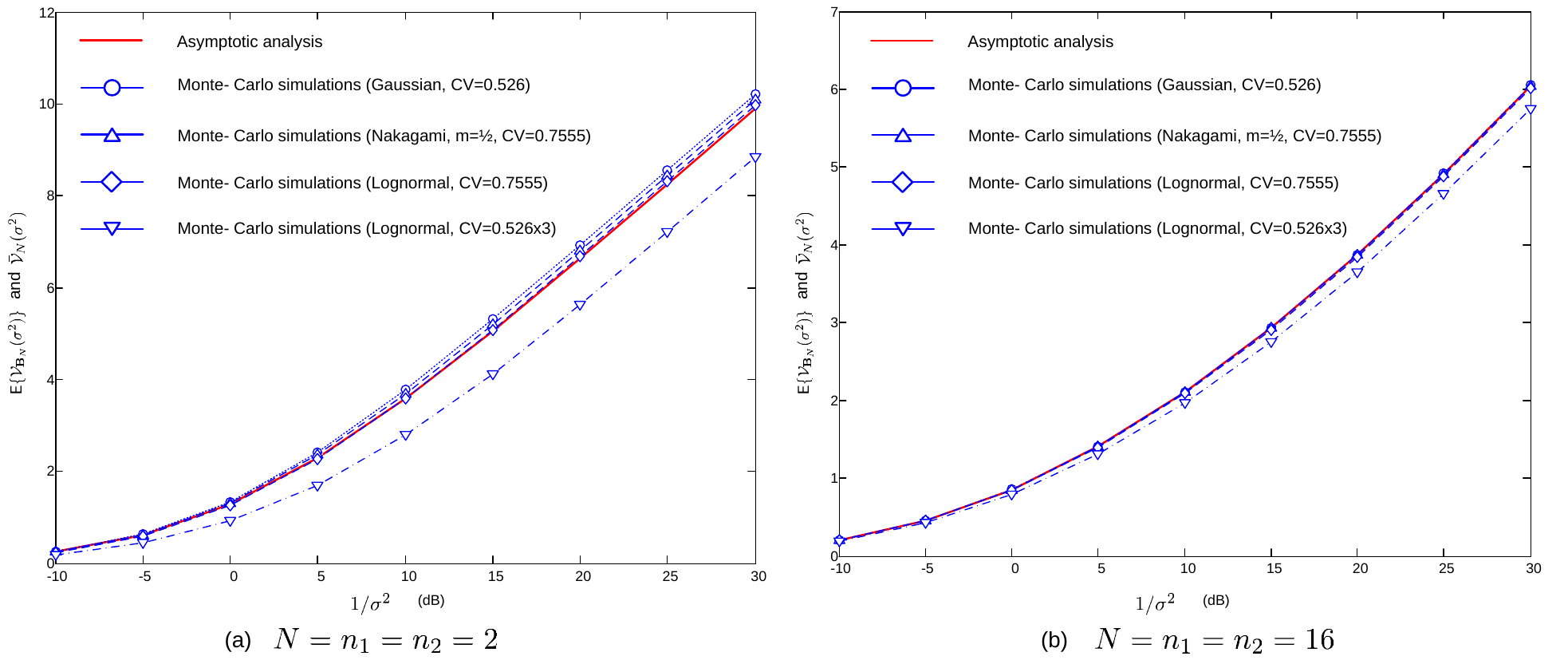} }%
\caption{Ergodic mutual information versus SNRs for the SCNs with $\bar\qH = \qzero$ and a) $N=n_1=n_2=2$ and b) $N=n_1=n_2=16$. The solid lines plot the analytical results, while the markers plot the exact results.}\label{fig:simAnAnalK0}
\end{center}
\end{figure}

\begin{figure}
\begin{center}
\resizebox{7.5in}{!}{%
\includegraphics*{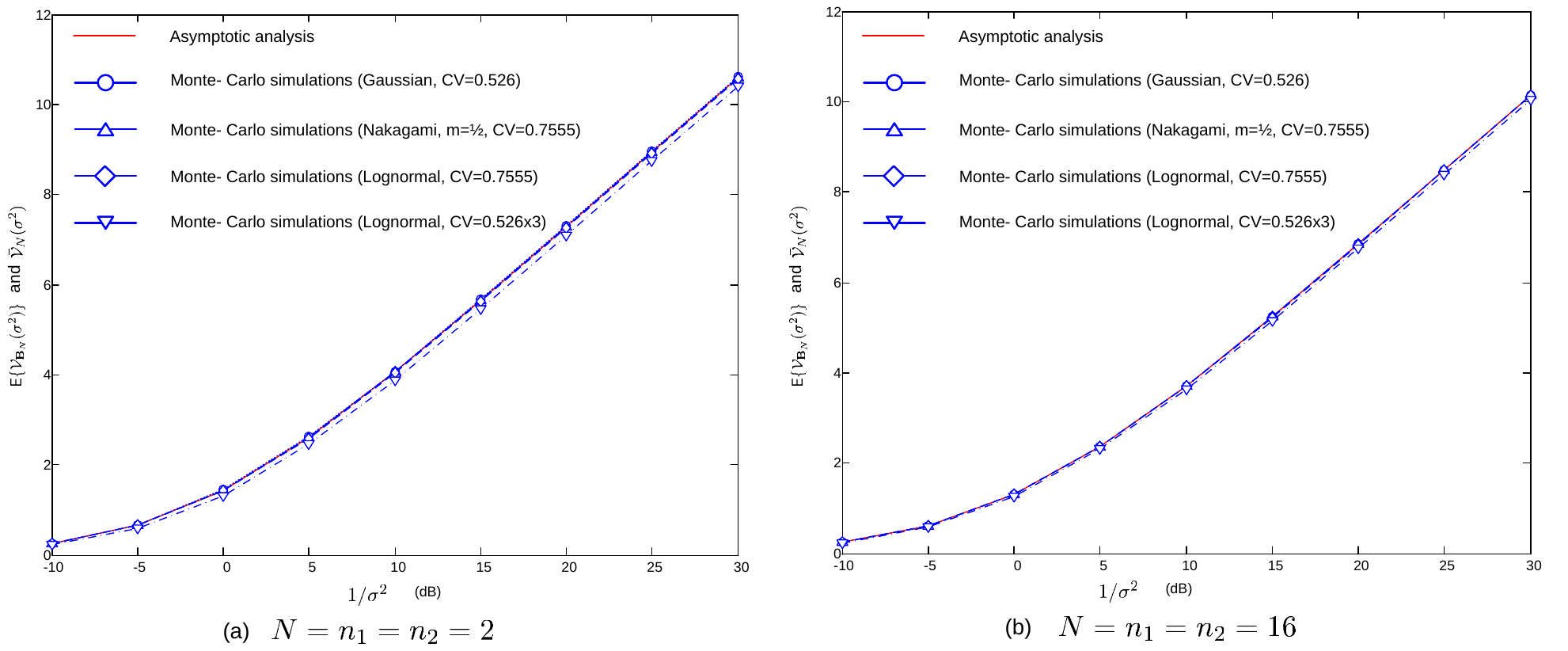} }%
\caption{Ergodic mutual information versus SNRs for the SCNs with $\bar\qH\ne\qzero$ and a) $N=n_1=n_2=2$ and b) $N=n_1=n_2=16$. The solid lines plot the analytical results, while the markers plot the exact results.}\label{fig:simAnAnalK1}
\end{center}
\end{figure}

Under a different fading severity, Figures \ref{fig:simAnAnalK0} and \ref{fig:simAnAnalK1} show the results of $\Ex\{\calV_{\qB_N}(\sigma^2)\}$ and
$\calV_N(\sigma^2)$ for the cases with $\bar\qH = \qzero$ and $\bar\qH \neq \qzero$ respectively. As we can see, when the number of antennas grows large (e.g.,
$N=n_1=n_2=16$) all curves almost overlap regardless of the distributions or the CV values. The ergodic mutual information is more sensitive to the type of
distribution as well as the CV value for the scenarios with small number of antennas. Thus, this invariance phenomenon of the ergodic mutual information in the
large-system limit agrees with our analysis. Also, one can observe that the case $\bar\qH \neq \qzero$ exhibits less sensitivity to the type of distribution, even
for a small number of antennas because half of the energy has contributed to the LOS components which has nothing to do with mitigating the fading distributions.

\begin{figure}
\begin{center}
\resizebox{4.5in}{!}{%
\includegraphics*{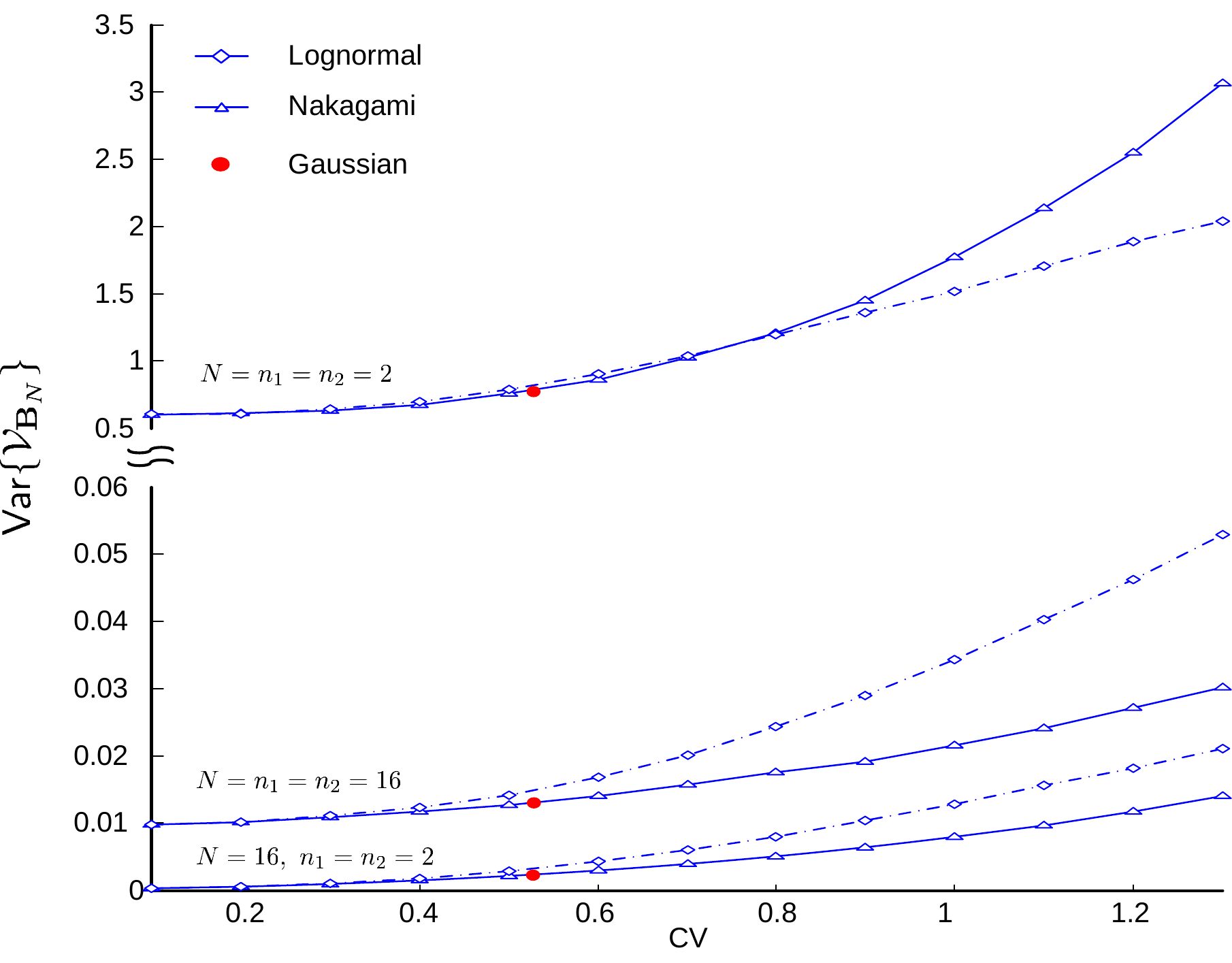} }%
\caption{ The empirical variance versus CV when $\bar\qH = \qzero$ and $1/\sigma^2= 30$ (dB). The above plot corresponds to a small system while the below plots correspond to the large system of interest.}\label{fig:var}
\end{center}
\end{figure}

Next, we evaluate the variance of $\calV_{\qB_N}(\sigma^2)$ by numerical simulations. Using the same parameters as those in Figure \ref{fig:simAnAnalK0}, Figure
\ref{fig:var} shows the empirical variances against CV when $1/\sigma^2= 30$ (dB). We observe that as the number of receive antennas grows large, the variance of
$\calV_{\qB_N}(\sigma^2)$ becomes small, or the mutual information approaches to a deterministic value in the large-system limit. The scenario with $N=16$ and
$n_1=n_2 =2$ particularly corresponds to typical SCNs, where the transmitter has a small number of antennas while the receiver is composed of large number of
antennas. This validates the practice of the deterministic approximation in the SCNs.

The CLT of $\calV_{\qB_N}(\sigma^2)$ has been recognized for different models by Moustakas {\em et al.}~\cite{Mou-03}, Taricco \cite{Tar-08}, and Hachem {\em et
al.}~\cite{Hac-08b,Hac-11}. Although the CLT is beyond the scope of this paper, we find it important to clarify some properties of the variance of
$\calV_{\qB_N}(\sigma^2)$. In the large system of interest (e.g., $N=n_1=n_2=16$ or $N=16,~n_1=n_2=2$), it is noted that the log-normal distribution undergoes the
highest variance. In addition, the curves of variance diverge as the CV value increases. Clearly, the CV does not provide a proper metrology neither for the mean
nor the variance of $\calV_{\qB_N}(\sigma^2)$ in the {\em large system limit}.\footnote{The insightful finding is due to A. Moustakas.} In this paper, we have
shown that $\calV_N(\sigma^2)$ depends on the second moment of the variables $X_{ij}^{(k)}$'s. As a consequence, the mean of the mutual information is invariant to
the type of fading distribution in the large system limit. Under a simpler model (where the correlation matrices are diagonal), it has been pointed out recently in
\cite{Hac-11} that the variance of mutual information depends not only on the second moment but also on the fourth moment of the variables $X_{ij}^{(k)}$'s. This
conjecture might be true in the SCNs of interest but at present, the required CLT to address the cases where the correlation matrices are generally nonnegative
definite and the channel entries are non-Gaussian is not at all understood.

Large-system results have been widely used to design the optimal input covariance \cite{Mou-03,Art-09,Dup-10,Cou-09}. With Gaussian channel entries, \cite{Dup-10}
showed that the input covariance design based on the large-system results can provide indistinguishable results to that achieved by stochastic programming (or the
Vu-Paulraj algorithm \cite{Vu-05}), even for the cases with a small number of antennas. It is important to know if such good characteristics still holds when the
channel entries are non-Gaussian. We clarify this property over the case with $\bar\qH = \qzero$. In this case, the ergodic mutual information is more sensitive to
the type of distribution and an iterative water-filling algorithm based on the large-system results can be used to obtain the asymptotic optimal input covariance
\cite[Table II]{Cou-09}.\footnote{If $\bar\qH \neq \qzero$, a similar iterative algorithm based on the large-system result was provided in \cite{Rie-08}.} For
stochastic optimization, the Vu-Paulraj algorithm based on the barrier method is used, in which the average mutual information and their first and second
derivatives are calculated by Monte-Carlo methods with $10^4$ trials. The number of iterations for the barrier method is set to 10. In Figure \ref{fig:optCov}, we
evaluate $\Ex\{\calV_{\qB_N}(\sigma^2)\}$ when the input covariance matrices are obtained using the large-system results and the stochastic optimization. As shown,
the asymptotic approach provides indistinguishable results to that achieved by stochastic programming in the case of non-Gaussian fading channels.

\begin{figure}
\begin{center}
\resizebox{4.5in}{!}{%
\includegraphics*{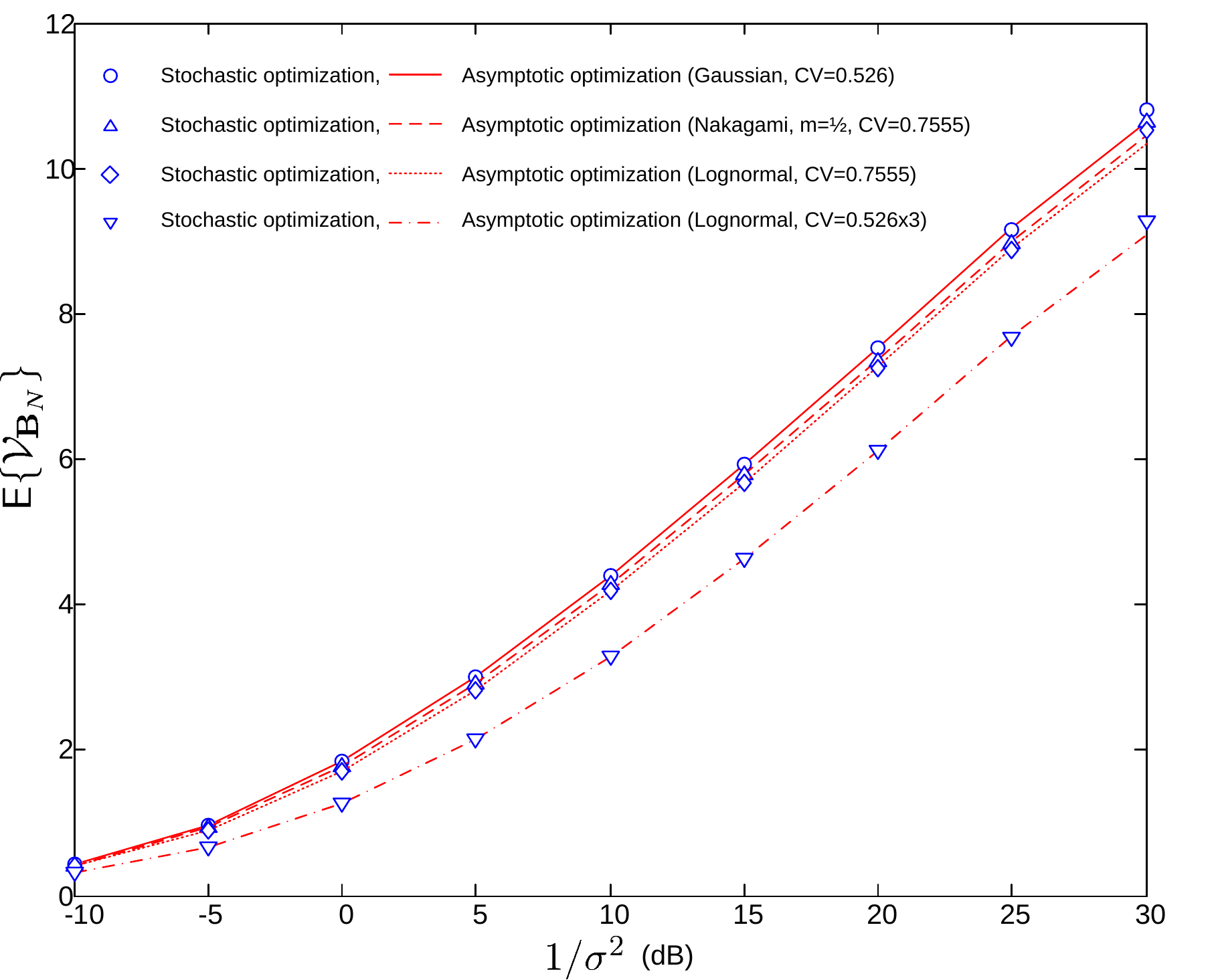} }%
\caption{Achievable rates versus SNR with $N=n_1=n_2=2$. The (red) lines plot the results based on the large-system results, and the marker points plot the results for stochastic optimization.}\label{fig:optCov}
\end{center}
\end{figure}

\section*{\sc V. Conclusion \label{Sec_Conclusion}}
This paper provided the deterministic equivalent of the LSD to deal with the channel matrices of SCNs where the entries of the MIMO channel matrix are \emph{no
longer} limited to be Gaussian distributed. Also, the correlation effects (caused by insufficient antenna spacing) and the LOS components (due to low antenna
heights) are included in the analysis. Using the deterministic equivalent of the LSD, we analyzed the Shannon transform of this class of large dimensional random
matrices and showed that the ergodic mutual information of the random matrices under investigation is invariant with respect to their distributions. As a
byproduct, we proved that the deterministic equivalents of the MIMO MAC in \cite{Cou-09} are true even if the entries of the channel matrix are non-Gaussian and
 $\qR_k$'s and $\qT_k$'s are nonnegative definite.

\section*{\sc Appendix A. Proof of Theorem \ref{mainTh_Stj}}
\subsection*{A.1 Truncation, Centralization, and Rescaling}
We begin the proof of Theorem \ref{mainTh_Stj} by replacing the entries of $\{\qX_k\}_{1\leq k \leq K}$ and that of the spectral decompositions of $\{ \qR_k,
\qT_k, \bar\qH_k\}_{1 \leq k \leq K}$ with truncated (and centralized) variables. It suffices to prove that the difference between the ESD of $\qB_N$ and the one
of truncated $\qB_N$ converges to zero with probability one because such convergence is equivalent to the convergence of their Stieltjes transforms.

We first follow a line similar to that in \cite[Section 4.3]{Bai-10} to truncate the spectral decompositions of $\{ \qR_k, \qT_k, \bar{\qH}_k\bar{\qH}_k^H\}_{1\leq
k\leq K}$. For any nonnegative definitive matrix $\qA \in \bbC^{m \times m}$, introduce its spectral decomposition and the corresponding truncation as follows:
\begin{equation}
\left\{\begin{aligned}
    \qA &= \qU_A \diag \left(\lambda_1(\qA),\dots,\lambda_{m}(\qA) \right) \qU_A^H, \\
    \qA^\alpha &= \qU_A \diag \left(\lambda_1(\qT_k)1_{(\lambda_1(\qA)\leq\alpha)},\dots,\lambda_{m}(\qA)1_{(\lambda_{m}(\qA)\leq\alpha)} \right) \qU_A^H,
\end{aligned}\right.
\end{equation}
where $\lambda_i(\qA)$ denotes the $i$th largest eigenvalues of $\qA$ and $\alpha>0$. Also, for the rectangular matrix $\bar\qH_k$, define its singular value
decomposition and the corresponding truncation version as follows:
\begin{equation}
\left\{\begin{aligned}
    \bar{\qH}_k &= \bar{\qU}_k \bar{\qSigma}_k \bar{\qV}_k^H, \\
    \bar{\qH}_k^\alpha &= \bar{\qU}_k \bar{\qSigma}_k^\alpha \bar{\qV}_k^H,
\end{aligned}\right.
\end{equation}
where $\bar{\qSigma}_k^\alpha$ is obtained from $\bar{\qSigma}_k$ with each singular value $\chi$ being replaced by $\chi 1_{(\chi \leq \alpha)}$. Let
$\chi_i(\bar{\qH}_k)$ denote the $i$th largest singular value of $\bar{\qH}_k$. By Lemma \ref{disIneq2} and iv) of Lemma \ref{rankIneq}, we have
\begin{align}
&\sup_{x} \left|F_{\qS + \sum_{k=1}^{K} \left(\bar\qH_k + \qR_k^\frac{1}{2} \qX_k \qT_k^\frac{1}{2}\right)\left(\bar\qH_k + \qR_k^\frac{1}{2} \qX_k \qT_k^\frac{1}{2}\right)^H}(x)
- F_{\qS + \sum_{k=1}^{K} \left(\bar\qH_k^\alpha + (\qR_k^\frac{1}{2})^{\alpha} \qX_k (\qT_k^\frac{1}{2})^{\alpha}\right)\left(\bar\qH_k^\alpha + (\qR_k^\frac{1}{2})^{\alpha} \qX_k (\qT_k^\frac{1}{2})^{\alpha}\right)^H}(x) \right| \notag \\
\leq& \frac{1}{N} \sum_{k=1}^{K} \left( \rank\left( \bar\qH_k-\bar\qH_k^\alpha \right)
+  \rank\left( \qR_k^{\frac{1}{2}}-(\qR_k^\frac{1}{2})^{\alpha} \right) + \rank\left( \qT_k^{\frac{1}{2}}-(\qT_k^\frac{1}{2})^{\alpha} \right) \right) \notag \\
=& \frac{1}{N} \sum_{k=1}^{K} \left( \sum_{i=1}^{N} 1_{\left(\chi_i(\bar{\qH}_k) > \alpha \right)} + \sum_{i=1}^{N} 1_{\left(\lambda_i(\qR_k^{1/2}) > \alpha \right)}
+ \sum_{i=1}^{n_k} 1_{\left(\lambda_i(\qT_k^{1/2}) > \alpha \right)} \right) \notag \\
=& \sum_{k=1}^{K} \left( F_{\bar\qH_k\bar\qH_k^H}((\alpha^2,\infty)) + F_{\qR_k}((\alpha^2,\infty)) + \frac{1}{\beta_k} F_{\qT_k}((\alpha^2,\infty)) \right). \label{eq:supF}
\end{align}
The right-hand side of the inequality above can be made arbitrary small if $\alpha$ is large enough by Assumption \ref{Ass3}. Therefore we can assume that the
eigenvalues of $\{ \qR_k, \qT_k, \bar{\qH}_k\bar{\qH}_k^H\}_{1\leq k\leq K}$ are bounded by a constant $\alpha$.

Next, we truncate and centralize the entries of $\qX_k$. As pointed out at Remark 1, the assumption $\Ex\{ X_{11}^{(k)}\}= 0$ can be removed from Theorem
\ref{mainTh_Stj} if $X_{ij}^{(k)}$'s have the same mean. For this reason, we do not make the zero mean assumption in the subsequent analysis. For each $k$, let
\begin{equation} \left\{\begin{aligned}
\hat{X}_{11}^{(k)}&=X_{11}^{(k)} 1_{(|X_{11}^{(k)}|\leq \varepsilon_{n}\sqrt{n_k})}\\
\tilde{X}_{11}^{(k)}&=\hat{X}_{11}^{(k)}-\Ex\{\hat{X}_{11}^{(k)}\},
\end{aligned}\right.
\end{equation}
where
\begin{equation}
    \varepsilon_n\downarrow 0, ~~~\mbox{and}~~~ \varepsilon_n^{-2} \, \Ex\left\{ |X_{11}^{(k)}|^2 \, 1_{(|X_{11}^{(k)}|>\varepsilon_{n}\sqrt{n_k})} \right\} \longrightarrow 0.
\end{equation}
Also, define $\hat{\qX}_k=[\frac{1}{\sqrt{n_k}}\hat{X}_{ij}^{(k)}] \in \bbC^{N \times n_k}$ and $\tilde{\qX}_k=[\frac{1}{\sqrt{n_k}} \tilde{X}_{ij}^{(k)}] \in
\bbC^{N \times n_k}$, and $\hat{\qB}_N$ and $\tilde{\qB}_N$ (obtained from $\qB_N$ with $X_{ij}^{(k)}$ replaced by $\hat{X}_{ij}^{(k)}$ and $\tilde{X}_{ij}^{(k)}$,
respectively). By Lemma \ref{disIneq2} and iv) of Lemma \ref{rankIneq}, we obtain
\begin{equation}\label{a1}
\sup_{x} \left| F_{\qB_N}(x)-F_{\hat{\qB}_N}(x) \right|\leq \frac{1}{N} \sum_{k=1}^K  \rank(\hat{\qX}_k-\qX_k)\leq \frac{1}{N} \sum_{k=1}^K\sum_{ij} 1_{(|X_{ij}^{(k)}|> \varepsilon_{n}\sqrt{n_k})} \xrightarrow{a.s.} 0,
\end{equation}
where the last step can be obtained in the same way as in \cite[Section 4.3.2]{Bai-10}. Repeating the first inequality in (\ref{a1}) with $F_{\qB_N}(x)$ replaced
by $F_{\tilde{\qB}_N}(x)$ yields\footnote{Note that $\rank( \hat\qX_k - \tilde \qX_k ) = \rank(\Ex\{\hat{\qX}_k\}) \leq 1$.}
\begin{equation}
\sup_{x} \left|F_{\tilde{\qB}_n}(x)-F_{\hat{\qB}_n}(x)\right|\stackrel{a.s.}\longrightarrow 0.
\end{equation}
Similarly, we may show that re-normalization of $\tilde{X}_{ij}^{(k)}$ does not affect the LSD of $F^{\tilde{\qB}_N}(x)$ as in \cite[Section 3.2]{Bai-10}.

Therefore, henceforth, we consider that Assumption \ref{Ass4} holds. For ease of reading, we recall this assumption here: For each of $N$, $X_{ij}^{(k)}$ are
i.i.d., and
\begin{subequations}
\begin{align}
 & \Ex \left\{X_{11}^{(k)}\right\} =0, ~ \Ex\left\{|X_{11}^{(k)}|^2\right\}=1,~|X_{11}^{(k)}| \leq \varepsilon_{n}\sqrt{n_k}, \label{eq:XassReCall}\\
 & \max_{k=1,\dots,K}\,\max\{\|\qR_k\|,\|\qT_k\|,\|\bar{\qH}_k\bar{\qH}_k^H\|\} \leq \alpha.
\end{align}
\end{subequations}
For convenience, we still use $\qX_k$, $\qT_k$, $\qR_k$, and $\bar{\qH}_k$ to denote those truncated and centralized matrices.

\subsection*{A.2 Proof of Step 1}
The aim in this subsection is to prove that
\begin{equation}\label{eq:a3}
    \Ex \left\{ \left| m_{\qB_N}(z) - \Ex \{m_{\qB_N}(z)\} \right|^{2p} \right\} = O\left(\frac{1}{N^p}\right) ~~~ \mbox{for any}~~~ p \geq 2,
\end{equation}
which, together with Borel-Cantelli's lemma, ensures Step 1. For ease of explanation, we prove the case with $K =1$ only but the similar procedure can be easily
extended to the case with $K \geq 1$. For this reason, we omit the index $k$ in the following procedure.

Let $\qx_j$ denote the $j$th column of $\qX$, $\qe_j$ be the column vector with the $j$th element being 1 and otherwise $0$, and set
\begin{equation}\label{a2}
\qX_{(j)}= \qX - \qx_j \qe_j^T.
\end{equation}
Furthermore, we find it useful to define
\begin{align}
m_{\qB_N}(z)&= \frac{1}{N} \tr\left( \qB_N - z \qI_{N} \right)^{-1},\\
m_{\qB_{(j)}}(z)&= \frac{1}{N} \tr\left( \qB_{(j)} - z \qI_{N} \right)^{-1},
\end{align}
where $\qB_{(j)} = (\qR^\frac{1}{2} \qX_{(j)} \qT^\frac{1}{2} + \bar{\qH})(\qR^\frac{1}{2} \qX_{(j)} \qT^\frac{1}{2} + \bar{\qH})^H+\qS$. Also, we use $\Ex_j$ to
denote conditional expectation given $\qx_{j+1},\dots,\qx_n$, so that $\Ex _{0}\{m_{\qB_N}(z)\} = m_{\qB_N}(z)$ and $\Ex_{n}\{m_{\qB_N}(z)\} = \Ex
\{m_{\qB_N}(z)\}$. Therefore, we have
\begin{align}
m_{\qB_N}(z) - \Ex \{m_{\qB_N}(z)\}
&= \sum_{j=1}^{n} [\Ex _{j-1}\{m_{\qB_N}(z)\} - \Ex _{j}\{m_{\qB_N}(z)\}] \notag\\
&= \sum_{j=1}^{n} [\Ex _{j-1} - \Ex _{j}]\{\tr\left( (\qB_N - z \qI_{N})^{-1} \right) - \tr\left( (\qB_{(j)} - z \qI_{N})^{-1} \right) \} \notag\\
&= \frac{1}{N} \sum_{j=1}^{n} [\Ex _{j} - \Ex _{j-1}]\{\gamma_{j1}+\gamma_{j2}+\gamma_{j3}+\gamma_{j4}+\gamma_{j5}\},\label{eq:difmm}
\end{align}
where
\begin{subequations}
\begin{align}
\gamma_{j1} &= - \qe_j^T \qT \qX_{(j)}^H \qR^{\frac{1}{2}} (\qB_{(j)}-z\qI)^{-1}(\qB_N-z\qI)^{-1} \qR^{\frac{1}{2}} \qx_j, \\
\gamma_{j2} &= - \qx_j^H \qR^{\frac{1}{2}} (\qB_N-z\qI)^{-1}(\qB_{(j)}-z\qI)^{-1} \qR^{\frac{1}{2}} \qX_{(j)} \qT \qe_j, \\
\gamma_{j3} &= - \qe_j^T \qT \qe_j \qx_j^H \qR^{\frac{1}{2}} (\qB_{(j)}-z\qI)^{-1}(\qB_N-z\qI)^{-1} \qR^{\frac{1}{2}} \qx_j, \\
\gamma_{j4} &= - \qe_j^T \qT^{\frac{1}{2}} \bar{\qH}^H (\qB_{(j)}-z\qI)^{-1}(\qB_N-z\qI)^{-1} \qR^{\frac{1}{2}} \qx_j, \\
\gamma_{j5} &= - \qx_j^H \qR^{\frac{1}{2}} (\qB_N-z\qI)^{-1}(\qB_{(j)}-z\qI)^{-1} \bar{\qH} \qT^{\frac{1}{2}} \qe_j.
\end{align}
\end{subequations}
In (\ref{eq:difmm}), we have used the resolvent identity (see Lemma \ref{lemma_Resolvent}), (\ref{a2}) and
\begin{multline}
\qB_N - \qB_{(j)}= \qR^{\frac{1}{2}} \qx_j \qe_j^T \qT \qX_{(j)}^H \qR^{\frac{1}{2}}+ \qR^{\frac{1}{2}} \qX_{(j)} \qT \qe_j \qx_j^H \qR^{\frac{1}{2}}\\
+\qR^{\frac{1}{2}} \qx_j \qe_j^T \qT \qe_j \qx_j^H \qR^{\frac{1}{2}}+\qR^{\frac{1}{2}} \qx_j \qe_j^T \qT^{\frac{1}{2}} \bar{\qH}^H + \bar{\qH} \qT^{\frac{1}{2}}
\qe_j \qx_j^H \qR^{\frac{1}{2}}.
\end{multline}

Since the mathematical treatments for $\gamma_{j4}$ and $\gamma_{j5}$ are similar, we here consider $\gamma_{j5}$ only. Starting from the Cauchy-Schwartz
inequality and then applying 3) of Lemma \ref{LemmaNorm}, we get
\begin{align}
|\gamma_{j5}| &\leq \| \qR^{\frac{1}{2}} (\qB_N-z\qI)^{-1}(\qB_{(j)}-z\qI)^{-1} \bar{\qH} \qT^{\frac{1}{2}} \qe_j \| \| \qx_j \|\notag\\
&\leq \| \qR^{\frac{1}{2}} \| \| (\qB_N-z\qI)^{-1} \| \| (\qB_{(j)}-z\qI)^{-1} \| \|\bar{\qH}\|  \|\qT^{\frac{1}{2}} \| \| \qx_j \|.
\end{align}
Lemma \ref{PansIneq} gives
\begin{equation} \label{eq:x_ord}
\Ex\left\{\|\qx_j\|^{2p}\right\}= O(1)~~\mbox{for any }p \geq 2.
\end{equation}
Note that $\|\qR\|$, $\|\qT\|$, $\|\bar{\qH}\|$, $\|(\qB_N-z\qI)^{-1}\|$ and $\|(\qB_{(j)}-z\qI)^{-1}\|$ are all bounded. Hence, we have
\begin{equation} \label{eq:gamma5leq1}
\Ex\left\{|\gamma_{j5}|^{2p}\right\}=O(1)~~\mbox{for any }p \geq 2.
\end{equation}
Then, by Lemma \ref{Burk2}, we can show that for any $p \geq 2$,
\begin{equation} \label{eq:gamma5leq3}
\Ex \left\{\left| \frac{1}{N} \sum_{j=1}^{n} [\Ex _{j} - \Ex _{j-1}]\{\gamma_{j5}\} \right|^{2p}\right\}\leq \frac{C_p}{N^{2p}} \Ex\left\{\left( \sum_{j=1}^{n}  |\gamma_{j5}|^{2} \right)^{p}\right\}\leq \frac{C C_p}{N^{p+1}} \sum_{j=1}^{n} \Ex\left\{ |\gamma_{j5}|^{2p}\right\}= O\left( \frac{1}{N^{p}} \right),
\end{equation}
where the second inequality follows from Lemma \ref{sumIneq} and the last equality is due to (\ref{eq:gamma5leq1}).

Next, we consider $\gamma_{j3}$. Let $\dist$ stand for the Euclidean distance between $z$ and $\bbR^+$. Since $\|(\qB_N-z\qI)^{-1}\|$ and
$\|(\qB_{(j)}-z\qI)^{-1}\|$ are both bounded by $\frac{1}{\dist}$, using Lemma \ref{LemmaNorm} gives
\begin{equation} \label{eq:gamma3leq1}
|\gamma_{j3}| \leq |\qe_j^T \qT \qe_j|  ~\|\qR^\frac{1}{2} (\qB_{(j)}-z\qI)^{-1}(\qB_N-z\qI)^{-1} \qR^\frac{1}{2}\| \|\qx_j\|^2\leq \frac{\|\qT\|\,\|\qR\|}{\dist^2} \|\qx_j\|^2.
\end{equation}
In addition, a simple application of Lemma \ref{PansIneq} gives $ \Ex\{|\gamma_{j3}|^p\}= O(1)$ for any $p \geq 2$. Then, applying the same arguments as in
(\ref{eq:gamma5leq3}), we have that for any $p \geq 2$,
\begin{equation} \label{eq:gamma3leq3}
\Ex\left\{ \left| \frac{1}{N} \sum_{j=1}^{n} [\Ex _{j} - \Ex _{j-1}]\{\gamma_{j3}\} \right|^{p}\right\}\leq \frac{C C_p}{N^{\frac{p}{2}+1}} \sum_{j=1}^{n} \Ex\left\{ |\gamma_{j3}|^{p}\right\}= O\left( \frac{1}{N^{\frac{p}{2}}} \right).
\end{equation}

As the procedures for $\gamma_{j1}$ and $\gamma_{j2}$ are similar, we take $\gamma_{j1}$ as an example. Using Lemma \ref{LemmaNorm}, we have
\begin{equation} \label{eq:gamma1leq1}
|\gamma_{j1}|  \leq  \|\qe_j^T \qT \qX_{(j)}^H\| ~\| \qR^\frac{1}{2} (\qB_{(j)}-z\qI)^{-1}(\qB -z\qI)^{-1} \qR^\frac{1}{2} ~\| \|\qx_j\|\leq \frac{\| \qR \|}{\dist^2} ~\|\qe_j^T \qT \qX_{(j)}^H\| ~\|\qx_j\|.
\end{equation}
Let $\acute{\qx}_{i}$ be the $i$th column vector of $\qX_{(j)}^H$. It is easily verified that
\begin{equation} \label{eq:xTeBound}
\Ex\left\{\acute{\qx}_{i}^H  \qT \qe_j \qe_j^T \qT \acute{\qx}_{i} \right\} \leq \frac{1}{n} \qe_j^T \qT^2 \qe_j \leq \frac{1}{n} \| \qT \|^2 \leq \frac{1}{n} \alpha^2.
\end{equation}
Then,
\begin{align}\label{eq:est_eTx}
\Ex \left\{ \|\qe_j^T \qT \qX_{(j)}^H\|^{2p}\right\} & \eqover{(a)} \Ex \left\{ \left| \sum_{i=1}^{n} \acute{\qx}_{i}^H  \qT \qe_j \qe_j^T \qT \acute{\qx}_{i}  \right|^p\right\}\notag\\
& = C_p \Ex\left\{  \left| \sum_{i=1}^{n}\Big(\acute{\qx}_{i}^H  \qT \qe_j \qe_j^T \qT \acute{\qx}_{i}
 - \Ex \left\{\acute{\qx}_{i}^H  \qT \qe_j \qe_j^T \qT \acute{\qx}_{i}\right\}\Big)\right|^p\right\}
 +C_p \left|\sum_{i=1}^{n} \Ex \left\{\acute{\qx}_{i}^H  \qT \qe_j \qe_j^T \qT \acute{\qx}_{i}\right\}\right|^p \notag \\
 &\leqover{(b)} C_p  \sum_{i=1}^{n} \Ex \left\{\Big|\acute{\qx}_{i}^H  \qT \qe_j \qe_j^T \qT \acute{\qx}_{i}
  - \Ex \left\{\acute{\qx}_{i}^H  \qT \qe_j \qe_j^T \qT \acute{\qx}_{i}\right\}\Big|^p\right\} \notag \\
  & ~~~ + C_p \left(\sum_{i=1}^{n} \Ex\left\{ \Big|\acute{\qx}_{i}^H  \qT \qe_j \qe_j^T \qT \acute{\qx}_{i}
 -   \Ex \left\{\acute{\qx}_{i}^H  \qT \qe_j \qe_j^T \qT \acute{\qx}_{i}\right\}\Big|^2\right\} \right)^\frac{p}{2}
 +C_p \alpha^{2p},
\end{align}
where (a) is due to the fact that $\qX_{(j)}^H \qX_{(j)} = \sum_{i=1}^{n} \acute{\qx}_{i} \acute{\qx}_{i}^H$, and (b) follows from Lemma \ref{lemma_eleIneq} and
(\ref{eq:xTeBound}). From Lemma \ref{sumIneq} and (\ref{eq:xTeBound}), we get
\begin{equation} \label{eq:xTeIneq}
\Ex \left\{\Big|\acute{\qx}_{i}^H  \qT \qe_j \qe_j^T \qT \acute{\qx}_{i}- \Ex \left\{\acute{\qx}_{i}^H  \qT \qe_j \qe_j^T \qT \acute{\qx}_{i}\right\}\Big|^p\right\}
\leq 2^{p-1} \left( \Ex \left\{\Big|\acute{\qx}_{i}^H  \qT \qe_j \qe_j^T \qT \acute{\qx}_{i}\Big|^p\right\} + \frac{1}{n^p} \alpha^{2p}\right).
\end{equation}
Substituting this into (\ref{eq:est_eTx}), we obtain
\begin{align}\label{eq:est_eTx2}
\Ex \left\{ \|\qe_j^T \qT \qX_{(j)}^H\|^{2p}\right\}
 &\leq C_p C' \sum_{i=1}^{n} \Ex\left\{\Big|\acute{\qx}_{i}^H  \qT \qe_j \qe_j^T \qT \acute{\qx}_{i}\Big|^p\right\}
 + C_p \left(\sum_{i=1}^{n} \Ex\left\{\Big|\acute{\qx}_{i}^H  \qT \qe_j \qe_j^T \qT \acute{\qx}_{i}\Big|^2\right\}\right)^{\frac{p}{2}} + C .
\end{align}
Moreover, since we know that
\begin{equation}
\| \qT \qe_j \|^2 = \qe_j^H \qT^2 \qe_j \leq \alpha^2,
\end{equation}
Lemma \ref{PansIneq} gives
\begin{equation}
\Ex \left\{ |\acute{\qx}_{k}^H  \qT \qe_j |^{2p}\right\}= O\left( \frac{1}{N} \right) ~~\mbox{for any }p \geq 1.
\end{equation}
From this and (\ref{eq:est_eTx2}), it follows that
\begin{equation} \label{eq:gamma1leq2}
\Ex\left\{  \|\qe_j^T \qT \qX_{(j)}^H\|^{2p}\right\}=O\left(1 \right).
\end{equation}
By applying the independence between $\qX_{(j)}$ and $\qx_j$ and using (\ref{eq:x_ord}), (\ref{eq:gamma1leq1}), and (\ref{eq:gamma1leq2}), we have
\begin{equation}
\Ex\left\{ |\gamma_{j1}|^{2p}\right\} \leq \frac{\| \qR \|^{2p}}{\dist^{4p}} ~\Ex\left\{ \|\qe_j^T \qT \qX_{(j)}^H\|^{2p}\right\} ~\Ex\left\{ \|\qx_j\|^{2p}\right\}= O\left(1\right).
\end{equation}
Therefore, using Lemma \ref{Burk2} with the above, we have, for any $p \geq 1$,
\begin{equation} \label{eq:gamma1leq3}
\Ex  \left\{\left| \frac{1}{N} \sum_{j=1}^{n} [\Ex _{j} - \Ex _{j-1}]\{\gamma_{j1}\} \right|^{2p}\right\}\leq \frac{C_p}{N^{2p}} \Ex\left\{\left( \sum_{j=1}^{n}  |\gamma_{j1}|^{2} \right)^{p}\right\}= O\left( \frac{1}{N^{p}} \right).
\end{equation}
(\ref{eq:a3}) then follows from (\ref{eq:gamma5leq3}), (\ref{eq:gamma3leq3}), and (\ref{eq:gamma1leq3}). The proof is complete.

\subsection*{A.3 Proof of Step 2}

To begin with, recall the definition:
\begin{align}
\qB_N&= \qS + \left(\qR^{\frac{1}{2}} \qX \qT^{\frac{1}{2}} + \bar{\qH}\right)\left(\qR^{\frac{1}{2}} \qX \qT^{\frac{1}{2}} + \bar{\qH}\right)^H,\\
\calqB_N&= \qS + \left(\qR^{\frac{1}{2}} \calqX \qT^{\frac{1}{2}} + \bar{\qH}\right)\left(\qR^{\frac{1}{2}} \calqX \qT^{\frac{1}{2}} + \bar{\qH}\right)^H,
\end{align}
where $\qX$ and $\calqX$ are matrices with entries satisfying (\ref{eq:XassReCall}) but $\calqX$ is Gaussian. The aim here is to prove
\begin{equation}
\left|  \Ex \{ m_{\qB_N}(z) \} - \Ex \{ m_{\calqB_N}(z) \} \right| = O\left(\varepsilon_n \right).
\end{equation}
As before, we will prove the case $K =1$ only and drop the unnecessary index $k$ in the sequel.

The strategy is to use Lemma \ref{Lindeberg}, the Lindeberg principle \cite[Theorem 2]{Kor-11}. As pointed out at the end of the first paragraph of Appendix A,
$\Ex \{X_{ij}\} = \Ex \{\calX_{ij}\}=0$. Also we have $\Ex \{|X_{ij}|^2\} =\Ex \{|\calX_{ij}|^2\} = 1$. Therefore $a_i = b_i = 0~\forall i$ in (\ref{eq:LindeAB}).
We next evaluate the second and third lines of (\ref{eq:LindeCheck}). To achieve this, we need to take the derivatives with respect to the real and imaginary parts
of the $(i,j)$th entries of $\qX$, respectively. Because the real and imaginary parts of $X_{ij}$  are independent, all the results established in the real case
can be directly applied for the complex case. Thus, without loss of generality, we deal with $\qX$ and $\calqX$ with {\em real} entries only in order to present
the formulas in a compact and succinct way.

For ease of exposition, we define
\begin{equation}
f(\qA) \triangleq \frac{1}{N} \tr \left( \qS + \left(\qR^{\frac{1}{2}} \qA \qT^{\frac{1}{2}} + \bar{\qH}\right)\left(\qR^{\frac{1}{2}} \qA \qT^{\frac{1}{2}} + \bar{\qH}\right)^H -z\qI \right)^{-1},
\end{equation}
where $\qA$ is any matrix such that the product $\qR^{\frac{1}{2}} \qA \qT^{\frac{1}{2}}$ exists. As such, we have $m_{\qB_N}(z) = f(\qX)$ and $m_{\calqB_N}(z)
=f(\calqX)$. Moreover, to apply (\ref{eq:LindeCheck}), $\qA$ will take the form $\qA(r,c,s) = [A_{ij}(r,c,s)] \in \bbC^{N \times n}$ with
\begin{equation} \label{eq:defArcs}
 A_{ij}(r,c,s) = \left\{\begin{array}{cl}
\frac{X_{ij}}{\sqrt{n}},  & \mbox{if}~ i<r, ~\mbox{or}~ i=r ~\mbox{and}~ j < c, \\
s,                        & \mbox{if}~ i=r, ~\mbox{and}~ j=c, \\
\frac{\calX_{ij}}{\sqrt{n}},  & \mbox{otherwise}.
\end{array}\right.
\end{equation}
Further, let $\qG \triangleq \qS + \left(\qR^{\frac{1}{2}} \qA \qT^{\frac{1}{2}} + \bar{\qH}\right)\left(\qR^{\frac{1}{2}} \qA \qT^{\frac{1}{2}} +
\bar{\qH}\right)^H$, denote the partial derivative with respect to $A_{ij}$ by $\partial_{ij}$, and let $\qE_{ij}$ be the matrix with a 1 in the $(i,j)$th position
and 0's elsewhere. To get the third-fold derivative of $f(\qA)$, we rely on the following differentiation formulas:
\begin{subequations}\label{eq:difFormula}
\begin{align}
\partial_{ij}(\qG-z\qI)^{-1}&= -(\qG-z\qI)^{-1} (\partial_{ij}\qG) (\qG-z\qI)^{-1}, \\
\partial_{ij}\qG &=\left(\qR^{1/2} \qE_{ij} \qT^{\frac{1}{2}} \right) \left(\qR^{\frac{1}{2}} \qA \qT^{\frac{1}{2}} + \bar{\qH}\right)^H+ \left(\qR^{\frac{1}{2}} \qA \qT^{\frac{1}{2}} + \bar{\qH}\right) \qT^{\frac{1}{2}} \qE_{ji} \qR^\frac{1}{2},\\
\partial_{ij}^{2}\qG &= 2 T_{jj} \qR^\frac{1}{2} \qE_{ii} \qR^\frac{1}{2},\\
\partial_{ij}^{3}\qG &= 0.
\end{align}
\end{subequations}
By (\ref{eq:difFormula}), one can easily show that
\begin{subequations}
\begin{align}
\partial_{ij}f(\qA) =& - \frac{1}{N} \tr\left( (\partial_{ij}\qG) (\qB_N-z\qI)^{-2} \right), \\
\partial_{ij}^{2}f(\qA) =& \frac{2}{N} \tr\left( (\partial_{ij}\qG) (\qB_N-z\qI)^{-1} (\partial_{ij}\qG) (\qB_N-z\qI)^{-2} \right)- \frac{1}{N} \tr\left( (\partial_{ij}^2\qG) (\qB_N-z\qI)^{-2} \right),\\
\partial_{ij}^{3}f(\qA) =& - \frac{6}{N} \tr\left( (\partial_{ij}\qG) (\qB_N-z\qI)^{-1} (\partial_{ij}\qG) (\qB_N-z\qI)^{-1} (\partial_{ij}\qG) (\qB_N-z\qI)^{-2} \right)\notag\\
                          &+ \frac{3}{N} \tr\left( (\partial_{ij}^2\qG) (\qB_N-z\qI)^{-1} (\partial_{ij}\qG) (\qB_N-z\qI)^{-2} \right)\notag\\
                          &+ \frac{3}{N} \tr\left( (\partial_{ij}\qG) (\qB_N-z\qI)^{-1} (\partial_{ij}^2\qG) (\qB_N-z\qI)^{-2} \right).\label{eqn:72c}
\end{align}
\end{subequations}

Now, we provide a bound for each of the three terms of $\partial_{ij}^{3}f(\qA)$ (see (\ref{eqn:72c}) above). The first term of $\partial_{ij}^{3}f(\qA)$ can be
bounded by
\begin{align}
&\left| \tr\left( (\partial_{ij}\qG) (\qB_N-z\qI)^{-1} (\partial_{ij}\qG) (\qB_N-z\qI)^{-1} (\partial_{ij}\qG) (\qB_N-z\qI)^{-2} \right)\right|\notag\\
\leqover{(a)} & \| ((\partial_{ij}\qG) (\qB_N-z\qI)^{-1})^2 \|_{\sfF} ~\| (\partial_{ij}\qG) (\qB_N-z\qI)^{-2} \|_{\sfF} \\
\leqover{(b)} & \frac{1}{\dist} \| (\partial_{ij}\qG) (\qB_N-z\qI)^{-1} \|_{\sfF}^3  \\
\leqover{(c)} & \frac{1}{\dist^4} \| (\partial_{ij}\qG) \|_{\sfF}^3,
\end{align}
where (a) follows from 1)-i) of Lemma \ref{LemmaNorm} and the remaining two inequalities, (b) and (c), follow from 1)-ii) and 1)-iii) of Lemma \ref{LemmaNorm}. By
1)-i) and 1)-ii) of Lemma \ref{LemmaNorm}, the second and third terms of $\partial_{ij}^{3}f(\qA)$ can be bounded by
\begin{align}
& \left|\tr\left( (\partial_{ij}^2\qG) (\qB_N-z\qI)^{-1} (\partial_{ij}\qG) (\qB_N-z\qI)^{-2} \right)\right| \notag\\
\leq & \frac{1}{\dist^3} \| (\partial_{ij}^2\qG) \|_{\sfF} ~ \| (\partial_{ij}\qG) \|_{\sfF}=\frac{2}{\dist^3} \| T_{jj} \qR^\frac{1}{2} \qE_{ii} \qR^\frac{1}{2} \|_{\sfF} ~\| (\partial_{ij}\qG)
\|_{\sfF}.
\end{align}
Therefore, to estimate $|\partial_{ij}^{3}f(\qA)|$, we note that $\|T_{jj}\qR^\frac{1}{2}\qE_{ii}\qR^\frac{1}{2}\|_{\sfF}=T_{jj}R_{ii}\leq\|\qT\|\|\qR\|\leq
\alpha^2$ and
\begin{align} \label{eq:RETHR_Ineq}
\| \qR^\frac{1}{2} \qE_{ij} \qT^\frac{1}{2} \bar\qH^H\|_{\sfF}^2
&= R_{ii}\,\tr\left( \qe_j^T \qT^\frac{1}{2} \bar\qH^H \bar\qH \qT^\frac{1}{2} \qe_j \right) \notag\\
&\leqover{(a)} \|\qR\| \| \qT^\frac{1}{2} \bar\qH^H \bar\qH \qT^\frac{1}{2} \| \tr\left(\qe_j \qe_j^T\right) \notag \\
&\leqover{(b)} \| \qT \| \| \qR\| \| \bar\qH^H \bar\qH \| \leq \alpha^3,
\end{align}
where (a) follows from 1)-iv) and 2) of Lemma \ref{LemmaNorm}, and (b) follows from 3) of Lemma \ref{LemmaNorm}. As a result,
\begin{align} \label{eq:parGijBound}
\| (\partial_{ij}\qG) \|_{\sfF}^2 ~~&\leqover{(a)} 4 \left( \| \qR^\frac{1}{2} \qE_{ij} \qT \qA^T\qR^\frac{1}{2} \|_{\sfF}^2+\| \qR^\frac{1}{2} \qE_{ij} \qT^\frac{1}{2} \bar\qH^H \|_{\sfF}^2 \right) \notag\\
&\leqover{(b)} 4 \left( \tr \left( \qR^\frac{1}{2} \qE_{ij} \qT \qA^T \qR \qA \qT \qE_{ji} \qR^\frac{1}{2} \right)+ \alpha^3 \right) \notag \\
&\leqover{(c)} 4 \left( \| \qR \|^{2} \tr \left( \qe_j^T  \qT \qA^T \qA \qT \qe_j \right) + \alpha^3 \right) \notag \\
&=  4 \left( \| \qR \|^2 \sum_{i=1}^{N} |\acute{\qa}_i \qT \qe_j|^2 + \alpha^3 \right),
\end{align}
where (a) follows from the triangle inequality of the Frobenius norm and Lemma \ref{sumIneq}, (b) follows from (\ref{eq:RETHR_Ineq}), (c) follows from 1)-iv) and
2) of Lemma \ref{LemmaNorm}, and $\acute{\qa}_i$ represents the $i$th row vector of $\qA$.

Recalling the definition of $\qA(r,c,s)$ in (\ref{eq:defArcs}), when $i\neq r$, a direct application of Lemma \ref{lemma_eleIneq} yields
\begin{align}
\Ex\left\{\left| \acute{\qa}_i \qT \qe_j \right|^{2p}\right\}&=\Ex\left\{\left| \sum_{k=1}^{n} A_{ik}T_{kj}  \right|^{2p}\right\}\notag\\
&\leq C_p  \left( \sum_{k=1}^{n} \Ex\left\{ |A_{ik}|^{2p} |T_{kj}|^{2p}\right\}+\left( \sum_{k=1}^{n} \Ex\left\{|A_{ik}|^2 |T_{kj}|^2\right\}\right)^{p} \right).\label{eq:bnd_ea}
\end{align}
When $i=r$, similarly, we have
\begin{equation}\label{eq:bnd_ea1}
\Ex\left\{\left| \acute{\qa}_i \qT \qe_j \right|^{2p}\right\} \leq C_p  \left( \sum_{k\neq c}^{n} \Ex\left\{|A_{ik}|^{2p} |T_{kj}|^{2p}\right\}+\left( \sum_{k\neq c}^{n} \Ex \left\{|A_{ik}|^2 |T_{kj}|^2\right\} \right)^{p} \right) + C_p |s|^{2p}.
\end{equation}
From the definition of $\qA(r,c,s)$ in (\ref{eq:defArcs}), we get
\begin{equation} \label{eq:AijIneq}
\Ex\{|A_{ij}|^{2p}\} \leq \left\{\begin{array}{cl}
\frac{\varepsilon_n^{2p-2}}{n},                           & \mbox{if}~ i<r, ~\mbox{or}~ i=r ~\mbox{and}~ j < c, \\
|s|^{2p},                                & \mbox{if}~ i=r, ~\mbox{and}~ j=c, \\
\frac{\prod_{l=1}^{p} (2l-1)}{n^p},  & \mbox{otherwise}.
\end{array}\right.
\end{equation}
Then, the above gives the simple bound $\Ex\{|A_{ij}|^{2p}\} \leq \frac{C}{n}$ for $i\neq r$ and $j\neq c$. Note that
\begin{equation} \label{eq:TklNorm}
    \sum_{k=1}^{n} |T_{kj}|^{2p} \leq   \left(\sum_{k=1}^{n} |T_{kj}|^{2}\right)^{p} = \left(\qe_j^T \qT^2 \qe_j \right)^{p}
    \leq \| \qT \|^{2p}
    \leq \alpha^{2p}.
\end{equation}
From (\ref{eq:AijIneq}) and (\ref{eq:TklNorm}), we get, when $i\neq r$,
\begin{equation}
\sum_{k=1}^{n} \Ex\left\{ |A_{ik}|^{2p} |T_{kj}|^{2p}\right\} \leq \frac{C}{n} \sum_{k=1}^{n} |T_{kj}|^{2p} \leq \frac{C \alpha^p}{n} = O \left( \frac{1}{n} \right),
\end{equation}
and similarly, when $i = r$,
\begin{equation}
\sum_{k\neq c}^{n} \Ex\left\{ |A_{ik}|^{2p} |T_{kj}|^{2p}\right\} = O \left( \frac{1}{n} \right).
\end{equation}
Therefore, using (\ref{eq:bnd_ea}) and (\ref{eq:bnd_ea1}) with the above bounds, we have
\begin{equation}
\Ex\left\{  |\acute{\qa}_i \qT \qe_j|^{4}\right\}=\left\{\begin{array}{cl}
O \left( \frac{1}{n} \right) + |s|^{4}, & \mbox{if }i = r,\\
O \left( \frac{1}{n} \right),  & \mbox{otherwise},
\end{array}\right.
\end{equation}
which, together with Lemma \ref{lemma_eleIneq}, ensures that
\begin{equation} \label{eq:aTeBound}
\Ex\left\{ \left|\sum_{i\neq r}^{N}|\acute{\qa}_i \qT \qe_j|^2-\Ex \left\{|\acute{\qa}_i \qT \qe_j|^2\right\} \right|^2\right\}\leq
C\sum_{i\neq r}^{N}\Ex\left\{ \left||\acute{\qa}_i \qT \qe_j|^2-\Ex\left\{ |\acute{\qa}_i \qT \qe_j|^2\right\} \right|^2\right\}= O(1).
\end{equation}
Combining everything together, we get
\begin{align}
\Ex\left\{ | \partial_{rc}^{3}f(\qA(r,c,s)) |\right\}
    &\leqover{(a)} \frac{C}{N} \Ex\left\{ \left( C' + \sum_{i=1}^{N} |\acute{\qa}_i \qT \qe_j|^2 \right)^{\frac{3}{2}}\right\} \notag \\
    &\leqover{(b)} \frac{C}{N} \left( C' + |s|^3 + \Ex\left\{ \left( \sum_{i\neq r}^{N} |\acute{\qa}_i \qT \qe_j|^2 \right)^{\frac{3}{2}}\right\} \right) \notag \\
    &\leq \frac{C}{N} \left( C' + |s|^3+ \Ex\left\{ \left( \sum_{i\neq r}^{N} |\acute{\qa}_i \qT \qe_j|^2-\Ex\left\{ |\acute{\qa}_i \qT \qe_j|^2\right\} \right)^{\frac{3}{2}}\right\} \right) +  \frac{C}{N}\left(\sum_{i\neq r}^{N} \Ex\left\{ |\acute{\qa}_i \qT \qe_j|^2\right\}\right)^{\frac{3}{2}} \notag \\
    &\leqover{(c)} \frac{C}{N} ~ \left( C' + |s|^3 + \left( \Ex\left\{ \left|\sum_{i\neq r}^{N}    |\acute{\qa}_i \qT \qe_j|^2-\Ex\left\{ |\acute{\qa}_i \qT \qe_j|^2\right\}\right|^2\right\}\right)^{\frac{3}{4}} \right)+\frac{C}{N}\left(\sum_{i\neq r}^{N} \Ex\left\{ |\acute{\qa}_i \qT \qe_j|^2\right\}\right)^{\frac{3}{2}} \notag \\
    &\leqover{(d)} \frac{C}{N} ~ \left( C' + |s|^3 \right), \label{eq:diffFinal}
\end{align}
where (a) follows from (\ref{eq:parGijBound}), (b) follows from Lemma \ref{sumIneq}, (c) is due to the fact that $(\Ex \{|\cdot|^p\})^\frac{1}{p}$ is a
nondecreasing function of $p$, and (d) follows from (\ref{eq:aTeBound}).

Finally, we can evaluate the second and third lines of (\ref{eq:LindeCheck}). Using (\ref{eq:LindeCheck}) and (\ref{eq:diffFinal}), we have\footnote{Note that
$\Re\{f(\qA)\}$ is a smooth function and $| \partial_{ij}^{p} \Re\{f(\qA)\} | \leq | \partial_{ij}^{p} f(\qA) |$ for each $p$.}
\begin{align}
    \left|\Ex \{ \Re\{m_{\qB_N}(z)\} \} - \Ex \{ \Re\{m_{\calqB_N}(z)\} \}\right|
    &\leq \frac{C}{N} \sum_{i=1}^{N} \sum_{j=1}^{n} \left(
    \Ex \left\{ \int_{0}^{|X_{ij}|/\sqrt{n}} (C'+|s|^3)\left( \frac{X_{ij}}{\sqrt{n}} - s \right)^2 ds \right\} \right. \notag\\
    & ~~~~~~~~~~~~~~~~~
    \left.
    + \Ex \left\{ \int_{0}^{|\calX_{ij}|/\sqrt{n}} (C'+|s|^3)\left( \frac{\calX_{ij}}{\sqrt{n}} - s \right)^2 ds \right\}
    \right) \notag\\
    &\leq \frac{C}{N} \sum_{i=1}^{N} \sum_{j=1}^{n} \left(
    \frac{C'}{3} \Ex \left\{\left(\frac{|X_{ij}|}{\sqrt{n}}\right)^3\right\} + \frac{1}{60} \Ex \left\{\left(\frac{|X_{ij}|}{\sqrt{n}}\right)^6\right\}  \right. \notag \\
    & ~~~~~~~~~~~~~~~~~ \left.
    + \frac{C'}{3}\Ex \left\{\left(\frac{|\calX_{ij}|}{\sqrt{n}}\right)^3\right\} + \frac{1}{60} \Ex \left\{ \left(\frac{|\calX_{ij}|}{\sqrt{n}}\right)^6\right\}
    \right)= O(\varepsilon_n).
\end{align}
The quantity $|\Ex \{ \Im\{m_{\qB_N}(z)\} \} - \Ex \{ \Im\{m_{\calqB_N}(z)\} \}|$ also admits the same upper bound. Therefore, we finish Step 2.

\section*{\sc Appendix B. Existence and Uniqueness}

In this appendix, we will consider existence and uniqueness of the solution to (\ref{eq:fixedPoint}).

\subsection*{Appendix B.1 Existence}
As pointed out in the paragraphs after Theorem \ref{Theorem_Hac07}, Formulas (\ref{eq:Psi})--(\ref{eq:tPhi}) can be obtained from those of \cite[Theorems 2.4 and
2.5]{Hac-07} by recovering the effect from the eigenvectors $\qU_k$'s. Therefore existence of $e_i(z)$ follows from that of the corresponding solution $\psi_i(z)$
of \cite[Theorems 2.4]{Hac-07}.

\subsection*{Appendix B.2 Uniqueness}

In fact, uniqueness of $e_i(z)$ also follows immediately from that of $\psi_i(z)$ in \cite[Theorems 2.4]{Hac-08b}. However, here we provide an alternative proof,
which is inspired by \cite{Dup-10,Cou-09}.

For the reader's convenience, we recall the notation introduced in Theorem \ref{mainTh_uniq}:
\begin{subequations} \label{eq:fixedPointRecall}
\begin{align}
e_i(z) &= \frac{1}{N} \tr\left( \qR_i \qPsi(z)\right),\\
\tilde{e}_i(z) &= \frac{1}{n_i} \tr\left( \qT_i \langle \tilde{\qPsi}(z) \rangle_{i}\right),
\end{align}
\end{subequations}
where
\begin{subequations} \label{eq:fixedPoint2Recall}
\begin{align}
    \qPsi(z) &= \left( \qPhi(z)^{-1} - z \bar{\qH} \tilde{\qPhi}(z) \bar{\qH}^H \right)^{-1},  \\
    \tilde{\qPsi}(z) &= \left( \tilde{\qPhi}(z)^{-1} - z \bar{\qH}^H \qPhi(z) \bar{\qH} \right)^{-1},  \\
    \qPhi(z) &= \frac{-1}{z}\left( -\frac{1}{z}\qS + \sum_{i=1}^{K} \tilde{e}_i(z)\qR_i + \qI_N \right)^{-1}, \\
    \tilde{\qPhi}(z)         &= \diag\left( \tilde{\qPhi}_1(z),\dots, \tilde{\qPhi}_K(z) \right),
\end{align}
\end{subequations}
with $\tilde{\qPhi}_i(z) \triangleq -\frac{1}{z} (\qI_{n_i} + \beta_i e_i(z)\qT_i)^{-1}$. Let $\qe_z = [z e_1(z) \cdots z e_K(z)]^T $, $\tilde\qe_z =
[z\tilde{e}_1(z) \cdots z \tilde{e}_K(z)]^T$, $\qPhi_z = z \qPhi(z)$, $\tilde{\qPhi}_z = z \tilde{\qPhi}(z)$, $\qPhi_{zi} = z \qPhi_i(z)$, and $\tilde{\qPhi}_{zi}
= z \tilde{\qPhi}_i(z)$. To facilitate our notations, we, henceforth, denote by $\qPsi = \qPsi(z)$, $\tilde{\qPsi} = \tilde{\qPsi}(z)$, $\qPhi = \qPhi(z)$,
$\tilde{\qPhi} = \tilde{\qPhi}(z)$, $\tilde{\qPhi}_i = \tilde{\qPhi}_i(z)$. Suppose that $\{ e_i^{\circ}(z),\tilde{e}_i^{\circ}(z)\}_{1 \leq i \leq K}$ are another
solutions satisfying (\ref{eq:fixedPointRecall}) and let $\qPsi^{\circ}$, $\tilde{\qPsi}^{\circ}$, $\qPhi^{\circ}$, $\tilde{\qPhi}^{\circ}$, $\qPhi_z^{\circ}$,
$\tilde{\qPhi}_z^{\circ}$, $\qe_z^{\circ}$, $\tilde\qe_z^{\circ}$ be the matrices/vectors by replacing the entries $e_i(z)$'s and $\tilde{e}_i(z)$'s in $\qPsi$,
$\tilde{\qPsi}$, $\qPhi$, $\tilde{\qPhi}$, $\qPhi_z$, $\tilde{\qPhi}_z$, $\qe_z$, $\tilde\qe_z$ with $e_i^{\circ}(z)$'s and $\tilde{e}_i^{\circ}(z)$'s
respectively. We prove the uniqueness of $\qe_z$ and $\tilde\qe_z$ by showing that $\qe_z - \qe_z^{\circ} = \qzero$ and $\tilde\qe_z - \tilde\qe_z^{\circ} =
\qzero$.

Denote $\underline{\qT}_i \triangleq \diag(\qzero_{n_1},\dots,\qzero_{n_{i-1}}, \qT_i, \qzero_{n_{i+1}},\dots, \qzero_{n_K})$ for $i=1,\dots,K$. To simplify notation, we let
\begin{equation} \label{eq:defU}
\begin{aligned}
    u_{1,ij} &= \frac{1}{N} \tr\left(\qR_i \qPsi \qR_j \qPsi^H \right), \\
    u_{2,ij} &= \frac{\beta_j}{N} \tr\left(\qR_i \qPsi \bar{\qH}_j \tilde{\qPhi}_{zj}\qT_j\tilde{\qPhi}_{zj}^H \bar{\qH}_j^H\qPsi^H \right), \\
    v_{1,ij} &= \frac{\beta_j}{n_j} \tr \left( \underline{\qT}_i \tilde{\qPsi} \underline{\qT}_j\tilde{\qPsi}^H \right), \\
    v_{2,ij} &= \frac{1}{n_j} \tr\left( \underline{\qT}_i \tilde{\qPsi} \bar{\qH}^H \qPhi_z \qR_j \qPhi_z^H \bar{\qH}\tilde{\qPsi}^H \right).
\end{aligned}
\end{equation}
Moreover, let
\begin{equation} \label{eq:defGamma}
    \qGamma = \left[
          \begin{array}{cccc}
          \qGamma_{11} & \qzero & \qzero & \qGamma_{12}\\
          \qzero & \qGamma_{11} & |z|^2\qGamma_{12} & \qzero\\
          \qzero & \qGamma_{21} & \qGamma_{22} & \qzero\\
          |z|^2\qGamma_{21} & \qzero & \qzero & \qGamma_{22}\\
          \end{array}
          \right]
\end{equation}
with $\qGamma_{11} = [\Gamma_{11,ij}] \in \bbC^{K \times K}, ~\qGamma_{12} = [\Gamma_{12,ij}] \in \bbC^{K \times K}, ~\qGamma_{21} = [\Gamma_{21,ij}] \in \bbC^{K
\times K}, ~\qGamma_{22} = [\Gamma_{22,ij}] \in \bbC^{K \times K}$, and
\begin{align*}
    \Gamma_{11,ij}&=
    \left\{
    \begin{aligned}
    &0, & & i=j, \\
    &\frac{ u_{2,ij} }{1- u_{2,ii}},
    & & i \neq j,
    \end{aligned}
    \right.
    & \Gamma_{12,ij} &=
    \frac{ u_{1,ij} }{1- u_{2,ii}}, \\
    \Gamma_{22,ij} &=
    \left\{
    \begin{aligned}
    &0, & & i=j, \\
    &\frac{ v_{2,ij} }{1- v_{2,ii}},
    & & i \neq j,
    \end{aligned}
    \right.
    & \Gamma_{21,ij} &=
    \frac{ v_{1,ij} }{1- v_{2,ii}}.
\end{align*}
Similarly, let $\qGamma^{\circ}$ as well as $\qGamma_{11}^{\circ}$, $\qGamma_{12}^{\circ}$, $\qGamma_{21}^{\circ}$, and $\qGamma_{22}^{\circ}$ be the matrices by
replacing $\qPsi$, $\tilde{\qPsi}$, $\qPhi$, and $\tilde{\qPhi}$ with $\qPsi^{\circ H}$, $\tilde{\qPsi}^{\circ H}$, $\qPhi^{\circ H}$, and $\tilde{\qPhi}^{\circ
H}$ respectively.

Now, write $e_i(z) = e_{i,1}(z) + {\sf j} e_{i,2}(z)$, $\tilde{e}_i(z) = \tilde{e}_{i,1}(z) + {\sf j} \tilde{e}_{i,2}(z)$ and $z \equiv z_1 + {\sf j} z_2$. A
direct calculation then yields
\begin{align}
    &\Im\{ze_i(z)\} = \Im\left\{ \frac{1}{N}\tr\left(\qR_i \qPsi \left(z\qPsi^{-H}\right) \qPsi^H \right)\right\}\notag \\
    \eqover{(a)}& \Im\left\{\frac{1}{N}\tr\left(\qR_i \qPsi \left( z\qS - \sum_{j=1}^{K} |z|^2\tilde{e}_j^*(z)\qR_j - \sum_{j=1}^{K} z \bar{\qH}_j \tilde{\qPhi}_{zj}^H \bar{\qH}_j^H - |z|^2\qI_N\right) \qPsi^H\right) \right\}\notag\\
    =& \Im\left\{\frac{1}{N}\tr\left(\qR_i \qPsi \left( z\qS - \sum_{j=1}^{K} |z|^2\tilde{e}_j^*(z)\qR_j
    + \sum_{j=1}^{K} \bar{\qH}_j \tilde{\qPhi}_{zj} \left(-z \tilde{\qPhi}_{zj}^{-1}\right) \tilde{\qPhi}_{zj}^H \bar{\qH}_j^H - |z|^2\qI_N\right) \qPsi^H\right) \right\}
    \notag\\
    \eqover{(b)}& \Im\left\{\frac{1}{N}\tr\left(\qR_i \qPsi \left( z\qS - \sum_{j=1}^{K} |z|^2\tilde{e}_j^*(z)\qR_j
    + \sum_{j=1}^{K} \bar{\qH}_j \tilde{\qPhi}_{zj} (z\qI_{n_j} + \beta_j ze_j(z)\qT_j) \tilde{\qPhi}_{zj}^H \bar{\qH}_j^H - |z|^2\qI_N\right) \qPsi^H\right) \right\}
    \notag\\
    =& \sum_{j=1}^{K} \tilde{e}_{j,2}(z)  |z|^2 \frac{1}{N} \tr\left(\qR_i\qPsi \qR_j\qPsi^H\right)
     + \sum_{j=1}^{K} \Im\{ze_{j}(z)\} \frac{\beta_j}{N} \tr\left(\qR_i \qPsi \bar{\qH}_j \tilde{\qPhi}_{zj}\qT_j\tilde{\qPhi}_{zj}^H \bar{\qH}_j^H\qPsi^H \right)
    \notag\\
     & + \frac{z_2}{N}\tr \left(\qR_i \qPsi (\qS + \bar{\qH} \tilde{\qPhi}_z \tilde{\qPhi}_z^H \bar{\qH}^H)\qPsi^H\right)\notag \\
     \eqover{(c)}& \sum_{j=1}^{K} \tilde{e}_{j,2}(z)  |z|^2 u_{1,ij} + \sum_{j=1}^{K} \Im\{ze_{j}(z)\} u_{2,ij} + \frac{z_2}{N}\tr\left( \qR_i \qPsi \left(\qS + \bar{\qH} \tilde{\qPhi}_z \tilde{\qPhi}_z^H \bar{\qH}^H\right)\qPsi^H \right),
\end{align}
where (a) and (b) are obtained by expanding $\left(z\qPsi^{-H}\right)$ and $\left(-z \tilde{\qPhi}_{zj}^{-1}\right)$ respectively using
(\ref{eq:fixedPoint2Recall}), and (c) is obtained by using the definitions in (\ref{eq:defU}). Similarly we can get
\begin{align}
\Im\{z\tilde{e}_i(z)\}=& \sum_{j=1}^{K} e_{j,2}(z)   |z|^2 v_{1,ij} + \sum_{j=1}^{K} \Im\{z\tilde{e}_j(z)\} v_{2,ij}
    + \frac{z_2}{n_i} \tr\left( \underline{\qT}_i \tilde{\qPsi} \bar{\qH}^H \qPhi_z \qPhi_z^H \bar{\qH} \tilde{\qPsi}^H \right),\\
e_{i,2}(z)=& \sum_{j=1}^{K} \Im\{z\tilde{e}_j(z)\}  u_{1,ij} + \sum_{j=1}^{K} e_{j,2}(z) u_{2,ij} + \frac{z_2}{N}\tr\left( \qR_i\qPsi\qPsi^H \right),\\
\tilde{e}_{i,2}(z) =& \sum_{j=1}^{K} \Im\{ze_j(z)\}  v_{1,ij} + \sum_{j=1}^{K} \tilde{e}_{j,2}(z) v_{2,ij} + \frac{z_2}{n_i} \tr\left( \underline{\qT}_i\tilde{\qPsi}\left(\qI_n + \frac{1}{|z|^2} \bar{\qH}^H\qPhi_z\qS\qPhi_z^H\bar{\qH} \right) \tilde{\qPsi}^H \right).
\end{align}

Let
\begin{equation*}
\qeta = \left[ e_{1,2}(z), \dots, e_{K,2}(z), \Imag{ze_1(z)}, \dots, \Imag{ze_K(z)}, \tilde{e}_{1,2}(z), \dots \tilde{e}_{K,2}(z), \Imag{z\tilde{e}_i(z)}, \dots, \Imag{z\tilde{e}_K(z)} \right]^T.
\end{equation*}
By the definition of $\qGamma$ in (\ref{eq:defGamma}), $\qeta$ satisfies
\begin{equation} \label{eq:ineq_cb}
    \qeta = \qGamma \qeta + \qb,
\end{equation}
where $\qb = [\qb_1^{T} \, \qb_2^{T} \, \qb_3^{T} \, \qb_4^{T} ]^T$ with $\qb_1 =[\frac{ b_{1,i} }{1- u_{2,ii}} ], ~\qb_2 = [\frac{ b_{2,i} }{1- u_{2,ii}} ],
~\qb_3 = [\frac{ b_{3,i} }{1- v_{2,ii}} ], ~\qb_4 = [\frac{ b_{4,i} }{1- v_{2,ii}} ] \in \bbC^{K}$, and
\begin{equation}
\left\{\begin{aligned}
    b_{1,i} &= \frac{z_2}{N}\tr\left( \qR_i\qPsi\qPsi^H \right), \\
    b_{2,i} &= \frac{z_2}{N}\tr\left( \qR_i \qPsi \left(\qS + \bar{\qH} \tilde{\qPhi}_z \tilde{\qPhi}_z^H \bar{\qH}^H\right) \qPsi^H \right), \\
    b_{3,i} &= \frac{z_2}{n_i} \tr\left( \underline{\qT}_i\tilde{\qPsi}\left(\qI_n + \frac{1}{|z|^2} \bar{\qH}^H\qPhi_z^H\qS\qPhi_z\bar{\qH} \right) \tilde{\qPsi}^H \right), \\
    b_{4,i} &= \frac{z_2}{n_i} \tr\left( \underline{\qT}_i \tilde{\qPsi} \bar{\qH}^H \qPhi_z \qPhi_z^H \bar{\qH} \tilde{\qPsi}^H \right).
\end{aligned}\right.
\end{equation}
Let
\begin{equation}
    \qUpsilon = \diag \left(1- u_{2,11}, \cdots, 1- u_{2,KK}, 1- v_{2,11}, \cdots, 1-v_{2,KK} \right).
\end{equation}
Multiplying both sides of (\ref{eq:ineq_cb}) by $\qUpsilon$ gives
\begin{equation} \label{eq:mulIneq_cb}
    \qUpsilon\qeta = \qUpsilon\qGamma \qeta + \qUpsilon\qb.
\end{equation}
For $z \in \bbC^+$, it is observed that the following quantities
\begin{equation} \label{eq:posQua}
\left\{\begin{aligned}
    & e_{i,2}(z),~\Im\{ze_i(z)\},~\tilde{e}_{i,2}(z),~\Im\{z\tilde{e}_i(z)\}, ~\forall i, \\
    & u_{1,ij},~u_{2,ij},~v_{1,ij},~v_{2,ij}, ~\forall i,j, \\
    & b_{1,i},~b_{2,i},~b_{3,i},~b_{4,i}, ~\forall i.
\end{aligned}\right.
\end{equation}
are all positive. For any matrix $\qA = [a_{ij}]$, we write $\qA > \qzero$ if $a_{ij} > 0~\forall i,j$. From (\ref{eq:posQua}), we have that $\qeta >\qzero$,
$\qUpsilon\qGamma > \qzero$, and $\qUpsilon\qb > \qzero$. As a result, we get $\qUpsilon\qeta > \qzero$ [the right-hand side of (\ref{eq:mulIneq_cb})] and since
$\qeta >\qzero$, we conclude that
\begin{equation} \label{eq:a6}
1- u_{2,ii} > 0, ~~~~ 1- v_{2,ii} > 0, ~\forall i.
\end{equation}
Now, all the elements of $\qGamma$, $\qeta$, and $\qb$ are shown to be positive. Therefore, from (\ref{eq:ineq_cb}) and Lemma \ref{lineIneqLemma}, we get
$\rho(\qGamma) < 1$. Similarly, we also have $\rho(\qGamma^{\circ}) < 1$.

A standard computation involving the resolvent identity (Lemma \ref{lemma_Resolvent}) yields
\begin{align}
    & e_i(z) - e_i^{\circ}(z) \notag \\
    =& - \frac{1}{N}\tr\left(\qR_i \qPsi \left( \qPsi^{-1} - \qPsi^{\circ -1} \right) \qPsi^{\circ}\right) \notag \\
     \eqover{(a)}& \sum_{j=1}^{K} (z\tilde{e}_j(z)-z\tilde{e}^{\circ}_j(z)) \frac{1}{N} \tr\left(\qR_i\qPsi \qR_j \qPsi^{\circ}\right)
     + \frac{1}{N} \sum_{j=1}^{K} \tr\left( \qR_i\qPsi \bar{\qH}_j (\tilde{\qPhi}_{zj} - \tilde{\qPhi}_{zj}^{\circ}) \bar{\qH}_j^H \qPsi^{\circ}\right) \notag \\
    =& \sum_{j=1}^{K} (z\tilde{e}_j(z)-z\tilde{e}^{\circ}_j(z)) \frac{1}{N} \tr\left(\qR_i\qPsi \qR_j \qPsi^{\circ}\right)
     + \sum_{j=1}^{K} (e_{j}(z)-e^{\circ}_{j}(z)) \frac{\beta_j}{N} \tr\left( \qR_i\qPsi \bar{\qH}_j \tilde{\qPhi}_{zj} \qT_j \tilde{\qPhi}_{zj}^{\circ} \bar{\qH}_j^H \qPsi^{\circ}\right).
     \label{eq:difpis}
\end{align}
where (a) is obtained by expanding $\left( \qPsi^{-1} - \qPsi^{\circ -1} \right)$ using (\ref{eq:fixedPoint2Recall}). Similarly,
\begin{equation} \label{eq:difgamma}
\begin{aligned}
    &\tilde{e}_i(z) - \tilde{e}_i^{\circ}(z) \\
    =&
    \sum_{j=1}^{K} (ze_j(z)-ze^{\circ}_j(z)) \frac{\beta_j}{n_j} \tr\left( \underline{\qT}_i \tilde{\qPsi} \underline{\qT}_j \tilde{\qPsi}^{\circ}\right)
    +
    \sum_{j=1}^{K}  (\tilde{e}_j(z)-\tilde{e}^{\circ}_j(z))  \frac{1}{n_j}
    \tr \left(\underline{\qT}_i \tilde{\qPsi} \bar{\qH}^H \qPhi_z \qR_j \qPhi_z^{\circ} \bar{\qH} \tilde{\qPsi}^{\circ}\right).
\end{aligned}
\end{equation}

Now, let $\qtau \triangleq [\qe^T ~\qe_z^T~\tilde\qe^T~\tilde\qe_z^T]^T$ and $\qtau^{\circ} \triangleq [\qe^{\circ T} ~\qe_z^{\circ T}~\tilde\qe^{\circ
T}~\tilde\qe_z^{\circ T}]^T$. Thus we have
\begin{equation} \label{eq:LDifSysEta}
    \qtau-\qtau^{\circ} = \qDelta (\qtau-\qtau^{\circ}),
\end{equation}
where
\begin{equation*}
    \qDelta = \left[
          \begin{array}{cccc}
          \qDelta_{11} & \qzero & \qzero & \qDelta_{12}\\
          \qzero & \qDelta_{11} & z^2\qDelta_{12} & \qzero\\
          \qzero & \qDelta_{21} & \qDelta_{22} & \qzero\\
          z^2\qDelta_{21} & \qzero & \qzero & \qDelta_{22}\\
          \end{array}
          \right]
\end{equation*}
with $\qDelta_{11} = [\Delta_{11,ij}] \in \bbC^{K \times K}, ~\qDelta_{12} = [\Delta_{12,ij}] \in \bbC^{K \times K}, ~\qDelta_{21} = [\Delta_{21,ij}] \in \bbC^{K
\times K}, ~\qDelta_{22} = [\Delta_{22,ij}] \in \bbC^{K \times K}$ and
\begin{align*}
    \Delta_{11,ij} &=
    \left\{\begin{aligned}
    &0, & & i=j, \\
    &\frac{ \frac{\beta_j}{N} \tr\left( \qR_i \qPsi \bar{\qH}_j \tilde{\qPhi}_{zj} \qT_j \tilde{\qPhi}_{zj}^{\circ} \bar{\qH}_j^H \qPsi^{\circ} \right) }
    {1 - \frac{\beta_j}{N} \tr\left( \qR_i \qPsi \bar{\qH}_i \tilde{\qPhi}_{zi} \qT_i \tilde{\qPhi}_{zi}^{\circ} \bar{\qH}_i^H \qPsi^{\circ} \right)},
    & & i \neq j,
    \end{aligned}\right. \\
    \Delta_{12,ij} &=
    \frac{ \frac{1}{N} \tr\left(\qR_i \qPsi \qR_j \qPsi^{\circ} \right) }
    {1 - \frac{\beta_j}{N} \tr\left( \qR_i\qPsi \bar{\qH}_i \tilde{\qPhi}_{zi} \qT_i \tilde{\qPhi}_{zi}^{\circ} \bar{\qH}_i^H \qPsi^{\circ}\right)}, \\
    \Delta_{21,ij} &=
    \frac{ \frac{\beta_j}{n_j} \tr\left( \underline{\qT}_i \tilde{\qPsi} \underline{\qT}_j \tilde{\qPsi}^{\circ}\right) }
    {1- \frac{1}{n_j} \tr\left( \underline{\qT}_i \tilde{\qPsi}_z \bar{\qH}^H \qPhi_z \qR_i \qPhi^{\circ} \bar{\qH} \tilde{\qPsi}^{\circ} \right)},\\
    \Delta_{22,ij} &=\left\{\begin{aligned}
    &0, & & i=j, \\
    &\frac{ \frac{1}{n_j} \tr\left( \underline{\qT}_i \tilde{\qPsi} \bar{\qH}^H \qPhi_z \qR_j \qPhi_z^{\circ} \bar{\qH} \tilde{\qPsi}^{\circ}\right) }
    {1- \frac{1}{n_j} \tr \left( \underline{\qT}_i \tilde{\qPsi}_z \bar{\qH}^H \qPhi_z \qR_i \qPhi^{\circ} \bar{\qH} \tilde{\qPsi}^{\circ} \right)},
    & & i \neq j.
    \end{aligned}\right.
\end{align*}

Let $|\cdot|$ denote the operator taking the absolute values of the input vector or matrix. It follows from Lemma \ref{matrixIneqLemma1} that $\rho(\qDelta) \leq
\rho(|\qDelta|)$. Applying Lemma \ref{matrixIneqLemma3} with $\qA = \sqrt{\beta_j/N} \, \qR_i^\frac{1}{2} \qPsi \bar{\qH}_i \tilde{\qPhi}_{zi} \qT_i^\frac{1}{2}$
and $\qB^H = \sqrt{\beta_j/N}\qT_i^\frac{1}{2} \tilde{\qPhi}_{zi}^{\circ} \bar{\qH}_i^H \qPsi^{\circ} \qR_i^\frac{1}{2} $, we have a lower bound for the
denominator of $\Delta_{11,ij}$ by
\begin{align}
    &\left(1 - \frac{\beta_j}{N} \tr\left( \qR_i\qPsi \bar{\qH}_i \tilde{\qPhi}_{zi} \qT_i \tilde{\qPhi}_{zi}^{\circ} \bar{\qH}_i^H \qPsi^{\circ} \right)\right) \notag \\
    \geq&\left(1 - \frac{\beta_j}{N} \tr\left( \qR_i\qPsi \bar{\qH}_i \tilde{\qPhi}_{zi} \qT_i \tilde{\qPhi}_{zi}^H \bar{\qH}_i^H \qPsi^H \right)\right)^{\frac{1}{2}}
    \left(1 - \frac{\beta_j}{N} \tr \left( \qR_i \qPsi^{\circ H} \bar{\qH}_i \tilde{\qPhi}_{zi}^{\circ H} \qT_i \tilde{\qPhi}_{zi}^{\circ} \bar{\qH}_i^H \qPsi^{\circ} \right)\right)^{\frac{1}{2}}\label{eq:denomBound}
\end{align}
where the conditions $\tr(\qA\qA^H) = u_{2,ii}\leq 1$ and $\tr(\qB\qB^H) = u_{2,ii}^{\circ} \leq 1$ are satisfied by (\ref{eq:a6}). Applying the Cauchy-Schwarz
inequality to the numerator of $\Delta_{11,ij}$, we then obtain from (\ref{eq:denomBound})
\begin{multline} \label{eq:IneqDelta11}
    |\Delta_{11,ij}|
    \leq
    \left(
    \frac{ \frac{\beta_j}{N} \tr\left( \qR_i\qPsi \bar{\qH}_j \tilde{\qPhi}_{zj} \qT_j \tilde{\qPhi}_{zj}^H \bar{\qH}_j^H \qPsi^H \right) }
    {1 - \frac{\beta_j}{N} \tr\left( \qR_i\qPsi \bar{\qH}_i \tilde{\qPhi}_{zi} \qT_i \tilde{\qPhi}_{zi}^H \bar{\qH}_i^H \qPsi^H \right)}
    \right)^{\frac{1}{2}}
    \left(
    \frac{ \frac{\beta_j}{N} \tr\left(\qR_i \qPsi^{\circ H} \bar{\qH}_j \tilde{\qPhi}_{zj}^{\circ H} \qT_j \tilde{\qPhi}_{zj}^{\circ} \bar{\qH}_j^H \qPsi^{\circ} \right) }
    {1 - \frac{\beta_j}{N} \tr\left( \qR_i\qPsi^{\circ H} \bar{\qH}_i \tilde{\qPhi}_{zi}^{\circ H} \qT_i \tilde{\qPhi}_{zi}^{\circ} \bar{\qH}_i^H \qPsi^{\circ}\right)}
    \right)^{\frac{1}{2}}.
\end{multline}
Recalling the definitions of the entries of $\qGamma$, (\ref{eq:IneqDelta11}) is equivalent to
\begin{align}
|\Delta_{11,ij}|&\leq \left|\frac{ u_{2,ij} }{1- u_{2,ii}}\right|^\frac{1}{2} \left|\frac{ u_{2,ij}^{\circ} }{1- u_{2,ii}^{\circ}}\right|^\frac{1}{2}= |\Gamma_{11,ij}|^\frac{1}{2} |\Gamma_{11,ij}^{\circ}|^\frac{1}{2}.
\end{align}
Likewise we have
\begin{equation}
    |\Delta_{12,ij}|  \leq |\Gamma_{12,ij}|^\frac{1}{2} |\Gamma_{12,ij}^{\circ}|^\frac{1}{2},~~
    |\Delta_{21,ij}|  \leq |\Gamma_{21,ij}|^\frac{1}{2} |\Gamma_{21,ij}^{\circ}|^\frac{1}{2}, ~~\mbox{and}~~
    |\Delta_{22,ij}|  \leq |\Gamma_{22,ij}|^\frac{1}{2} |\Gamma_{22,ij}^{\circ}|^\frac{1}{2}.
\end{equation}
We then conclude from Lemmas \ref{matrixIneqLemma1} and \ref{matrixIneqLemma2} that
\begin{equation}
    \rho(|\qDelta|) \leq \rho\left( \left[ |\Gamma_{ij}|^{\frac{1}{2}} |\Gamma_{ij}^{\circ}|^{\frac{1}{2}} \right] \right)
    \leq \rho(\qGamma)^{\frac{1}{2}} \rho(\qGamma^{\circ})^{\frac{1}{2}} < 1
\end{equation}
where the fact that $\rho(\qGamma) < 1$ and $\rho(\qGamma^{\circ}) < 1$ is proved before. As pointed out by \cite{Cou-09}, this contradicts to the statement that
$\qDelta$ has an eigenvalue equal to $1$ via (\ref{eq:LDifSysEta}). Therefore we get $\qe=\qe^{\circ}$ and $\tilde{\qe} = \tilde{\qe}^{\circ}$ if $z \in \bbC^+$.
If $z \in \bbC^{-}$ or $z \in \bbR^{-}$, similar arguments apply and details are omitted here. Theorem \ref{mainTh_uniq} is thus proved.

\section*{\sc Appendix C. Proof of Theorem \ref{mainTh_Cap}}
Recalling (\ref{eq:shannonTrans}), we have \cite[page 891]{Hac-07}
\begin{equation} \label{eq:a7}
     \calV_{\qB_N}(\sigma^2) = \int_{\sigma^2}^{\infty} \left( \frac{1}{\omega} - m_{\qB_N}(-\omega) \right) d\omega, ~~\mbox{for }\sigma^2 \in \bbR^+.
\end{equation}
In Appendix C.1, we first show $\Ex\{\calV_{\qB_N}(\sigma^2)\} - \calV_N(\sigma^2) \rightarrow 0$; i.e.,
\begin{equation} \label{eq:capConverg}
    \int_{\sigma^2}^{\infty} \left( \frac{1}{\omega} - \Ex\{ m_{\qB_N}(-\omega) \} \right) d\omega
    - \int_{\sigma^2}^{\infty} \left( \frac{1}{\omega} - \frac{1}{N} \tr\left(\qPsi(-\omega)\right) \right) d\omega
    \xrightarrow{\largeN\rightarrow \infty} 0.
\end{equation}
Next, we show in Appendix C.2 that with the additional assumptions in Remark 2, (\ref{eq:capConverg}) can be strengthened to almost surely convergence as
$\largeN\rightarrow \infty$. Finally, in Appendix C.3, we show that $\int_{\sigma^2}^{\infty} \frac{1}{\omega} - \frac{1}{N} \tr\left(\qPsi(-\omega)\right)
d\omega$ can be written more explicitly as (\ref{eq:AsyShannon}).

\subsection*{\sc Appendix C.1 Proof of the convergence of $\Ex\{\calV_{\qB_N}(\sigma^2)\} - \calV_N(\sigma^2)$}
By Theorem \ref{mainTh_Stj} together with the dominated convergence theorem, we have
\begin{align}
    \left( \frac{1}{\omega} - \Ex\{ m_{\qB_N}(-\omega) \} \right) - \left( \frac{1}{\omega} - \frac{1}{N} \tr\left(\qPsi(-\omega)\right) \right) ~\xrightarrow{\largeN\rightarrow \infty} 0 .
\end{align}
Let $F_N$ be the probability distribution whose Stieltjes transform is $\frac{1}{N} \tr\qPsi(z)$. Notice that \cite{Hac-07}
\begin{align}
    &\left| \left( \frac{1}{\omega} - \Ex\{ m_{\qB_N}(-\omega) \} \right) - \left( \frac{1}{\omega} - \frac{1}{N} \tr\left(\qPsi(-\omega)\right) \right) \right| \notag \\
    &~~~~~~~~~~\leq \left| \Ex \left\{ \int_{0}^{\infty} \left( \frac{1}{\omega} - \frac{1}{\lambda+\omega} \right) d F_{\qB_N} (\lambda) \right\} \right|
    + \left| \int_{0}^{\infty} \left( \frac{1}{\omega} - \frac{1}{\lambda+\omega} \right) d F_N (\lambda) \right| \notag \\
    &~~~~~~~~~~\leq \left| \frac{1}{\omega^2} \Ex \left\{ \int_{0}^{\infty} \lambda d F_{\qB_N} (\lambda) \right\} \right|
    + \left| \frac{1}{\omega^2} \int_{0}^{\infty}  \lambda d F_N (\lambda) \right|. \label{eq:a8}
\end{align}
Also, we notice the following equalities:
\begin{align}
    &\Ex\left\{ \int_{0}^{\infty}\lambda d F_{\qB_N}(\lambda) \right\} = \frac{1}{N} \sum_{k=1}^{K} \frac{\tr(\qT_k) \tr(\qR_k)}{n_k} + \frac{1}{N} \sum_{k=1}^{K} \tr(\bar\qH_k \bar\qH_k^H), \label{eq:eqTrue}\\
    &\int_{0}^{\infty}\lambda d F_N(\lambda) = \frac{1}{N} \sum_{k=1}^{K} \frac{\tr(\qT_k) \tr(\qR_k)}{n_k} + \frac{1}{N} \sum_{k=1}^{K} \tr(\bar\qH_k \bar\qH_k^H). \label{eq:eqAsy}
\end{align}
We will confirm these equalities later. From (\ref{eq:eqTrue}) and (\ref{eq:eqAsy}) together with the constraints in (\ref{eq:normAss}), we get
\begin{equation} \label{eq:estBound}
    \Ex\left\{ \int_{0}^{\infty}\lambda d F_{\qB_N}(\lambda) \right\}  = 2K ~~~ \mbox{and} ~~~
    \int_{0}^{\infty}\lambda d F_N(\lambda) = 2K.
\end{equation}
As a result, (\ref{eq:capConverg}) follows from the dominated convergence theorem.

It remains to check (\ref{eq:eqTrue}) and (\ref{eq:eqAsy}). For (\ref{eq:eqTrue}), a direct calculation yields
\begin{align}
    \Ex\left\{ \int_{0}^{\infty}\lambda d F_{\qB_N}(\lambda) \right\}
    =\frac{1}{N} \Ex \left\{ \tr(\qH\qH^H) \right\}
    =& \frac{1}{N} \sum_{k=1}^{K} \Ex \left\{  \tr(\qR_k\qX_k\qT_k\qX_k^H) \right\} + \frac{1}{N} \sum_{k=1}^{K} \tr\left( \bar\qH_k\bar\qH_k^H \right)  \label{eq:eqTrue2} \\
    =& \frac{1}{N} \sum_{k=1}^{K} \frac{\tr(\qT_k)\tr(\qR_k)}{n_k} + \frac{1}{N} \sum_{k=1}^{K} \tr\left( \bar\qH_k\bar\qH_k^H \right). \label{eq:powOfCh}
\end{align}
The equality (\ref{eq:eqAsy}) can be obtained by using \cite[(C.4)]{Hac-07}:
\begin{equation}
    \int_{0}^{\infty}\lambda d F_N(\lambda) = \lim_{z_2 \rightarrow \infty} {\rm Re}\left\{ -\sfj z_2 \left( \sfj z_2 \frac{1}{N} \tr\left(\qPsi(\sfj z_2)\right) + 1 \right)\right\}. \label{eq:eqAsy2}
\end{equation}
The proof of (\ref{eq:eqAsy2}) being the right-hand slide of (\ref{eq:eqAsy}) is similar to that in \cite[Lemma C.1]{Hac-07} and is therefore omitted. The proof of (\ref{eq:a4}) is complete.

\subsection*{\sc Appendix C.2 Proof of $\calV_{\qB_N}(\sigma^2) - \calV_N(\sigma^2) \xrightarrow{a.s.} 0 $}
Following \cite{Cou-09}, one can verify (\ref{eq:a5}) under the assumptions of Theorem \ref{AsConv_Cap}.

As for Remark 2, similar to (\ref{eq:eqTrue2}),  write
\begin{equation}
\int_{0}^{\infty}\lambda d F_{\qB_N}(\lambda)=\frac{1}{N} \sum_{k=1}^{K}\tr(\qR_k\qX_k\qT_k\qX_k^H)
+\frac{2}{N} \sum_{k=1}^{K} \tr \left( \qR_k^{\frac{1}{2}}\qX_k\qT_k^{\frac{1}{2}}\bar\qH_k ^H \right)+\frac{1}{N} \sum_{k=1}^{K} \tr(\bar\qH_k \bar\qH_k^H).
\end{equation}
Furthermore, write
\begin{equation}
\frac{1}{N} \sum_{k=1}^{K}\tr(\qR_k\qX_k\qT_k\qX_k^H) = \frac{1}{N}\sum_{k=1}^{K}\sum_{i=1}^N\sum_{j=1}^NR_{ij}^{(k)}\acute{\qx}_{k,j}^H\qT_k\acute{\qx}_{k,i},
\end{equation}
and
\begin{equation}
\frac{1}{N} \sum_{k=1}^{K} \tr( \qR_k^{\frac{1}{2}}\qX_k\qT_k^{\frac{1}{2}}\bar\qH_k ^H)=\frac{1}{N} \sum_{k=1}^{K}\sum_{j=1}^N\acute{\qx}_{k,j}^H\qy_{k,j}
\end{equation}
where $\qR_k=[R_{ij}^{(k)}]$ and $\acute{\qx}_{k,j}^H$ is the $j$th row of $\qX_k$ and $\qy_{k,j}=\qT_k^{\frac{1}{2}}\bar\qH_k ^H\qR_k^{\frac{1}{2}}\qe_j$. Note
that
\begin{equation}
\frac{1}{n_k}\sum_{i=1}^{N} \left|R_{ii}^{(k)}\right|^2\leq \frac{1}{n_k}\tr(\qR_k^2).
\end{equation}
It follows from Lemma \ref{sumIneq}, Lemma \ref{Burk2} and Lemma \ref{traceIneq} that
\begin{align}
\Ex\left\{ \left|\frac{1}{N}\sum_{k=1}^K\sum_{i=1}^N R_{ii}^{(k)} \left(\acute{\qx}_{k,i}^H\qT_k\acute{\qx}_{k,i}-\frac{1}{n_k}\tr(\qT_k)\right) \right|^2 \right\}
& \leq \frac{C}{N^2}\sum_{k=1}^K \sum_{i=1}^N |R_{ii}^{(k)}|^2\Ex\left\{\left|\acute{\qx}_{k,i}^H\qT_k\acute{\qx}_{k,i}-\frac{1}{n_k}\tr(\qT_k)\right|^2 \right\}\notag \\
& \leq \frac{C'}{N^2}\sum_{k=1}^K\sum_{i=1}^N |R_{ii}^{(k)}|^2\frac{1}{n_k^2}\tr(\qT_k^2) \notag \\
& \leq C'' \sum_{k=1}^K\frac{1}{N n_k}\frac{1}{N}\tr(\qR_k^2)\frac{1}{n_k}\tr(\qT_k^2).
\end{align}
This, together with Borel-Cantelli's Lemma, implies
\begin{equation}
\frac{1}{N}\sum_{k=1}^K\sum_{i=1}^N R_{ii}^{(k)}\acute{\qx}_{k,i}^H\qT_k\acute{\qx}_{k,i}-\frac{1}{N} \sum_{k=1}^{K} \frac{\tr(\qT_k) \tr(\qR_k)}{n_k}
 \xrightarrow{a.s.} 0.
\end{equation}
A direct calculation indicates that
\begin{align}
\Ex \left\{ \left|\frac{1}{N}\sum_{k=1}^K\sum_{i\neq j}^NR_{ij}^{(k)}\acute{\qx}_{k,j}^H\qT_k\acute{\qx}_{k,i}\right|^2 \right\}
& \leq \frac{C}{N^2}\sum_{k=1}^K\sum_{i\neq j}^N|R_{ij}^{(k)}|^2\frac{1}{n_k^2}\tr(\qT_k^2) \notag \\
& \leq C'\sum_{k=1}^K\frac{1}{N n_k}\frac{1}{N}\tr(\qR_k^2)\frac{1}{n_k}\tr(\qT_k^2).
\end{align}
It follows that
\begin{equation}\label{eq:a10}
\frac{1}{N} \sum_{k=1}^{K}\tr(\qR_k\qX_k\qT_k\qX_k^H)-\frac{1}{N} \sum_{k=1}^{K} \frac{\tr(\qT_k) \tr(\qR_k)}{n_k} \xrightarrow{a.s.} 0.
\end{equation}
Similarly, we have
\begin{align}
\Ex \left\{ \left|\frac{1}{N} \sum_{k=1}^{K}\sum_{j=1}^N\acute{\qx}_{k,j}^H\qy_{k,j}\right|^2 \right\}
&\leq C \sum_{k=1}^K\frac{1}{N n_k} \frac{1}{N} \tr(\qT_k\bar{\qH}_k^H\qR_k\bar{\qH}_k) \notag \\
&\leq C' \sum_{k=1}^K \frac{1}{N n_k} \left(\frac{1}{N}\tr(\qT_k^4)\frac{1}{N}\tr(\qR_k^4) \left( \frac{1}{N}\tr\left((\bar{\qH}_k^H\bar{\qH}_k)^2\right)\right)^2 \right)^{\frac{1}{4}},
\end{align}
which implies that
\begin{equation}\label{eq:a11}
\frac{1}{N} \sum_{k=1}^{K} \tr( \qR_k^{\frac{1}{2}}\qX_k\qT_k^{\frac{1}{2}}\bar\qH_k ^H)\stackrel{a.s.}\longrightarrow0.
\end{equation}
It follows from the generalized dominated convergence theorem, (\ref{eq:a10}), (\ref{eq:a11}), (\ref{eq:a7}) and (\ref{eq:a8}) that Remark 2 is true.

\subsection*{\sc Appendix C.3 Explicit Expression of $\int_{\sigma^2}^{\infty} \left( \frac{1}{\omega} - \frac{1}{N} \tr\left(\qPsi(-\omega)\right) \right) d\omega$}
In this appendix, we will prove
\begin{equation}
 \calV_N(\sigma^2) = \int_{\sigma^2}^{\infty} \left( \frac{1}{\omega} - \frac{1}{N} \tr\left(\qPsi(-\omega)\right) \right) d\omega,
\end{equation}
or equivalently,
\begin{equation} \label{eq:partialEta}
 \frac{\partial \calV_N(\sigma^2)}{\partial \sigma^2}= \frac{1}{N} \tr\left(\qPsi(-\sigma^2)\right) - \frac{1}{\sigma^2}.
\end{equation}
The right-hand side of (\ref{eq:partialEta}) can be reexpressed as
\begin{align}
\frac{1}{N} \tr\left(\qPsi(-\sigma^2)\right) - \frac{1}{\sigma^2}
 =& \frac{1}{N} \tr\left( \qPsi(-\sigma^2) -  (\sigma^2\qI)^{-1} \right) \notag \\
 \eqover{(a)}& - \frac{1}{N} \tr\left(\qPsi(-\sigma^2) \left( \sum_{i=1}^{K} \sigma^2 \tilde{e}_i(-\sigma^2) \qR_i  + \bar{\qH}(\sigma^2 \tilde{\qPhi}(-\sigma^2))\bar{\qH}^H \right) (\sigma^2\qI)^{-1}\right) \notag \\
 \eqover{(b)}& - \sum_{i=1}^{K} \tilde{e}_i(-\sigma^2) e_i(-\sigma^2) - \frac{1}{N} \tr\left(\qPsi(-\sigma^2)\bar{\qH}\tilde{\qPhi}(-\sigma^2)\bar{\qH}^H\right), \label{eq:difEtaR}
\end{align}
where (a) is due to the resolvent identity (Lemma \ref{lemma_Resolvent}) and (b) follows merely from the definitions of $\tilde{e}_i(-\sigma^2)$ and
$e_i(-\sigma^2)$. We then prove that (\ref{eq:difEtaR}) corresponds to the left-hand side of (\ref{eq:partialEta}). To this end, we define
\begin{multline}
    \calV(\sigma^2,\qkappa,\tilde{\qkappa}) \triangleq \frac{1}{N} \log\det\left( \qI_N + \sum_{i=1}^{K} \tilde{\kappa}_i \qR_i
    + \frac{1}{\sigma^2} \sum_{i=1}^{K} \bar{\qH}_i (\qI_{n_i}+ \beta_i \kappa_i \qT_i)^{-1} \bar{\qH}_i^H \right) +\\
    \frac{1}{N} \sum_{i=1}^{K} \log\det\left( \qI_{n_i} + \beta_i \kappa_i \qT_i \right)
                           - \sigma^2 \sum_{i=1}^{K} \kappa_i \tilde{\kappa}_i.
\end{multline}
Note that $\calV(\sigma^2, \qe_{\sigma}, \tilde{\qe}_{\sigma}) = \calV_N(\sigma^2)$ with $\qe_{\sigma} \triangleq [e_i(-\sigma^2)] \in \bbR^K$ and
$\tilde{\qe}_{\sigma} \triangleq [\tilde{e}_i(-\sigma^2)] \in \bbR^K$. The derivative of $\calV_N(\sigma^2)$ can be expressed as
\begin{equation}
    \frac{\partial \calV_N(\sigma^2)}{\partial \sigma^2}
    = \left.\frac{\partial \calV}{\partial \sigma^2}\right|_{(\sigma^2, \qe_{\sigma}, \tilde{\qe}_{\sigma})}
    + \sum_{i=1}^{K} \left.\frac{\partial \calV}{\partial \kappa_i}\right|_{(\sigma^2, \qe_{\sigma}, \tilde{\qe}_{\sigma})} \frac{\partial e_i}{\partial \sigma^2}
    + \sum_{i=1}^{K} \left.\frac{\partial \calV}{\partial \tilde{\kappa}_i}\right|_{(\sigma^2, \qe_{\sigma}, \tilde{\qe}_{\sigma})} \frac{\partial \tilde{e}_i}{\partial \sigma^2}.
\end{equation}
It can be checked that
\begin{equation}
    \left.\frac{\partial \calV}{\partial \kappa_i}\right|_{(\sigma^2, \qe_{\sigma}, \tilde{\qe}_{\sigma})} = 0,~~\mbox{and}~~
    \left.\frac{\partial \calV}{\partial \tilde{\kappa}_i}\right|_{(\sigma^2, \qe_{\sigma}, \tilde{\qe}_{\sigma})} = 0, ~~\mbox{for }i =1,\dots,K.
\end{equation}
Therefore, we have
\begin{align}
    \frac{\partial \calV_N(\sigma^2)}{\partial \sigma^2}
    =& \left.\frac{\partial \calV}{\partial \sigma^2}\right|_{(\sigma^2, \qe_{\sigma}, \tilde{\qe}_{\sigma})}\notag \\
    =&  - \frac{1}{N} \tr \left[ \left( \qI_N + \sum_{i=1}^{K} \tilde{e}_i(-\sigma^2) \qR_i
    + \frac{1}{\sigma^2} \sum_{j=1}^{K} \bar{\qH}_j (\qI_{n_j} + \beta_j e_j(-\sigma^2) \qT_j)^{-1} \bar{\qH}_j^H \right)^{-1} \right.\notag \\
    &~~~~~~~~~~~~~~~ \times \left.
    \left( \frac{1}{\sigma^4} \sum_{i=1}^{K} \bar{\qH}_i (\qI_{n_i} + \beta_i e_i(-\sigma^2) \qT_i)^{-1} \bar{\qH}_i^H \right) \right]
    - \sum_{i=1}^{K} e_i(-\sigma^2) \tilde{e}_i(-\sigma^2)\notag \\
    &= -\frac{1}{N} \tr\left(\qPsi(-\sigma^2)\bar{\qH}\tilde{\qPhi}(-\sigma^2)\bar{\qH}^H\right) - \sum_{i=1}^{K} \tilde{e}_i(-\sigma^2) e_i(-\sigma^2),
\end{align}
which is identical to (\ref{eq:difEtaR}) and hence we complete the proof.

\section*{\sc Appendix D. Mathematical Tools}
In this appendix, we provide some mathematical tools needed in the proof of Appendices A--C.

\begin{Lemma} {\rm \cite{Hor-91}} \label{LemmaNorm}
\begin{itemize}

\item[1)] Let $\qA = [A_{ij}]$ and $\qB$ be any matrices such that the product $\qA\qB$ exists and is a square matrix. Then
\begin{itemize}
\item[i)] $|\tr(\qA\qB)| \leq \| \qA \|_{\sfF} \| \qB \|_{\sfF}$,
\item[ii)] $\| \qA \qB \|_{\sfF} \leq \| \qA \|_{\sfF} \| \qB \|$,
\item[iii)] $\| \qA \qB \|_{\sfF} \leq \| \qA \|_{\sfF} \| \qB \|_{\sfF}$,
\item[iv)] $|A_{ij}| \leq \| \qA \|$.
\end{itemize}
\item[2)] If $\qA$ is nonnegative definite, we have $|\tr(\qA\qB)| \leq \| \qB \| \tr(\qA)$.
\item[3)] Let $\qA$ be any matrix such that the product $\qA\qB$ exists. Then, $ \|\qA \qB \| \leq \| \qA \| \| \qB \|$.
\end{itemize}
\end{Lemma}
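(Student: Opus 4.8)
The plan is to establish each of the items 1)i)--iv), 2), and 3) separately and directly from the definitions of the Frobenius and spectral norms together with the Cauchy--Schwarz inequality. None of these requires machinery beyond elementary matrix analysis, so the ``proof'' is a short sequence of verifications rather than a single argument with a hard core; indeed one could simply cite \cite{Hor-91}, but I would include the verifications for self-containedness.

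I would begin with item 3), since the other spectral-norm bounds lean on it. Using the operator-norm characterization $\|\qA\| = \sup_{\|\qx\|=1}\|\qA\qx\|$, for any unit vector $\qx$ one has $\|\qA\qB\qx\| \leq \|\qA\|\,\|\qB\qx\| \leq \|\qA\|\,\|\qB\|$, and taking the supremum over such $\qx$ gives 3). Item 1)iv) then follows from $|A_{ij}| = |\qe_i^H\qA\qe_j| \leq \|\qe_i\|\,\|\qA\qe_j\| \leq \|\qA\|$ by Cauchy--Schwarz and 3), since $\|\qe_i\| = \|\qe_j\| = 1$. For 1)i), I would write $\tr(\qA\qB) = \sum_{i,j} A_{ij}B_{ji}$ and apply the scalar Cauchy--Schwarz inequality, obtaining $|\tr(\qA\qB)| \leq \bigl(\sum_{i,j}|A_{ij}|^2\bigr)^{1/2}\bigl(\sum_{i,j}|B_{ji}|^2\bigr)^{1/2} = \|\qA\|_{\sfF}\,\|\qB\|_{\sfF}$.

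For 1)ii), I would slice $\qA\qB$ into rows: if $\qa_i$ denotes the $i$th row of $\qA$, then the $i$th row of $\qA\qB$ is $\qa_i\qB$, and $\|\qa_i\qB\| \leq \|\qB\|\,\|\qa_i\|$ by 3) (the spectral norm being invariant under transposition), so $\|\qA\qB\|_{\sfF}^2 = \sum_i\|\qa_i\qB\|^2 \leq \|\qB\|^2\sum_i\|\qa_i\|^2 = \|\qB\|^2\,\|\qA\|_{\sfF}^2$. Item 1)iii) is then immediate from 1)ii) and the elementary inequality $\|\qB\| \leq \|\qB\|_{\sfF}$. Finally, for 2) I would use the spectral decomposition $\qA = \sum_k \lambda_k\qu_k\qu_k^H$ with $\lambda_k \geq 0$ and $\|\qu_k\| = 1$, which gives $|\tr(\qA\qB)| = \bigl|\sum_k\lambda_k\,\qu_k^H\qB\qu_k\bigr| \leq \sum_k\lambda_k\|\qB\| = \|\qB\|\,\tr(\qA)$. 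Since every step is a one-line consequence of Cauchy--Schwarz or the operator-norm definition, there is no genuine obstacle; the only point requiring a moment's care is the asymmetric form of 1)ii) --- one must take the Frobenius norm of the \emph{left} factor and the spectral norm of the \emph{right} factor, which is why the product is sliced along rows rather than columns --- and the bookkeeping that all bounds hold for the generic, not necessarily square, shapes for which the indicated products are defined.
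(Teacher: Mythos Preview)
Your verifications are all correct; each step follows from Cauchy--Schwarz, the operator-norm characterization, or the spectral decomposition exactly as you describe. The paper itself gives no proof of this lemma at all --- it simply records the inequalities with a citation to Horn--Johnson \cite{Hor-91} --- so your write-up is strictly more self-contained than what appears there, at the cost of a few lines that any reader of the paper could reconstruct. One cosmetic remark: in 1)ii) you invoke invariance of the spectral norm under transposition to bound $\|\qa_i\qB\|$; it is slightly cleaner (and avoids the transpose entirely) to write $\|\qa_i\qB\|^2 = \qa_i\qB\qB^H\qa_i^H \leq \|\qB\qB^H\|\,\|\qa_i\|^2 = \|\qB\|^2\,\|\qa_i\|^2$, but your argument is correct as stated.
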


\begin{Lemma} \label{lemma_Resolvent}
{\rm (Resolvent Identity \cite{Hac-07}.)} For invertible $\qA$ and $\qB$ matrices, we have the identity
\[
    \qA^{-1} - \qB^{-1} = -\qA^{-1}(\qA-\qB)\qB^{-1}.
\]
\end{Lemma}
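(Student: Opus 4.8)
The plan is to verify the identity by direct algebraic manipulation, using only the invertibility of $\qA$ and $\qB$ together with associativity and distributivity of matrix multiplication. Concretely, I would start from the right-hand side and expand the product: $-\qA^{-1}(\qA-\qB)\qB^{-1} = -\qA^{-1}\qA\qB^{-1} + \qA^{-1}\qB\qB^{-1}$, and then use $\qA^{-1}\qA = \qI$ and $\qB\qB^{-1} = \qI$ to collapse this to $-\qB^{-1} + \qA^{-1} = \qA^{-1}-\qB^{-1}$, which is exactly the left-hand side. That completes the proof.

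An equivalent and perhaps cleaner route is to left-multiply the claimed identity by $\qA$ and right-multiply by $\qB$; since both $\qA$ and $\qB$ are invertible this is a reversible operation, so it suffices to check the transformed equation. The left-hand side becomes $\qA(\qA^{-1}-\qB^{-1})\qB = \qB - \qA$, while the right-hand side becomes $\qA\bigl(-\qA^{-1}(\qA-\qB)\qB^{-1}\bigr)\qB = -(\qA-\qB) = \qB - \qA$, and the two agree. I expect no genuine obstacle here; the only points requiring (minimal) care are keeping the noncommutative order of the factors intact throughout and invoking invertibility precisely at the cancellations $\qA^{-1}\qA = \qB\qB^{-1} = \qI$.
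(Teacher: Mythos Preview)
Your proof is correct; the identity follows by the direct expansion you give, and the alternative route via left- and right-multiplication is equally valid. The paper itself does not supply a proof of this lemma but merely states it with a citation, so there is nothing further to compare.
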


\begin{Lemma} \label{rankIneq} {\rm (\cite[0.4.5 and 0.4.6]{Hor-85}).} Some fundamental equality and inequalities involving the rank are:

\begin{itemize}
\item[i)] If $\qA \in \bbC^{N \times n}$, $\rank(\qA) = \rank(\qA^H)$.
\item[ii)]  If $\qA \in \bbC^{N \times n}$ and $\qB \in \bbC^{n \times k}$, $\rank(\qA\qB) \leq \min\{ \rank(\qA),\rank(\qB)\}$.
\item[iii)] If $\qA, ~\qB \in \bbC^{N \times n}$, $\rank(\qA+\qB) \leq \rank(\qA) + \rank(\qB)$.
\item[iv)] If $\qA, ~\qB \in \bbC^{N \times n}$, $\rank([\qA~\qB]) \leq \rank(\qA) + \rank(\qB)$.
\end{itemize}
\end{Lemma}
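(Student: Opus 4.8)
The plan is to derive all four items from two elementary facts: that $\rank(\qA)$ equals the dimension of the column space of $\qA$ (equivalently, that row rank equals column rank), and that $\dim(U+V)\le \dim U+\dim V$ for finite-dimensional subspaces $U,V$ of a common vector space. Everything then reduces to short subspace-containment arguments.

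For i), I would use a rank factorization: writing $r=\rank(\qA)$, there exist $\qC\in\bbC^{N\times r}$ of full column rank and $\qF\in\bbC^{r\times n}$ of full row rank with $\qA=\qC\qF$. Then $\qA^H=\qF^H\qC^H$ is again a product of a full-column-rank matrix $\qF^H\in\bbC^{n\times r}$ and a full-row-rank matrix $\qC^H\in\bbC^{r\times N}$ of inner dimension $r$, so $\rank(\qA^H)=r=\rank(\qA)$. (Equivalently, conjugate-transposing interchanges the row space and the column space, whose dimensions coincide, and conjugation of entries changes neither.)

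For ii), every column of $\qA\qB$ is an $\qA$-image of a column of $\qB$, so the column space of $\qA\qB$ is contained in that of $\qA$ and $\rank(\qA\qB)\le\rank(\qA)$. Applying this bound to $(\qA\qB)^H=\qB^H\qA^H$ gives $\rank\big((\qA\qB)^H\big)\le\rank(\qB^H)$, and invoking i) on both sides yields $\rank(\qA\qB)\le\rank(\qB)$; taking the minimum finishes ii). For iii), each column of $\qA+\qB$ lies in $\mathrm{col}(\qA)+\mathrm{col}(\qB)$, so $\rank(\qA+\qB)\le\dim\big(\mathrm{col}(\qA)+\mathrm{col}(\qB)\big)\le\rank(\qA)+\rank(\qB)$. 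For iv), the columns of $[\qA~\qB]$ are precisely the columns of $\qA$ together with those of $\qB$, hence $\mathrm{col}([\qA~\qB])=\mathrm{col}(\qA)+\mathrm{col}(\qB)$, and the same dimension count applies.

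The only genuinely substantive ingredient is the row-rank $=$ column-rank identity underlying i) (and re-used, via i), in ii)); the remaining steps are one-line arguments. I expect no real obstacle here, as these are textbook facts (Horn--Johnson, 0.4.5--0.4.6); the only decision to make is how much of the elementary linear algebra — in particular the existence of a rank factorization $\qA=\qC\qF$ — to take as given rather than re-derive from Gaussian elimination.
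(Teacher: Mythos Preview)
Your proposal is correct; each of the four arguments is a standard, complete proof of the corresponding rank fact. The paper itself, however, does not prove this lemma at all: it merely states the four items and cites Horn--Johnson \cite[0.4.5 and 0.4.6]{Hor-85} as the source, treating them as well-known background collected in the ``Mathematical Tools'' appendix. So your write-up goes beyond what the paper provides, supplying the elementary subspace-dimension arguments that the paper (reasonably) takes for granted. There is nothing to compare at the level of proof strategy, since the paper offers none; your only decision is whether to include these details or, as the authors do, simply defer to the textbook reference.
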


\begin{Lemma} \label{disIneq}
{\rm (\cite[Theorem A.44]{Bai-10}.)} Let $\qA_1$ and $\qA_2$ be two $N \times n$ matrices. If $\qS$ and $\qD$ be Hermitian matrices of orders $N \times N$ and $n
\times n$, then we have
\[
    \sup_{x} \left|F_{\qS+\qA_1\qD\qA_1^H}(x) - F_{\qS+\qA_2\qD\qA_2^H}(x)\right|
    \leq \frac{1}{N} \rank(\qA_1-\qA_2).
\]
\end{Lemma}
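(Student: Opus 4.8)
The plan is to reduce the statement to an eigenvalue--counting (inertia) estimate. Since the Hermitian matrix $\qS$ is common to both terms, the difference
\[
\qM \;:=\; \left(\qS + \qA_1\qD\qA_1^H\right) - \left(\qS + \qA_2\qD\qA_2^H\right) \;=\; \qA_1\qD\qA_1^H - \qA_2\qD\qA_2^H
\]
is Hermitian (using $\qD^H=\qD$), and the key step is to show that $\qM$ has at most $r := \rank(\qA_1-\qA_2)$ strictly positive eigenvalues and at most $r$ strictly negative ones. To prove this I would put $W := \{\,\qw \in \bbC^N : (\qA_1-\qA_2)^H\qw = \qzero\,\}$; by i) of Lemma~\ref{rankIneq} one has $\rank\left((\qA_1-\qA_2)^H\right)=r$, hence $\dim W = N-r$. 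For every $\qw\in W$ we get $\qA_1^H\qw = \qA_2^H\qw$, so
\[
\qw^H\qM\qw \;=\; (\qA_1^H\qw)^H\qD(\qA_1^H\qw) - (\qA_2^H\qw)^H\qD(\qA_2^H\qw) \;=\; 0 .
\]
Thus the quadratic form of $\qM$ vanishes identically on the $(N-r)$-dimensional subspace $W$. Since any subspace on which $\qu^H\qM\qu>0$ for all $\qu\neq\qzero$ can meet $W$ only in $\qzero$, every such subspace has dimension at most $N-\dim W = r$; applying this to the span of the positive eigenvectors of $\qM$, and to the span of its negative eigenvectors, gives the claimed inertia bound.

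With the inertia bound in hand I would finish by a routine interlacing argument. Decompose $\qM = \qM_+ - \qM_-$ with $\qM_\pm$ positive semidefinite of rank $p_\pm\le r$, and set $\qP := \qS + \qA_2\qD\qA_2^H$. For Hermitian $\qQ$ write $NF_{\qQ}(x) = \#\{\,i:\lambda_i(\qQ)\le x\,\}$; by Weyl's monotonicity and interlacing inequalities, adding to a Hermitian matrix a positive semidefinite perturbation of rank $k$ never increases this count and decreases it by at most $k$. Applying this to $\qP + \qM_+ \succeq \qP$ and then to $(\qP+\qM_+)-\qM_- = \qP+\qM = \qS + \qA_1\qD\qA_1^H$ gives
\[
NF_{\qP}(x)-p_+ \;\le\; NF_{\qP+\qM}(x) \;\le\; NF_{\qP}(x)+p_-
\]
for every real $x$, whence $N\,|\,F_{\qS+\qA_1\qD\qA_1^H}(x)-F_{\qS+\qA_2\qD\qA_2^H}(x)\,|\le\max(p_+,p_-)\le r$, which is the desired bound after dividing by $N$.

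The only genuine obstacle is the inertia bound; everything afterwards is standard interlacing, and the constant $1/N$ is sharp (it is already attained with $N=n=2$, $\qD=\qI$, $\qA_1=\qI$, $\qA_2=\diag(1,0)$). It is worth stressing why one should route the argument through the subspace $W$: the naive expansion $\qM = (\qA_1-\qA_2)\qD\qA_1^H + \qA_2\qD(\qA_1-\qA_2)^H$ only yields $\rank(\qM)\le 2r$ and hence the weaker bound $2r/N$; it is the vanishing of $\qw^H\qM\qw$ on a subspace of codimension $r$ that lets one bound the numbers of positive and of negative eigenvalues of $\qM$ \emph{separately} by $r$, and this is what recovers the factor $1/N$. (Alternatively one may simply invoke \cite[Theorem~A.44]{Bai-10}; the above is essentially its proof.)
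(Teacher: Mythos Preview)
The paper does not supply its own proof of this lemma; it is simply quoted from \cite[Theorem~A.44]{Bai-10}. Your argument is correct and is essentially the standard proof given in that reference: bound the positive and negative inertia of the Hermitian difference $\qA_1\qD\qA_1^H-\qA_2\qD\qA_2^H$ separately by $r=\rank(\qA_1-\qA_2)$ via the $(N-r)$-dimensional null space of $(\qA_1-\qA_2)^H$, then apply rank-$k$ interlacing. Your remark about why the naive bound $\rank(\qM)\le 2r$ loses a factor of two is exactly the point of the argument.
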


\begin{Lemma} \label{disIneq2}
Let $\qS$ be a Hermitian matrix of order $N \times N$, $\qA_1$ and $\qA_2$ be $N \times n$ complex matrices, and $\qB_1$, $\qB_2$, $\qC_1$, $\qC_2$, and $\qD$ be
any matrices such that $\qB_1 \qD \qC_1$ and $\qB_2 \qD \qC_2$ exist and are of orders $N \times n$. Then,
\begin{multline*}
\sup_{x} \left|F_{\qS + (\qA_1 + \qB_1 \qD \qC_1)(\qA_1 + \qB_1 \qD \qC_1)^H}(x) - F_{\qS + (\qA_2 + \qB_2 \qD \qC_2)(\qA_2 + \qB_2 \qD \qC_2)^H}(x)\right| \\
    \leq\frac{1}{N} \left( \rank\left( \qA_1-\qA_2 \right) +  \rank\left( \qB_1-\qB_2 \right) + \rank\left( \qC_1-\qC_2 \right) \right) .
\end{multline*}
\end{Lemma}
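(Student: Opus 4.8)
The plan is to derive Lemma~\ref{disIneq2} from Lemma~\ref{disIneq} (the rank-perturbation bound \cite[Theorem A.44]{Bai-10}) together with the elementary rank estimates in Lemma~\ref{rankIneq}; the whole argument is short.

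First I would set $\qM_i \triangleq \qA_i + \qB_i \qD \qC_i$ for $i = 1, 2$, which by hypothesis are $N \times n$ matrices, so that the two empirical distributions in the statement are exactly $F_{\qS + \qM_1 \qM_1^H}$ and $F_{\qS + \qM_2 \qM_2^H}$. Applying Lemma~\ref{disIneq} with $\qM_1, \qM_2$ in the roles of the two factor matrices and with the Hermitian middle matrix taken to be $\qI_n$ --- this is \emph{not} the matrix $\qD$ of the present statement --- yields at once
\[
  \sup_x \bigl| F_{\qS + \qM_1 \qM_1^H}(x) - F_{\qS + \qM_2 \qM_2^H}(x) \bigr| \le \frac{1}{N}\,\rank(\qM_1 - \qM_2).
\]

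It then remains to bound $\rank(\qM_1 - \qM_2)$. I would use the telescoping identity $\qM_1 - \qM_2 = (\qA_1 - \qA_2) + (\qB_1 - \qB_2)\qD\qC_1 + \qB_2\qD(\qC_1 - \qC_2)$, obtained by adding and subtracting $\qB_2\qD\qC_1$. Subadditivity of rank (item iii of Lemma~\ref{rankIneq}, used twice) bounds $\rank(\qM_1 - \qM_2)$ by $\rank(\qA_1 - \qA_2) + \rank((\qB_1 - \qB_2)\qD\qC_1) + \rank(\qB_2\qD(\qC_1 - \qC_2))$, and submultiplicativity (item ii of Lemma~\ref{rankIneq}) bounds the last two summands by $\rank(\qB_1 - \qB_2)$ and $\rank(\qC_1 - \qC_2)$ respectively. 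Substituting into the displayed inequality gives the claim.

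There is no real obstacle here: the substance is entirely carried by Lemma~\ref{disIneq}, which already handles the perturbation of the ESD under a low-rank change of the factor matrix. The only points to watch are to check that Lemma~\ref{disIneq} is invoked with the identity as the (trivially Hermitian) middle matrix and with $\qS$ unchanged, and to arrange the telescoping so that each cross term carries a factor of $\qB_1 - \qB_2$ (resp.\ $\qC_1 - \qC_2$) on the outside, so that the submultiplicativity bound applies directly.
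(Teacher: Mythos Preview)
Your proposal is correct and is essentially identical to the paper's own proof: both apply Lemma~\ref{disIneq} (with the identity as the Hermitian middle matrix) to reduce to bounding $\rank\bigl((\qA_1+\qB_1\qD\qC_1)-(\qA_2+\qB_2\qD\qC_2)\bigr)$, then telescope $\qB_1\qD\qC_1-\qB_2\qD\qC_2$ through $\qB_2\qD\qC_1$ and invoke parts ii) and iii) of Lemma~\ref{rankIneq}. The only cosmetic difference is that you collapse the telescoping into a single step rather than first splitting off $\qA_1-\qA_2$.
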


\begin{proof}
\begin{align*}
    & \sup_{x} \left|F_{\qS + (\qA_1 + \qB_1 \qD \qC_1)(\qA_1 + \qB_1 \qD \qC_1)^H}(x) - F_{\qS + (\qA_2 + \qB_2 \qD \qC_2)(\qA_2 + \qB_2 \qD \qC_2)^H}(x)\right| \\
    \leqover{(a)} & \frac{1}{N} \rank\left( (\qA_1 + \qB_1 \qD \qC_1) - (\qA_2 + \qB_2 \qD \qC_2) \right) \\
    \leqover{(b)} & \frac{1}{N} \left( \rank\left( \qA_1-\qA_2 \right) +  \rank\left( \qB_1 \qD \qC_1-\qB_2 \qD \qC_2 \right) \right) \\
    = & \frac{1}{N} \left( \rank\left( \qA_1-\qA_2 \right) +  \rank\left( (\qB_1 \qD \qC_1 - \qB_2 \qD \qC_1) + (\qB_2 \qD \qC_1 -\qB_2 \qD \qC_2) \right) \right) \\
    \leq & \frac{1}{N} \left( \rank\left( \qA_1-\qA_2 \right) +  \rank\left( \qB_1 -\qB_2\right) +  \rank\left( \qC_1 -\qC_2\right) \right),
\end{align*}
where (a) follows from Lemma \ref{disIneq}, (b) follows from iii) of Lemma \ref{rankIneq}, and the last inequality follow from ii) and iii) of Lemma
\ref{rankIneq}.
\end{proof}

\begin{Lemma} \label{sumIneq}
For any $p \geq 1$ and real numbers $a_i$'s, we have
\begin{equation*}
    \left| \sum_{i=1}^n a_i \right|^p \leq n^{p-1} \sum_{i=1}^n |a_i|^p.
\end{equation*}
\end{Lemma}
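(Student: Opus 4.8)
The plan is to reduce to nonnegative summands via the triangle inequality and then exploit convexity of $x \mapsto x^p$ on $[0,\infty)$. Since $p \geq 1$, the triangle inequality gives $\bigl|\sum_{i=1}^n a_i\bigr| \leq \sum_{i=1}^n |a_i|$, and because $t \mapsto t^p$ is nondecreasing on $[0,\infty)$ we may raise both sides to the $p$-th power without reversing the inequality; hence it suffices to prove $\bigl(\sum_{i=1}^n b_i\bigr)^p \leq n^{p-1}\sum_{i=1}^n b_i^p$ with $b_i \triangleq |a_i| \geq 0$.

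For this reduced inequality I would first dispose of the trivial cases $p=1$ (equality) and $b_1=\cdots=b_n=0$ (both sides vanish), and then assume $p>1$ and $\sum_i b_i>0$. Dividing through by $n^p$ recasts the target as $\bigl(\frac{1}{n}\sum_{i=1}^n b_i\bigr)^p \leq \frac{1}{n}\sum_{i=1}^n b_i^p$, which is precisely Jensen's inequality for the convex function $\phi(x)=x^p$ (convex on $[0,\infty)$ since $\phi''(x)=p(p-1)x^{p-2}\geq 0$ for $p\geq 1$) applied with uniform weights $1/n$ at the points $b_1,\dots,b_n$; equivalently, it is the power-mean inequality asserting that the mean of order $p$ dominates the arithmetic mean.

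A shorter alternative I would keep on hand is a direct H\"older estimate: with conjugate exponent $q=p/(p-1)$, one has $\sum_{i=1}^n b_i = \sum_{i=1}^n b_i\cdot 1 \leq \bigl(\sum_{i=1}^n b_i^p\bigr)^{1/p}\bigl(\sum_{i=1}^n 1\bigr)^{1/q} = n^{1-1/p}\bigl(\sum_{i=1}^n b_i^p\bigr)^{1/p}$, and raising to the $p$-th power gives the claim at once. There is no genuine obstacle in this lemma; the only step worth a line of justification is the monotonicity of $t \mapsto t^p$ used to push the triangle inequality through the $p$-th power, which is why I would spell out the reduction to nonnegative $b_i$ explicitly rather than silently.
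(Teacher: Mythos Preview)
Your proposal is correct, and your H\"older alternative is exactly the paper's proof: the paper simply states that the lemma ``follows from a simple application of the H\"older's inequality,'' i.e., your bound $\sum_i b_i \leq n^{1-1/p}\bigl(\sum_i b_i^p\bigr)^{1/p}$ with $b_i=|a_i|$. Your primary Jensen/power-mean argument is an equally valid and essentially equivalent route---convexity of $x\mapsto x^p$ and H\"older with one constant factor are two faces of the same inequality---so there is nothing to add beyond noting that the paper opts for the one-line H\"older version.
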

\begin{proof}
This lemma follows from a simple application of the H\"older's inequality.
\end{proof}

\begin{Lemma} \label{lemma_eleIneq}
{\rm (Elementary Inequality \cite[page 29]{Bai-10}.)} If the $X_i$'s are independent with zero means, then
\begin{equation*}
    \Ex \left\{ \left| \sum X_i \right|^{2p}\right\} \leq C_p \left( \sum \Ex\left\{ |X_i|^{2p}\right\} + \left(\sum \Ex\left\{ |X_i|^{2}\right\}\right)^p \right).
\end{equation*}
\end{Lemma}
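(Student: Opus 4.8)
The plan is to recognize Lemma~\ref{lemma_eleIneq} as Rosenthal's inequality for independent centred summands (with exponent $q=2p\geq 2$) and to prove it by pairing a martingale square-function bound with an induction that halves the exponent at each stage. Because $\Ex\{X_i\}=0$, the partial sums $S_m\triangleq\sum_{i\leq m}X_i$ form a martingale for the natural filtration, so Burkholder's inequality for martingale differences gives, with a constant depending only on $p$,
\[
\Ex\Bigl\{\bigl|\textstyle\sum_i X_i\bigr|^{2p}\Bigr\}\;\leq\;C_p\,\Ex\Bigl\{\bigl(\textstyle\sum_i|X_i|^2\bigr)^p\Bigr\}.
\]
Everything is thereby reduced to bounding the $p$-th moment of the nonnegative variable $T\triangleq\sum_i|X_i|^2$. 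When $0<p\leq 1$ this is immediate: $t\mapsto t^p$ is subadditive on $\bbR^+$, so $T^p\leq\sum_i|X_i|^{2p}$ and one concludes using only the first term on the right-hand side (for $p=1$ this step is in fact an equality, by orthogonality of the $X_i$). Hence assume $p>1$.

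For $p>1$ decompose $T=\sum_iY_i+\mu$ with $Y_i\triangleq|X_i|^2-\Ex\{|X_i|^2\}$ independent and centred and $\mu\triangleq\sum_i\Ex\{|X_i|^2\}$. Lemma~\ref{sumIneq} (with $n=2$) gives $T^p\leq 2^{p-1}\bigl(|\sum_iY_i|^p+\mu^p\bigr)$, and $\mu^p=(\sum_i\Ex\{|X_i|^2\})^p$ is exactly the second target term. It remains to estimate $\Ex\{|\sum_iY_i|^p\}$: this is again a centred independent sum, but with the exponent halved from $2p$ to $p$, so one iterates the two moves above (Burkholder, then split off the mean of the new square-sum), reducing the exponent by a factor of two each round. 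The recursion may be stopped once the exponent $q$ lies in $[2,4]$, where it is handled directly: after Burkholder the square-sum $W\triangleq\sum_iW_i$ obeys $W^{q/2}=W^{q/2-1}\,W\leq\bigl(\sum_iW_i^{q/2-1}\bigr)\bigl(\sum_jW_j\bigr)$ (subadditivity, since $0\leq q/2-1\leq 1$), whose diagonal yields $\sum_i\Ex\{W_i^{q/2}\}$ while the off-diagonal, by independence and Lyapunov's inequality, is bounded by $(\sum_i\Ex\{W_i\})^{q/2}$.

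The step I expect to be the genuine obstacle is the reabsorption, at every level of the recursion, of the intermediate ``variance-of-squares'' quantities $\Ex\{Y_i^2\}=\mathrm{Var}(|X_i|^2)\leq\Ex\{|X_i|^4\}$ --- and, more generally, moments $\Ex\{|X_i|^{2r}\}$ with $1\leq r\leq p$ --- into the only two quantities allowed on the right-hand side, $\sum_i\Ex\{|X_i|^{2p}\}$ and $(\sum_i\Ex\{|X_i|^2\})^p$. The hypothesis controls moments only up to order $2p$, so when $p<2$ one must not invoke a fourth moment; the cure is an interpolation argument: log-convexity of $r\mapsto\Ex\{|X_i|^{2r}\}$ (Lyapunov) to write the offending moment as a convex combination of the orders $2$ and $2p$, then H\"older's inequality across the summation index $i$, and finally the weighted AM--GM (Young) inequality to split the resulting product $\bigl(\sum_i\Ex\{|X_i|^{2p}\}\bigr)^{a}\bigl(\sum_i\Ex\{|X_i|^2\}\bigr)^{b}$ --- whose exponents one checks add up correctly --- into a sum of the two target terms. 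Careful bookkeeping then shows that every constant produced along the way depends only on $p$, so they collapse into a single $C_p$; this interpolation is precisely where the two-term shape of the bound is forced. An essentially equivalent route that bypasses Burkholder first symmetrizes ($X_i$ replaced by $X_i-X_i'$ for an independent copy, at the cost of harmless universal constants), then inserts Rademacher signs and applies Khintchine's inequality conditionally on the $X_i$'s; the remaining analysis of $\Ex\{(\sum_i|X_i|^2)^p\}$ is the same.
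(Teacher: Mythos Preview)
The paper does not prove this lemma at all: it is stated in Appendix~D as a known result, with only a citation to Bai--Silverstein (page~29), so there is no ``paper's own proof'' to compare against. Your proposal is therefore doing strictly more than the paper does.

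That said, your outline is a correct route to Rosenthal's inequality. The reduction via Burkholder to $\Ex\{(\sum_i|X_i|^2)^p\}$, the centring $|X_i|^2=Y_i+\Ex\{|X_i|^2\}$, and the exponent-halving recursion are standard; your identification of the reabsorption of the intermediate moments $(\sum_i\Ex\{|X_i|^{2r}\})^{p/r}$ as the crux is accurate, and the Lyapunov--H\"older--Young chain you sketch does the job (for instance, at the first nontrivial level with $p>2$ one gets $(\sum_i\Ex\{|X_i|^4\})^{p/2}$, and your interpolation with $\theta=1/(p-1)$ lands exactly on the two target terms after Young with conjugate pair $2(p-1)/(p-2)$ and $2(p-1)/p$). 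The alternative symmetrization--Khintchine route you mention is in fact the proof Bai--Silverstein give at the cited location, so if you want to align with the reference rather than the (silent) paper, that version is marginally more direct: it avoids the recursion entirely, handling $\Ex\{(\sum_i|X_i|^2)^p\}$ in one shot after conditioning on $\{|X_i|\}$.
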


\begin{Lemma} \label{Burk2}
{\rm (Burkholder's Inequality \cite{Bur-73} or \cite[Lemma 2.12]{Bai-10}.)} Let $\{ X_i \}$ be a complex martingale difference sequence with respect to the
increasing $\sigma$-field $\{ {\cal F}_i\}$. Then for $p > 1$, we have
\begin{equation*}
    \Ex\left\{  \left| \sum X_i \right|^{p}\right\}\leq C_p \Ex\left\{ \left( \sum  |X_i|^{2}  \right)^\frac{p}{2}\right\}.
\end{equation*}
\end{Lemma}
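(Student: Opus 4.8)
This is the classical Burkholder martingale inequality, so in the write-up I would ultimately just cite \cite{Bur-73} (equivalently \cite[Lemma~2.12]{Bai-10}); were I to reconstruct a proof, the plan would proceed in two stages. First I would reduce the complex-valued statement to a real-valued one: writing $X_i = Y_i + \sfj Z_i$ with $Y_i \triangleq \Re\{X_i\}$ and $Z_i \triangleq \Im\{X_i\}$, both $\{Y_i\}$ and $\{Z_i\}$ are real martingale difference sequences for $\{\calF_i\}$, since conditional expectation commutes with taking real and imaginary parts. I would then apply the elementary inequality $(a^2+b^2)^{p/2} \le 2^{p-1}(|a|^p+|b|^p)$, valid for $p \ge 1$, with $a = \Re\{\sum_i X_i\}$ and $b = \Im\{\sum_i X_i\}$, together with $\sum_i Y_i^2 \le \sum_i |X_i|^2$ and $\sum_i Z_i^2 \le \sum_i |X_i|^2$, so that the claimed bound follows from the two real-valued estimates $\Ex\{|\sum_i Y_i|^p\} \le C_p\Ex\{(\sum_i Y_i^2)^{p/2}\}$ and the analogous one for $\{Z_i\}$.

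For the real case I would set $S_n = \sum_{i\le n} Y_i$, $S^*_\infty = \sup_n |S_n|$, and let $s_\infty = (\sum_i Y_i^2)^{1/2}$ be the square function; the target $\Ex\{(S^*_\infty)^p\} \le C_p\Ex\{s_\infty^p\}$ dominates $\Ex\{|\sum_i Y_i|^p\}$ and hence suffices. When $p \ge 2$ --- which is the only range this lemma is actually invoked with in the paper, cf. (\ref{eq:gamma5leq3}) and (\ref{eq:gamma3leq3}) --- I would argue directly: $M_n \triangleq S_n^2 - \sum_{i\le n} Y_i^2$ is a martingale, so I would apply Doob's $L^{p/2}$-maximal inequality to $\{S_n^2\}$, expand, control the martingale contribution through the Burkholder--Gundy good-$\lambda$ inequality comparing $S^*$ with $s$, use $\max_i |Y_i| \le s_\infty$, and absorb a fraction $\tfrac12 \Ex\{(S^*_\infty)^p\}$ appearing on the right via Young's inequality, rearranging to finish. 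For $1 < p < 2$ this $L^2$-style bookkeeping degenerates, and instead I would invoke the Davis decomposition of $\{Y_i\}$ into a ``large-jump'' part controlled in $L^1$ and a ``small-jump'' part controlled in $L^2$ (equivalently, Burkholder's biconvex majorant), exactly as in \cite{Bur-73}.

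The hard part is precisely the range $1 < p < 2$: there the square-function estimates break down and one cannot avoid either Davis's decomposition or Burkholder's special-function method, whereas the $p \ge 2$ case is essentially elementary. Since the present paper only ever applies this lemma with exponent $\ge 2$, the cleaner argument sketched above is all that would genuinely be needed here, and for the general statement citing \cite{Bur-73} is entirely standard.
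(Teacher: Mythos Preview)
The paper does not prove this lemma at all: it is stated in Appendix~D (``Mathematical Tools'') purely as a citation to \cite{Bur-73} and \cite[Lemma~2.12]{Bai-10}, with no accompanying argument. Your proposal to simply cite those references is therefore exactly what the paper does, and is the appropriate treatment here.

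Your additional sketch goes well beyond the paper. The complex-to-real reduction is clean and correct. The $p\ge 2$ outline is plausible in spirit, though the phrasing is a bit loose: invoking ``the Burkholder--Gundy good-$\lambda$ inequality comparing $S^*$ with $s$'' as an intermediate step risks circularity, since that comparison is essentially the content of the lemma itself. A self-contained $p\ge 2$ argument would more typically run Doob's maximal inequality together with the submartingale decomposition $S_n^2 = \sum_{i\le n} Y_i^2 + 2\sum_{i\le n} S_{i-1}Y_i$, bound the cross-term martingale via the induction hypothesis (or H\"older) in terms of $S^*$ and $s_\infty$, and then absorb as you indicate. Your observation that the paper only invokes the lemma with exponents $\ge 2$ is accurate and a useful remark. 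For the full range $1<p<2$ your pointer to Davis's decomposition is the right one, and again citation suffices.
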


\begin{Lemma} \label{PansIneq}
Let $\qx = [ \frac{1}{\sqrt{N}}X_i ] \in \bbC^{N}$ be a random vector, where $X_i$'s are independent complex random variables with zero mean and unit variance; and
$\qc = [c_i] \in \bbC^{N}$ be a deterministic vector independent of $\qx$. Assume that $|X_i|$'s are bounded by $\varepsilon\sqrt{N}$ with a constant $\varepsilon$
and $\|\qc\|^{2p}$ are bounded by a constant $C$ for $p \geq 1$. Then, for any $p \geq 1$, we have
\begin{equation}\label{eq:est_xc2p}
    \Ex \left\{| \qc^H \qx |^{2p}\right\} = O\left(\frac{1}{N}\right),
\end{equation}
and for $p \geq 2$,
\begin{equation} \label{eq:est_x2p}
    \Ex\left\{ \| \qx \|^{2p}\right\}  = O(1).
\end{equation}
\end{Lemma}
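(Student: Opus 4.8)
The plan is to prove the two estimates separately, in each case reducing to the elementary moment inequality of Lemma~\ref{lemma_eleIneq} applied to a suitable centred sum of independent random variables, and then cashing in the truncation hypothesis $|X_i|\leq\varepsilon\sqrt{N}$. That hypothesis is exactly what makes the lemma go through: since nothing is assumed about $X_i$ beyond $\Ex\{|X_i|^2\}=1$, every higher moment has to be manufactured from the second moment via $\Ex\{|X_i|^{2q}\}\leq(\varepsilon\sqrt{N})^{2q-2}\Ex\{|X_i|^2\}=\varepsilon^{2q-2}N^{q-1}$, and the resulting powers of $N$ must then be absorbed by the $N^{-1/2}$ normalisation built into $\qx$.

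First I would establish (\ref{eq:est_xc2p}). Writing $\qc^H\qx=\frac{1}{\sqrt{N}}\sum_{i=1}^N\bar c_iX_i$, the summands are independent and zero-mean (the real and imaginary parts may be handled simultaneously since only the modulus appears), so Lemma~\ref{lemma_eleIneq} gives $\Ex\{|\sum_i\bar c_iX_i|^{2p}\}\leq C_p\big(\sum_i|c_i|^{2p}\Ex\{|X_i|^{2p}\}+(\sum_i|c_i|^2\Ex\{|X_i|^2\})^p\big)$. Using the truncation bound for $\Ex\{|X_i|^{2p}\}$ together with $\sum_i|c_i|^{2p}\leq(\sum_i|c_i|^2)^p=\|\qc\|^{2p}$ (valid since $2p\geq2$), the bracket is at most $C_p(\varepsilon^{2p-2}N^{p-1}+1)\|\qc\|^{2p}=O(N^{p-1})$ because $\|\qc\|^{2p}$ is bounded; dividing by $N^p$ yields (\ref{eq:est_xc2p}).

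Next I would prove (\ref{eq:est_x2p}) by writing $\|\qx\|^2=\frac1N\sum_i|X_i|^2=1+\frac1N\sum_i(|X_i|^2-1)$, where the summands $|X_i|^2-\Ex\{|X_i|^2\}$ are again independent and zero-mean. By Lyapunov's inequality it suffices to treat $p=2q$ with $q$ a positive integer, since $\Ex\{\|\qx\|^{2p}\}\leq(\Ex\{\|\qx\|^{4q}\})^{p/(2q)}$ whenever $2q\geq p$. Then, from $\|\qx\|^{4q}=(1+\frac1N\sum_i(|X_i|^2-1))^{2q}\leq2^{2q-1}\big(1+|\frac1N\sum_i(|X_i|^2-1)|^{2q}\big)$, an application of Lemma~\ref{lemma_eleIneq} to $\sum_i(|X_i|^2-1)$, combined with the crude bounds $||X_i|^2-1|\leq|X_i|^2+1$, $\Ex\{|X_i|^{4q}\}\leq\varepsilon^{4q-2}N^{2q-1}$ and $\Ex\{||X_i|^2-1|^2\}\leq\Ex\{|X_i|^4\}\leq\varepsilon^2N$, yields $\Ex\{|\sum_i(|X_i|^2-1)|^{2q}\}=O(N^{2q})$; dividing by $N^{2q}$ gives $\Ex\{\|\qx\|^{4q}\}=O(1)$, hence (\ref{eq:est_x2p}).

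No step here is genuinely deep. The hard part — really the only part requiring care — is the bookkeeping of the powers of $N$: in each of the two terms produced by Lemma~\ref{lemma_eleIneq} (the raw sum of $2p$-th moments on one hand, the $p$-th power of the sum of second moments on the other) one must check that the truncation-induced growth is no worse than $N^{p-1}$ in the first estimate and $N^{2q}$ in the second, so that it is precisely swamped by the normalisation and the asserted $O(1/N)$ and $O(1)$ rates emerge.
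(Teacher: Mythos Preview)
Your proposal is correct and follows essentially the same route as the paper: both parts are proved by applying the elementary (Rosenthal--type) inequality of Lemma~\ref{lemma_eleIneq} to the relevant centred sum and then invoking the truncation bound $\Ex\{|X_i|^{2q}\}\leq(\varepsilon\sqrt{N})^{2q-2}$ to control the higher moments. The only difference is that for (\ref{eq:est_x2p}) you pass through Lyapunov's inequality to reduce to even exponents $2q$, whereas the paper simply applies Lemma~\ref{lemma_eleIneq} directly with exponent $p\geq 2$ (writing $p=2\cdot(p/2)$); your detour is harmless but unnecessary.
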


\begin{proof}
We will frequently use the fact that if $|X_i| \leq \varepsilon\sqrt{N}$, then $\Ex \{|X_i|^3\} \leq \varepsilon\sqrt{N}~\Ex\{ |X_i|^2 \} = \varepsilon\sqrt{N} $
and more generally, $\Ex \{|X_i|^p\} \leq \left(\varepsilon\sqrt{N}\right)^{p-2} \Ex \{ |X_i|^2 \} = \left(\varepsilon\sqrt{N}\right)^{p-2}$.

We first prove (\ref{eq:est_xc2p}). Using Lemma \ref{lemma_eleIneq}, we have, for $p \geq 1$,
\begin{align*}
    \Ex \left\{| \qc^H \qx |^{2p}\right\}  = \Ex\left\{  \left| \frac{1}{\sqrt{N}} \sum_{i=1}^{N} c_i X_i \right|^{2p}\right\}
                       &\leq \frac{C_p}{N^p} \left( \sum_{i=1}^{N} |c_i|^{2p} \Ex \left\{|X_i|^{2p}\right\} + \left(\sum_{i=1}^{N} |c_i|^2 \Ex \left\{|X_i|^{2}\right\}\right)^p \right) \\
                       &\leq \frac{C_p}{N^p} \left( C  \left(\varepsilon\sqrt{N}\right)^{2p-2} + C \right) \\
                       & = O\left(\frac{1}{N}\right).
\end{align*}

Next, we prove (\ref{eq:est_x2p}). For any $p \geq 2$, we have
\begin{align*}
    \Ex \left\{\| \qx \|^{2p}\right\}  = \Ex\left\{ | \qx^H \qx |^{p}\right\}
                       &\leq C \left( \Ex\left\{ \left| \qx^H \qx - \Ex \left\{\qx^H \qx\right\} \right|^{p}\right\} \right) + C \left(\Ex \left\{\qx^H \qx\right\} \right)^{p}  \\
                       &\leq \frac{C C_p}{N^p} \left( \sum_{i=1}^{N} \Ex \left\{ \left| |X_i|^2 - \Ex\left\{|X_i|^2\right\} \right|^{p}\right\} + \left(\sum_{i=1}^{N} \Ex \left\{ \left| |X_i|^2 - \Ex\left\{|X_i|^2 \right\} \right|^{2}\right\}\right)^\frac{p}{2} \right) + C \\
                       &\leq \frac{C' C_p}{N^p} \left(N \cdot \left(\varepsilon\sqrt{N}\right)^{2p-2} + \left(N \cdot \varepsilon^2 N\right)^\frac{p}{2} \right) + C \\
                       & = O(1),
\end{align*}
where the first inequality follows from Lemma \ref{sumIneq} and the second inequality follows from Lemma \ref{lemma_eleIneq}.
\end{proof}

\begin{Lemma} \label{traceIneq}
{\rm (\cite[Lemma B.26]{Bai-10}.)} Let $\qA \in \bbC^{N \times N}$ be a nonrandom matrix and $\qx = [X_i] \in \bbC^{N}$ be a random vector of independent entries.
Assume that $\Ex\{X_i\} = 0$, $\Ex\{|X_i|^2\} = 1$, and $\Ex\{|X_i|^l\} \leq c_l$. Then, for any $p \geq 1$,
\begin{equation*}
    \Ex\left\{ |\qx^H\qA\qx - \tr(\qA)|^p \right\} \leq C_p \left( \left(c_4 \tr(\qA\qA^H)\right)^{\frac{p}{2}} + c_{2p} \tr\left((\qA\qA^H)^{\frac{p}{2}}\right) \right).
\end{equation*}
\end{Lemma}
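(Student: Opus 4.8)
The plan is to follow the classical martingale-plus-Rosenthal argument that underlies \cite[Lemma B.26]{Bai-10}; since the statement is only cited here, I indicate the proof. First I would reduce to the case $p=2q$ an even integer: for general $p\ge 1$ one applies the even-exponent estimate with exponent $2\lceil p/2\rceil$ and uses $(\alpha+\beta)^{p/(2\lceil p/2\rceil)}\le\alpha^{p/(2\lceil p/2\rceil)}+\beta^{p/(2\lceil p/2\rceil)}$ together with monotonicity of Schatten norms in the index to recover the stated form (this only inflates the universal constant). So assume $p=2q$, and write
\[
\qx^H\qA\qx-\tr(\qA)=\sum_{i}\bigl(|X_i|^2-1\bigr)A_{ii}+\sum_{i\neq j}\bar X_i A_{ij}X_j\;\triangleq\;D+O,
\]
so that $\Ex\{|\qx^H\qA\qx-\tr(\qA)|^p\}\le 2^{p-1}\bigl(\Ex\{|D|^p\}+\Ex\{|O|^p\}\bigr)$, and bound the two pieces separately.

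For the diagonal term $D$ the summands are independent and mean-zero, so Lemma \ref{lemma_eleIneq} gives
\[
\Ex\{|D|^{p}\}\le C_p\Bigl(\textstyle\sum_i |A_{ii}|^{p}\,\Ex\{\,||X_i|^2-1|^{p}\}+\bigl(\sum_i |A_{ii}|^2\,\Ex\{\,||X_i|^2-1|^2\}\bigr)^{q}\Bigr).
\]
Here $\Ex\{\,||X_i|^2-1|^2\}=\Ex\{|X_i|^4\}-1\le c_4$, and $\Ex\{\,||X_i|^2-1|^{p}\}\le C_p(c_{2p}+1)$. Moreover $\sum_i|A_{ii}|^2\le\sum_{i,j}|A_{ij}|^2=\tr(\qA\qA^H)$, while, since the diagonal of the nonnegative matrix $\qA\qA^H$ is majorized by its eigenvalue sequence (Schur--Horn), convexity of $t\mapsto t^{q}$ yields $\sum_i|A_{ii}|^{p}\le\sum_i\bigl[(\qA\qA^H)_{ii}\bigr]^{q}\le\tr\bigl((\qA\qA^H)^{q}\bigr)$. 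Inserting these estimates produces the desired bound for $\Ex\{|D|^p\}$.

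For the off-diagonal term $O$ I would introduce the filtration $\calF_k=\sigma(X_1,\dots,X_k)$ and use the martingale decomposition $O=\sum_k\gamma_k$ with $\gamma_k=\bar X_k b_k+X_k c_k$, where $b_k=\sum_{i<k}A_{ki}X_i$ and $c_k=\sum_{i<k}\bar X_iA_{ik}$ are $\calF_{k-1}$-measurable and independent of $X_k$. Burkholder's inequality (Lemma \ref{Burk2}) then bounds $\Ex\{|O|^p\}$ by $C_p\,\Ex\bigl\{(\sum_k|\gamma_k|^2)^{q}\bigr\}$; since $|\gamma_k|^2\le 2|X_k|^2(|b_k|^2+|c_k|^2)$, one is led to estimate moments of $\sum_k(|b_k|^2+|c_k|^2)=\|\qA_L\qx\|^2+\|\qA_U^T\bar\qx\|^2$, where $\qA_L,\qA_U$ are the strictly lower and upper triangular parts of $\qA$, together with moments of the linear forms $b_k,c_k$ (controlled conditionally by Lemma \ref{lemma_eleIneq}). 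Separating the contribution of the $|X_k|^2$ factors from that of $b_k,c_k$, and handling the remaining quadratic form either by induction on $p$ (feeding the exponent-$q$ case of the lemma back in) or by the Burkholder--Rosenthal form of the inequality, produces the two terms $\bigl(c_4\tr(\qA\qA^H)\bigr)^{q}$ and $c_{2p}\tr\bigl((\qA\qA^H)^{q}\bigr)$, using $\tr(\qA_L^H\qA_L),\tr(\qA_U^H\qA_U)\le\tr(\qA\qA^H)$ and the same majorization bound for the higher power.

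The main obstacle is precisely this off-diagonal estimate: a direct application of Burkholder via Lemma \ref{Burk2} naturally yields only a bound in terms of $\bigl(\tr(\qA\qA^H)\bigr)^{p/2}$, which is strictly weaker than the claimed $\tr\bigl((\qA\qA^H)^{p/2}\bigr)$ when $p>2$. Recovering the sharp trace forces one to control $\sum_k\Ex_{k-1}\{|\gamma_k|^2\}$ and $\sum_k\Ex\{|\gamma_k|^p\}$ separately (the Burkholder--Rosenthal refinement), or to set up the induction on $p$ carefully so that the conditional-variance term is estimated at the half exponent. The diagonal term, by contrast, is routine once the Schur--Horn majorization step is in place, and the $p=2$ base case is a one-line computation.
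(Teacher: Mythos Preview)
The paper does not prove this lemma at all; it is merely quoted from \cite[Lemma B.26]{Bai-10} as one of several imported tools in Appendix~D, so there is no ``paper's own proof'' to compare against. Your sketch is precisely the classical argument behind that reference: split $\qx^H\qA\qx-\tr(\qA)$ into its diagonal and off-diagonal parts, control the diagonal by Rosenthal (Lemma~\ref{lemma_eleIneq}) plus the Schur--Horn majorization $\sum_i[(\qA\qA^H)_{ii}]^{q}\le\tr\bigl((\qA\qA^H)^{q}\bigr)$, and control the off-diagonal martingale via Burkholder with an induction on the exponent (or the Burkholder--Rosenthal refinement) to recover the sharper $\tr\bigl((\qA\qA^H)^{p/2}\bigr)$ rather than the cruder $\bigl(\tr(\qA\qA^H)\bigr)^{p/2}$. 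Your identification of the off-diagonal step as the only nontrivial point is accurate.

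One small caution on your reduction to even exponents: passing from exponent $2\lceil p/2\rceil$ back down to $p$ via Lyapunov and subadditivity of $t\mapsto t^{r}$ for $r\le 1$ does give $(\tr((\qA\qA^H)^{\lceil p/2\rceil}))^{p/(2\lceil p/2\rceil)}\le\tr((\qA\qA^H)^{p/2})$, but it leaves the moment constant as $c_{4\lceil p/2\rceil}^{\,p/(2\lceil p/2\rceil)}$ rather than the stated $c_{2p}$; these need not coincide in general. The proof in \cite{Bai-10} avoids this by running the martingale/induction argument directly for real $p\ge 1$, so if you want the constant exactly as written you should do the same rather than round up the exponent.
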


\begin{Lemma} {\rm \cite[0.7.3]{Hor-91}} \label{invLemma}
Let $\qA$ be partitioned as
\[
    \qA = \left[
    \begin{array}{cc}
     \qA_{11} & \qA_{12} \\
     \qA_{21} & \qA_{22}
    \end{array}
    \right].
\]
Invertibility is assumed for any subblock whose inverse is indicated. Then,
\[
    \qA^{-1} = \left[
    \begin{array}{cc}
     (\qA_{11}-\qA_{12}\qA_{22}^{-1}\qA_{21})^{-1} & \qA_{11}^{-1}\qA_{21}(\qA_{21}\qA_{11}^{-1}\qA_{12}-\qA_{22})^{-1} \\
     (\qA_{21}\qA_{11}^{-1}\qA_{12}-\qA_{22})^{-1}\qA_{21}\qA_{11}^{-1} & (\qA_{22}-\qA_{21}\qA_{11}^{-1}\qA_{12})^{-1}
    \end{array}
    \right].
\]
\end{Lemma}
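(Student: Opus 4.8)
The plan is to prove the formula by the standard block $LDU$ (Schur-complement) factorization, carried out twice --- once pivoting on $\qA_{11}$ and once on $\qA_{22}$ --- and then to reconcile the two resulting closed forms for the diagonal blocks through the matrix inversion (Woodbury) identity.

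First I would use the factorization
\[
\left[\begin{array}{cc}\qA_{11}&\qA_{12}\\\qA_{21}&\qA_{22}\end{array}\right]
=\left[\begin{array}{cc}\qI&\qzero\\\qA_{21}\qA_{11}^{-1}&\qI\end{array}\right]
\left[\begin{array}{cc}\qA_{11}&\qzero\\\qzero&\qS_{22}\end{array}\right]
\left[\begin{array}{cc}\qI&\qA_{11}^{-1}\qA_{12}\\\qzero&\qI\end{array}\right],
\]
with $\qS_{22}\triangleq\qA_{22}-\qA_{21}\qA_{11}^{-1}\qA_{12}$ the Schur complement of $\qA_{11}$. Each of the three factors is invertible under the stated hypotheses (the two unit-triangular factors always, the block-diagonal one because $\qA_{11}$ and $\qS_{22}$ are invertible), so $\qA$ is invertible, and inverting in reverse order gives
\[
\qA^{-1}=\left[\begin{array}{cc}\qI&-\qA_{11}^{-1}\qA_{12}\\\qzero&\qI\end{array}\right]
\left[\begin{array}{cc}\qA_{11}^{-1}&\qzero\\\qzero&\qS_{22}^{-1}\end{array}\right]
\left[\begin{array}{cc}\qI&\qzero\\-\qA_{21}\qA_{11}^{-1}&\qI\end{array}\right].
\]
Multiplying this out produces $\qA^{-1}$ with $(1,1)$-block $\qA_{11}^{-1}+\qA_{11}^{-1}\qA_{12}\qS_{22}^{-1}\qA_{21}\qA_{11}^{-1}$, $(2,2)$-block $\qS_{22}^{-1}$, and off-diagonal blocks $-\qA_{11}^{-1}\qA_{12}\qS_{22}^{-1}$ and $-\qS_{22}^{-1}\qA_{21}\qA_{11}^{-1}$.

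Next I would repeat the argument after interchanging the two index sets, pivoting on $\qA_{22}$ with Schur complement $\qS_{11}\triangleq\qA_{11}-\qA_{12}\qA_{22}^{-1}\qA_{21}$; this gives a second expression for $\qA^{-1}$ whose $(1,1)$-block is $\qS_{11}^{-1}$ and whose $(2,2)$-block is $\qA_{22}^{-1}+\qA_{22}^{-1}\qA_{21}\qS_{11}^{-1}\qA_{12}\qA_{22}^{-1}$. Since $\qA^{-1}$ is unique, the two forms agree block by block, so in particular $\qS_{11}^{-1}=\qA_{11}^{-1}+\qA_{11}^{-1}\qA_{12}\qS_{22}^{-1}\qA_{21}\qA_{11}^{-1}$ (the Sherman--Morrison--Woodbury identity). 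Taking the $(1,1)$-block from the second form, the $(2,2)$-block from the first form, and rewriting each off-diagonal block by absorbing the minus sign into the Schur complement --- $-\qS_{22}^{-1}=(\qA_{21}\qA_{11}^{-1}\qA_{12}-\qA_{22})^{-1}$ and symmetrically for the other corner --- reproduces exactly the mixed expression in the statement.

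The only delicate point, and the closest thing to an obstacle, is bookkeeping: one must verify that the blanket hypothesis about invertible subblocks actually supplies every inverse invoked above (namely $\qA_{11}^{-1}$, $\qA_{22}^{-1}$, $\qS_{11}^{-1}$ and $\qS_{22}^{-1}$), and that the off-diagonal blocks obtained from the two $LDU$-derived forms really do coincide, which once more is just the Woodbury identity. A tidier alternative for a write-up is to bypass the factorization and the uniqueness argument altogether and instead check directly that $\qA$ times the claimed matrix equals $\qI$, one block equation at a time; each of the four identities then collapses to one of the two Schur-complement relations, and the whole verification is a few lines of algebra.
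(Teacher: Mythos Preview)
Your argument is correct and complete: the two block-$LDU$ factorizations pivoting on $\qA_{11}$ and $\qA_{22}$ give the two Schur-complement forms for the diagonal blocks, and the off-diagonal blocks follow by absorbing the sign into $-\qS_{22}$. The direct-verification alternative you mention at the end would also work and is arguably cleaner.

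There is nothing to compare against, however: the paper does not prove this lemma. It is listed in Appendix~D as a standard tool quoted from \cite[0.7.3]{Hor-91}, with no argument given. (Incidentally, the $(1,2)$-block in the paper's display appears to contain a typo --- it should read $\qA_{11}^{-1}\qA_{12}(\qA_{21}\qA_{11}^{-1}\qA_{12}-\qA_{22})^{-1}$ rather than $\qA_{11}^{-1}\qA_{21}(\cdots)^{-1}$, as your derivation correctly produces; the printed version is not even dimensionally consistent.)
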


\begin{Lemma} \label{lineIneqLemma}
{\rm \cite[Lemma 9]{Cou-09}} If the components of $\qC$, $\qx$, and $\qb$ are all positive, then $\qx = \qC\qx + \qb$ implies $\rho(\qC) < 1$.
\end{Lemma}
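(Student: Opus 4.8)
The plan is to reduce the claim to a single application of the Perron--Frobenius theorem. Since every entry of $\qC$ is strictly positive, $\qC$ is a nonnegative irreducible (indeed primitive) matrix, so by Perron--Frobenius its spectral radius $\rho(\qC)$ is itself an eigenvalue of $\qC$ and there is a \emph{strictly} positive left eigenvector $\qu$, i.e.\ $\qu^T \qC = \rho(\qC)\, \qu^T$ with $u_i > 0$ for every $i$. This is the only nontrivial ingredient; the rest is elementary.

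Next I would left-multiply the hypothesis $\qx = \qC \qx + \qb$ by $\qu^T$ to obtain
\[
\qu^T \qx = \qu^T \qC \qx + \qu^T \qb = \rho(\qC)\, \qu^T \qx + \qu^T \qb,
\]
whence $(1 - \rho(\qC))\, \qu^T \qx = \qu^T \qb$. Because $\qu$, $\qx$, and $\qb$ all have strictly positive components, both $\qu^T \qx$ and $\qu^T \qb$ are strictly positive, so solving gives $1 - \rho(\qC) = (\qu^T \qb)/(\qu^T \qx) > 0$, that is, $\rho(\qC) < 1$.

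There is essentially no hard step here; the only point requiring care is that the Perron eigenvector be strictly positive rather than merely nonnegative, which is guaranteed by the strict positivity of $\qC$ (for a general nonnegative $\qC$ one would need an irreducibility hypothesis). An equivalent route that avoids naming the left eigenvector is to observe that $\qx > \qzero$ while $\qC \qx = \qx - \qb$ is strictly less than $\qx$ componentwise, so $\qx$ is a strictly positive strict subinvariant vector of the nonnegative matrix $\qC$; a standard monotonicity property of the spectral radius of nonnegative matrices (e.g.\ in Horn--Johnson) then yields $\rho(\qC) < 1$ directly.
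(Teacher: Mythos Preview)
Your argument is correct. The paper itself does not supply a proof of this lemma; it is simply quoted from \cite{Cou-09} in the mathematical-tools appendix with no accompanying argument. Your Perron--Frobenius derivation (and the alternative via the strict-subinvariance criterion) is exactly the standard justification behind that cited result, so there is nothing further to compare.
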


\begin{Lemma} \label{matrixIneqLemma1}
{\rm \cite[Theorem 8.1.18]{Hor-85}} Let $\qA = [A_{ij}]$ and $\qB = [B_{ij}]$ be square matrices. If $|A_{ij}| \leq B_{ij}, ~\forall i,j$, then $\rho(\qA) \leq
\rho(|\qA|) \leq \rho(\qB)$.
\end{Lemma}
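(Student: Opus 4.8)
The plan is to deduce both inequalities from a single principle: the spectral radius of a square matrix equals the limit of the $k$-th roots of the norms of its powers (Gelfand's formula), which I would combine with the entrywise monotonicity of a conveniently chosen matrix norm. Throughout, $|\qM|$ denotes the matrix of entrywise absolute values of $\qM$, and for entrywise-nonnegative matrices $\le$ denotes the entrywise order; the hypothesis $|A_{ij}|\le B_{ij}$ then reads $0\le|\qA|\le\qB$.

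First I would fix the norm $\|\qM\|_1\triangleq\sum_{i,j}|M_{ij}|$. A one-line computation shows it is submultiplicative, $\|\qM\qN\|_1\le\|\qM\|_1\,\|\qN\|_1$, so Gelfand's formula applies and gives $\rho(\qM)=\lim_{k\to\infty}\|\qM^k\|_1^{1/k}$ for every square $\qM$. Next, two entrywise power bounds are needed. By induction on $k$, using the triangle inequality $|\sum_\ell A_{i\ell}A_{\ell j}|\le\sum_\ell|A_{i\ell}|\,|A_{\ell j}|$ in the inductive step, one obtains $|(\qA^k)_{ij}|\le(|\qA|^k)_{ij}$ for all $k\ge1$. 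Similarly, since $0\le|\qA|\le\qB$ entrywise and products of entrywise-nonnegative matrices preserve the entrywise order, a second induction gives $(|\qA|^k)_{ij}\le(\qB^k)_{ij}$.

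Finally, because $\|\cdot\|_1$ is evidently monotone on entrywise-nonnegative matrices, the two bounds combine to $\|\qA^k\|_1\le\||\qA|^k\|_1\le\|\qB^k\|_1$ for every $k$. Raising to the power $1/k$ and letting $k\to\infty$, Gelfand's formula yields $\rho(\qA)\le\rho(|\qA|)\le\rho(\qB)$, which is exactly the claim. Every ingredient is elementary, so there is no genuine obstacle here; the only step needing a little bookkeeping is the induction establishing the two entrywise power inequalities, and even that is routine. Alternatively, since the statement is classical, one may simply invoke \cite[Theorem 8.1.18]{Hor-85}; the argument sketched above is recorded only for self-containedness.
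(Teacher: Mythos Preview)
Your argument is correct: the entrywise sum norm is submultiplicative, Gelfand's formula then applies, and the two inductive entrywise power inequalities $|\qA^k|\le|\qA|^k\le\qB^k$ combine to give the chain $\rho(\qA)\le\rho(|\qA|)\le\rho(\qB)$ after passing to the limit.

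As for comparison with the paper: there is nothing to compare. The paper does not prove this lemma at all; it merely records the statement with a citation to \cite[Theorem 8.1.18]{Hor-85}, treating it as a standard tool from Appendix D. Your proposal thus goes strictly beyond what the paper does, supplying a short self-contained justification where the paper is content with a reference. You also note this yourself in your final sentence, so nothing further is needed.
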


\begin{Lemma} \label{matrixIneqLemma2}
{\rm (\cite[Lemma 5.7.9]{Hor-91}} Let $\qA = [A_{ij}]$  and $\qB = [B_{ij}]$ be matrices with nonnegative elements. Then $\rho\left(
\left[A_{ij}^\frac{1}{2}B_{ij}^\frac{1}{2}\right] \right) \leq \rho(\qA)^\frac{1}{2} \rho(\qB)^\frac{1}{2}$.
\end{Lemma}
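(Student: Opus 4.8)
The plan is to combine Gelfand's formula for the spectral radius with two applications of the Cauchy--Schwarz inequality. Write $\qC \triangleq [A_{ij}^{1/2} B_{ij}^{1/2}]$ for the entrywise geometric mean of $\qA$ and $\qB$; since $\qA$ and $\qB$ (and hence $\qC$) are square matrices of a common size with nonnegative entries, so are all of their integer powers. For the entrywise $\ell_1$ norm $\|\qM\| \triangleq \sum_{i,j} |M_{ij}|$, which is a genuine (submultiplicative) matrix norm, Gelfand's formula gives $\rho(\qM) = \lim_{m \to \infty} \|\qM^m\|^{1/m}$; moreover, when $\qM$ has nonnegative entries, $\|\qM^m\| = \sum_{i,j} (\qM^m)_{ij}$. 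I will use this with $\qM \in \{\qA, \qB, \qC\}$.

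First I would expand an entry of $\qC^m$ as a sum over length-$m$ walks,
\[
(\qC^m)_{i_0 i_m} = \sum_{i_1, \dots, i_{m-1}} \prod_{k=0}^{m-1} A_{i_k i_{k+1}}^{1/2} B_{i_k i_{k+1}}^{1/2},
\]
and apply Cauchy--Schwarz to this sum, viewing $\big(\prod_k A_{i_k i_{k+1}}^{1/2}\big)$ and $\big(\prod_k B_{i_k i_{k+1}}^{1/2}\big)$ as two vectors indexed by $(i_1, \dots, i_{m-1})$. This yields the entrywise bound $(\qC^m)_{i_0 i_m} \le (\qA^m)_{i_0 i_m}^{1/2} (\qB^m)_{i_0 i_m}^{1/2}$.

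Next I would sum this inequality over $i_0$ and $i_m$ and apply Cauchy--Schwarz a second time, now over the index pair $(i_0, i_m)$:
\[
\sum_{i,j} (\qC^m)_{ij} \le \sum_{i,j} (\qA^m)_{ij}^{1/2} (\qB^m)_{ij}^{1/2} \le \Big( \sum_{i,j} (\qA^m)_{ij} \Big)^{1/2} \Big( \sum_{i,j} (\qB^m)_{ij} \Big)^{1/2}.
\]
Raising both sides to the power $1/m$ and letting $m \to \infty$, the left-hand side converges to $\rho(\qC)$ and the right-hand side to $\rho(\qA)^{1/2} \rho(\qB)^{1/2}$ by Gelfand's formula, which is exactly the claimed inequality.

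I do not expect any genuine obstacle here: the only points needing care are the elementary verification that $\|\cdot\|$ above is submultiplicative (so that Gelfand's formula applies in the stated entrywise form), the bookkeeping in the walk expansion of $\qC^m$, and noting that the matrices in question are square of a common size so that $\qC$ and $\rho(\qC)$ are well defined --- which is the case for all the matrices $\qGamma$, $\qGamma^{\circ}$ (and their blocks) to which this lemma is applied in Appendix B.
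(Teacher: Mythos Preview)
Your argument is correct: the walk expansion of $(\qC^m)_{ij}$ followed by two Cauchy--Schwarz steps and Gelfand's formula is a clean, self-contained proof, and your verification that the entrywise $\ell_1$ norm is submultiplicative (so that Gelfand applies in the stated form) goes through without difficulty.

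There is nothing to compare against in the paper itself: the lemma is simply quoted from \cite[Lemma 5.7.9]{Hor-91} with no in-paper proof. For what it is worth, the argument in Horn and Johnson proceeds along essentially the same lines --- an entrywise Cauchy--Schwarz bound on powers combined with the spectral-radius limit --- so your route is not merely valid but is the standard one for this inequality.
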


\begin{Lemma} \label{matrixIneqLemma3}
Let $\qA$ and $\qB$ be any matrices such that $\qA\qB^H$ exists and is a squared matrix. If $\tr(\qA\qA^H) \leq 1$ and $\tr(\qB\qB^H) \leq 1$, then
\begin{equation*}
     \left| 1 - \tr(\qA\qB^H) \right|
     \geq
    \left(1 - \tr(\qA\qA^H) \right)^\frac{1}{2}
    \left(1 - \tr(\qB\qB^H) \right)^\frac{1}{2}.
\end{equation*}
\end{Lemma}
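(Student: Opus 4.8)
The plan is to reduce the matrix statement to a scalar inequality in two real variables and then dispatch it with the arithmetic--geometric mean inequality. First I would introduce the shorthand $a \triangleq \tr(\qA\qA^H)$ and $b \triangleq \tr(\qB\qB^H)$, so that by hypothesis $a,b \in [0,1]$. Note also that $\qA$ and $\qB$ are forced to have the same dimensions, since $\qA\qB^H$ is assumed to be a square matrix; hence $\tr(\qA\qB^H)$ is the Hilbert--Schmidt inner product of $\qA$ and $\qB$ and is directly comparable to their Frobenius norms.

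The key preliminary step is to control the cross term. By the Cauchy--Schwarz inequality for the Hilbert--Schmidt inner product --- which is exactly 1)-i) of Lemma \ref{LemmaNorm} applied to $\qA$ and $\qB^H$ --- one has $|\tr(\qA\qB^H)| \leq \|\qA\|_{\sfF}\,\|\qB^H\|_{\sfF} = \sqrt{a}\,\sqrt{b}$. Since $a,b \leq 1$ this quantity is at most $1$, so the reverse triangle inequality yields
\[
|1 - \tr(\qA\qB^H)| \geq 1 - |\tr(\qA\qB^H)| \geq 1 - \sqrt{ab} \geq 0.
\]

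It therefore suffices to prove the scalar inequality $1 - \sqrt{ab} \geq \sqrt{(1-a)(1-b)}$ for all $a,b \in [0,1]$. Both sides being nonnegative, I would square them; the inequality becomes $1 - 2\sqrt{ab} + ab \geq 1 - a - b + ab$, i.e.\ $a + b \geq 2\sqrt{ab}$, which is the arithmetic--geometric mean inequality. Chaining this with the displayed bound above gives the claim. There is no genuine obstacle here; the only points needing a word of care are verifying that the quantities under the square roots are nonnegative (guaranteed by $a,b \leq 1$), so that squaring is a legitimate equivalence, and confirming the dimension-matching remark that makes the Cauchy--Schwarz step applicable.
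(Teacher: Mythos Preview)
Your proof is correct and follows essentially the same route as the paper's: set $a=\tr(\qA\qA^H)$, $b=\tr(\qB\qB^H)$, use the Cauchy--Schwarz bound $|\tr(\qA\qB^H)|\le\sqrt{ab}$ from Lemma~\ref{LemmaNorm}, then the reverse triangle inequality, and finally the scalar fact $(1-a)^{1/2}(1-b)^{1/2}\le 1-\sqrt{ab}$. The only difference is cosmetic: the paper states the scalar inequality first and asserts it is ``easily shown,'' whereas you spell out the AM--GM justification explicitly.
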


\begin{proof}
For real numbers $a$ and $b$ with $a,b \in [0, 1]$ it is easily shown that
\begin{equation} \label{eq:simIneq}
    (1-a)^\frac{1}{2}(1-b)^\frac{1}{2} \leq 1-\sqrt{ab}.
\end{equation}
Let $a = \tr(\qA\qA^H)$ and $b= \tr(\qB\qB^H)$. Plugging $a$ and $b$ into (\ref{eq:simIneq}), we obtain
\begin{align*}
    \left(1-\tr(\qA\qA^H)\right)^\frac{1}{2}\left(1-\tr(\qB\qB^H)\right)^\frac{1}{2}
    \leq 1- \sqrt{\tr(\qA\qA^H) \tr(\qB\qB^H)}
    \leq 1- \left|\tr(\qA\qB^H) \right|
    \leq \left|1- \tr(\qA\qB^H) \right|,
\end{align*}
where the second inequality follows from Lemma \ref{LemmaNorm}.
\end{proof}

\begin{Lemma} {\rm \cite[Proposition 2.2]{Hac-07}} \label{StjLemma1}
Let $\vartheta(z) \in \bbS(\bbR^+)$ with $\mu$ being its associated measure carried by $\bbR^+$. We have the following results:
\begin{enumerate}
\item[1)] $\vartheta(z)$ is analytic on $\bbC-\bbR^+$;
\item[2)] $\Imag{\vartheta(z)} > 0$ if $\Imag{z} > 0$, and $\Imag{\vartheta(z)} < 0$ if $\Imag{z} < 0$;
\item[3)] $\Imag{z\vartheta(z)} > 0$ if $\Imag{z} > 0$, and $\Imag{z\vartheta(z)} < 0$ if $\Imag{z} < 0$;
\item[4)] $\mu(\bbR^+) = \lim_{y \rightarrow \infty} -\sfj y ~\vartheta(\sfj y)$.
\end{enumerate}
\end{Lemma}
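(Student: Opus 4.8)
This is Proposition~2.2 of \cite{Hac-07}, included here only for reference, so I do not expect a genuine obstacle; the task is simply to unpack the integral representation of a Stieltjes transform. The plan is to start from $\vartheta(z)=\int_{\bbR^+}(\lambda-z)^{-1}\,d\mu(\lambda)$, where $\mu$ is the finite positive measure on $\bbR^+$ attached to $\vartheta\in\bbS(\bbR^+)$, and to use the elementary fact that for $z\in\bbC-\bbR^+$ the distance $\delta(z):=\mathrm{dist}(z,\bbR^+)$ is strictly positive, so that $|\lambda-z|\ge\delta(z)$ uniformly over $\lambda\in\bbR^+$ and the integrand is bounded in modulus by $\delta(z)^{-1}$. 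Since $\mu$ is finite, every integral below is absolutely convergent and every interchange of a limit with the integral is licensed by dominated convergence; this is the single mechanism driving all four items.

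For item~1) I would fix $z_0\in\bbC-\bbR^+$ and a closed disc $\overline{D}\subset\bbC-\bbR^+$ centred at $z_0$. On $\overline{D}$ the map $z\mapsto(\lambda-z)^{-1}$ is holomorphic for each fixed $\lambda\in\bbR^+$ and is bounded in modulus by $(\inf_{w\in\overline{D}}\delta(w))^{-1}$, a bound independent of $\lambda$; differentiating under the integral sign (or combining Morera's theorem with Fubini) then shows that $\vartheta$ is holomorphic on the interior of $\overline{D}$, and since $z_0$ is arbitrary, $\vartheta$ is analytic on all of $\bbC-\bbR^+$.

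For items~2) and 3) I would compute imaginary parts explicitly with $z=z_1+\sfj z_2$. From $\Im\{(\lambda-z)^{-1}\}=z_2\,|\lambda-z|^{-2}$ one gets $\Im\{\vartheta(z)\}=z_2\int_{\bbR^+}|\lambda-z|^{-2}\,d\mu(\lambda)$, whose sign is that of $z_2$; this settles item~2). For item~3) I would use the identity $z(\lambda-z)^{-1}=\lambda(\lambda-z)^{-1}-1$, which gives $z\vartheta(z)=\int_{\bbR^+}\lambda(\lambda-z)^{-1}\,d\mu(\lambda)-\mu(\bbR^+)$ and hence $\Im\{z\vartheta(z)\}=z_2\int_{\bbR^+}\lambda\,|\lambda-z|^{-2}\,d\mu(\lambda)$, again of the sign of $z_2$.

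For item~4) I would set $z=\sfj y$ with $y\to+\infty$ and compute $-\sfj y\,\vartheta(\sfj y)=\int_{\bbR^+}\frac{y^2-\sfj y\lambda}{\lambda^2+y^2}\,d\mu(\lambda)$; the integrand tends to $1$ pointwise as $y\to\infty$ and is bounded in modulus by $y/\sqrt{y^2+\lambda^2}\le 1$, so dominated convergence yields $\lim_{y\to\infty}-\sfj y\,\vartheta(\sfj y)=\mu(\bbR^+)$. The only point I would need to remark on is that the strict inequalities in items~2)--3) require $\mu$ to be nonzero (for~3), to place mass on $(0,\infty)$), which is the case for all transforms arising in the paper; otherwise they should be read with $\ge$. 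In short there is no hard step here — the only care needed is in producing the $\lambda$-uniform bound that legitimizes the analyticity argument and the passage to the limit in item~4).
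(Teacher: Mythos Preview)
Your proof is correct and entirely standard; the paper itself does not prove this lemma at all but merely cites it as \cite[Proposition~2.2]{Hac-07} in the list of auxiliary tools. Your caveat on the strictness of the inequalities in 2)--3) when $\mu$ is trivial (or supported at $0$) is well taken and is the only point on which one might add a word, but it does not affect any use of the lemma in the paper.
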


{\renewcommand{\baselinestretch}{1.25}
\begin{footnotesize}

\end{footnotesize}}


\begin{thebibliography}{1}

\bibitem{Fos-98}
G. J. Foschini, and M. J. Gans, ``On limits of wireless communications in a fading environment when using multiple antennas,'' {\em Kluwer Academic
Publishers--Wireless Per. Commun.}, pp. 311--335, 1998.

\bibitem{Tel-99}
$\dot{\rm I}$. E. Telatar, ``Capacity of multi-antenna Gaussian channel,'' {\em Euro. Trans. Telecom.,} vol. 10, pp. 585--595, 1999.

\bibitem{Mol-05a}
A. F. Molisch et al., ``IEEE 802.15.4a channel model -- Final report,'' Tech. Rep., Document IEEE 802.1504-0062-02-004a, 2005.

\bibitem{Foe-03}
J. R. Foerster, M. Pendergrass, and A. F. Molisch, {\em A channel model for ultrawideband indoor communication.} Available at http://www.merl.com/reports/
docs/TR2003-73.pdf

\bibitem{Mol-05}
A. F. Molisch, ``Ultrawideband propagation channels -- Theory, measurement, and modeling,'' {\em IEEE Trans. Veh. Technol.}, vol. 54, pp. 1528--1545, sep. 2005.

\bibitem{Hoy-11}
J. Hoydis, M. Kobayashi, M. Debbah, ``Green Small-Cell Networks,'' {\em IEEE Veh. Tech. Mag.}, vol. 6, pp. 37--43, 2011.

\bibitem{Cou-09}
R. Couillet, M. Debbah, and J. W. Silverstein, ``A deterministic equivalent for the capacity analysis of correlated multi-user MIMO channels,'' {\em IEEE Trans.
Info. Theory}, vol. 57, pp. 3493--3514, Jun. 2011.

\bibitem{Ver-99}
S. Verd\'u and S. Shamai, ``Spectral efficiency of CDMA with random spreading,'' {\em IEEE Trans. Info. Theory}, vol. 45, pp. 622-640, 1999.

\bibitem{Tse-99}
D. Tse and S. Hanly, ``Linear multiuser receivers: Effective interference, effective bandwidth and user capacity,'' {\em IEEE Trans. Info. Theory}, vol. 45, pp.
641--657, 1999.

\bibitem{Sil-95}
J. W. Silverstein and Z. Bai, ``On the empirical distribution of eigenvalues of a class of large dimensional random matrices'' {\em J. Multiv. Anal.}, vol. 54, pp.
175--192, 1995.

\bibitem{Pas-05}
L. A. Pastur, ``A simple approach to the global regime of Gaussian ensembles of random matrices,'' {\em Ukrainian Math. J.}, vol. 57, pp. 936--966, Jun. 2005.

\bibitem{Voi-97}
D. Voiculescu, {\em Free Probability Theory (Fields Institute Communications)}. New York: American Mathematical Society, vol. 12, 1997.

\bibitem{Edw-75}
S. F. Edwards, and P.W. Anderson, ``Theory of spin glasses,'' {\em J. Physics F: Metal Physics}, vol. 5, pp. 965--974, 1975.

\bibitem{Tul-04}
A. M. Tulino and S. Verd\'u, {\em Random Matrix Theory and Wireless Communications}. Found. Trends Commun. Info. Theory, vol. 1, Jun. 2004.

\bibitem{Cou-11}
R. Couillet and M. Debbah, {\em Random Matrix Methods for Wireless Communications}. Cambridge University Press, 2011.

\bibitem{Bai-09}
Z. D. Bai, Y. Chen, and Y.-C. Liang, {\em Random Matrix Theory and its Applications}. World Scientific Publishing Company, 2009.

\bibitem{Art-09}
C. Artigue and P. Loubaton, ``On the precoder design of flat fading MIMO systems equipped with MMSE receivers: a large system approach,'' {\em IEEE Trans. Info.
Theory}, vol. 57, pp. 4138--4155, Jul 2011.

\bibitem{Dup-10}
F. Dupuy and P. Loubaton, ``On the capacity achieving covariance matrix for frequency selective MIMO channels using the asymptotic approach,'' {\em IEEE Trans.
Info. Theory}, vol. 57, pp. 4138--4155, Jul. 2011.

\bibitem{Hoy-11c}
J. Hoydis, M. Kobayashi, and M. Debbah, ``Optimal channel training in uplink network MIMO systems,'' {\em IEEE Trans. Sig. Proc.}, vol. 59, pp. 2824--2833, Jun.
2011.

\bibitem{Wag-09}
S. Wagner, R. Couillet, M. Debbah and D. Slock, ``Large system analysis of linear precoding in MISO broadcast channels with limited feedback,'' preprint, 2009.
Available at http://arxiv.org/abs/0906.3682.

\bibitem{Huh-10}
H. Huh, G. Caire, S.-H. Moon, Y.-T. Kim, and I. Lee, ``Multi-cell MIMO downlink with cell cooperation and fair scheduling: A large-system limit analysis,''
preprint, 2010. Available at http://arxiv.org/abs/1006.2162v1.

\bibitem{Hoy-11b}
J. Hoydis, M. Debbah, and M. Kobayashi, ``Asymptotic moments for interference mitigation in correlated fading channels,'' submitted to {\em IEEE Int. Sym. Info.
Theory}, St-Petersburg, Russia, 2011.

\bibitem{Lak-10}
S. Lakshminaryana, J. Hoydis, M. Debbah, and M. Assaad, ``Asymptotic analysis of distributed multi-cell beamforming,'' in Proc. {\em IEEE Int. Sym. Personal,
Indoor, Mobile Radio Commun.}, Istanbul, Turkey, Sep. 2010.

\bibitem{Cou-10}
R. Couillet, J. Hoydis, and M. Debbah, ``A deterministic equivalent approach to the performance analysis of isometric random precoded systems,би {\em IEEE Trans.
Info. Theory}, preprint, 2010. Available: http://arxiv.org/abs/1011.3717.

\bibitem{Mou-03}
A. L. Moustakas, S. Simon, and A. M. Sengupta, ``MIMO capacity through correlated channels in the presence of correlated interferers and noise: A (not so) large
$N$ analysis,'' {\em IEEE Trans. Info. Theory}, vol. 49, pp. 2545--2561, Oct. 2003.

\bibitem{Tar-08}
G. Taricco, ``Asymptotic mutual information statistics of separately-correlated Rician fading MIMO channels'', {\em IEEE Trans. Info. Theory}, vol. 54, pp.
3490--3504, 2008.

\bibitem{Hac-08}
W. Hachem, O. Khorunzhiy, P. Loubaton, J. Najim, and L. Pastur, ``A new approach for mutual information analysis of large dimensional multi-antenna channels,''
{\em IEEE Trans. Info. Theory}, vol. 54, pp. 3987--4004, Sep. 2008.

\bibitem{Dum-10}
J. Dumont, S. Lasaulce, W. Hachem, Ph. Loubaton and J. Najim, ``On the capacity achieving covariance matrix for Rician MIMO channels: an asymptotic approach,''
{\em IEEE Trans. Info. Theory}, vol. 56, pp. 1048--1069, Mar. 2010.

\bibitem{Hac-07}
W. Hachem, P. Loubaton, and J. Najim, ``Deterministic equivalents for certain functionals of large random matrices,'' {\em Ann. App. Probab.}, vol. 17, pp.
875--930, 2007.

\bibitem{Kam-10}
A. Kammoun, M. Kharouf, W. Hachem, J. Najim, and A. El Kharroubi, ``On the fluctuations of the mutual information for non centered MIMO channels: the non Gaussian
case,'' in Proc. {\em IEEE Int. Workshop Sig. Proc. Advances in Wireless Commun.}, pp. 1--5, 2010.

\bibitem{Gra-81}
A. Graham, {\em Kronrcker Products and Matrix Calculus: With Applications}, New York: John Wiley $\&$ Sons, 1981.

\bibitem{Shi-00}
D. Shiu, G. J. Foschini, M. J. Gans, and J. M. Kahn, ``Fading correlation and its effect on the capacity of multi-element antenna systems,'' {\em IEEE Trans.
Commun.}, vol. 48, pp. 502--513, Mar. 2000.

\bibitem{Bai-10}
Z. Bai and J. W. Silverstein, {\em Spectral Analysis of Large Dimensional Random Matrices}, Springer Series in Statistics, 2010.

\bibitem{Mar-67}
V. A. Mar\v{c}enko and L. A. Pastur, ``Distributions of eigenvalues for some sets of random matrices,'' {\em Math. USSR-Sbornik}, vol. 1, pp. 457-483, Apr. 1967.

\bibitem{Zha-06}
L. Zhang, ``Spectral Analysis of Large Dimensional Random Matrices,'' Ph.D. dissertation, National University of Singapore, 2006.

\bibitem{Fra-08}
G. Fraidenraich, O. L\'{e}v\^{e}que, and J. M. Cioffi, ``On the MIMO channel capacity for Nakagami-$m$ channel,'' {\em IEEE Trans. Info. Theory}, vol. 54, pp.
3752--3757, Aug. 2008.

\bibitem{Pan-10}
G. Pan, ``Strong convergence of the emperical distribution of eigenvalues of large dimensional random matrices,'' {\em J. Multiv. Anal.}, vol. 101, pp. 1330--1338,
2010.

\bibitem{Kor-11}
S. Korada and A. Montanari, ``Applications of the Lindeberg principle in communications and statistical learning,'' {\em IEEE Trans. Info. Theory}, vol. 57, pp.
2440--2450, Apr. 2011.

\bibitem{Cha-06}
S. Chatterjee, ``A generalization of the Lindeberg principle," {\em Ann. App. Probab.}, vol. 34, pp. 2061--2076, 2006.

\bibitem{Lyt-09}
A. Lytova and L. A. Pastur, ``Central limit theorem for linear eigenvalues of statistics of random matrices with independent entries,'' {\em Ann. App. Probab.},
vol. 37, no. 5, pp. 1778--1840, 2009.

\bibitem{Say-02}
A. M. Sayeed, ``Deconstructing multi-antenna fading channels,'' {\em IEEE Trans. Sig. Proc.}, vol. 50, pp. 2563--2579, Oct. 2002.

\bibitem{Ozc-05}
H. \"{O}zcelik, N. Czink, and E. Bonek, ``What makes a good MIMO channel model,'' in Proc. {\em IEEE Veh. Tech. Conf.--Spring}, May/June 2005.

\bibitem{Cho-07}
S. Ho Choi, P. Smith, B. Allen, W. Q. Malik, and M. Shafi, ``Severely fading MIMO channels: Models and mutual information,'' in Proc. {\em Int. Conf. Commun.},
Glasgow, Scotland, Jun. 2007, pp. 4628--4633.

\bibitem{Nak-60}
M. Nakagami, ``The m-distribution--A General formula of intensity distribution of rapid fading,'' in {\em Statistical Methods in Radio Wave Propagation}, W. C.
Hoffman, Ed. Elmsford, NY: Pergamon, 1960.

\bibitem{Hac-08b}
W. Hachem, P. Loubaton, and J. Najim, ``A CLT for information-theoretic statistics of Gram random matrices with a given variance profile,'' {\em Ann. App.
Probab.}, vol. 18, pp. 2071--2130, Dec. 2008.

\bibitem{Hac-11}
W. Hachem, M. Kharouf, J. Najim, and J. W. Silverstein, ``A CLT for Information-theoretic statistics of Non-centered Gram random matrices,'' preprint, 2011.
Available: http://arxiv.org/abs/1107.0145.

\bibitem{Vu-05}
M. Vu and A. Paulraj, ``Capacity optimization for Rician correlated MIMO wireless channels,'' in Proc. {\em Asilomar Conf. Sig., Sys. and Comp.}, Pacific Grove,
CA, Nov. 2005, pp. 133--138.

\bibitem{Rie-08}
E. Riegler and G. Taricco, ``Asymptotic ergodic capacity region and rate optimization of a multiple access OFDM-MIMO channel with separately-correlated Rician
fading,'' in Proc. {\em Global Commun. Conf.}, New Orleans, USA, Nov. 2008.

\bibitem{Bur-73}
D. L. Burkholder, ``Distribution function inequalities for martingales,'' {\em Ann. Probab.}, vol. 1, pp. 19--42, 1973.

\bibitem{Hor-85}
R. A. Horn and C. R. Johnson, {\em Matrix Theory}. Cambridge University Press, 1985.

\bibitem{Hor-91}
R. A. Horn and C. R. Johnson, {\em Topics in Matrix Analysis}. Cambridge University Press, 1991.
\end{thebibliography}
\end{document}